\theoremstyle{plain}
\newtheorem{thm}{Theorem}
\theoremstyle{definition}
  \newtheorem{rem}[thm]{Remark}
\theoremstyle{definition}
  \newtheorem{example}[thm]{Example}
\theoremstyle{definition}
  \newtheorem{defn}[thm]{Definition}
\theoremstyle{plain}
  \newtheorem{prop}[thm]{Proposition}
\theoremstyle{plain}
  \newtheorem{lem}[thm]{Lemma}
\DeclareMathOperator{\Ex}{Ex}
\newcommand{\vphi}{{\varphi}}           
\newcommand{\Om}{\Omega}                
\newcommand{\la}{\langle}                
\newcommand{\ra}{\rangle}                
\newcommand{\ol}{\overline}                
\newcommand{\one}{{\bf 1}}
\newcommand{\cB}{{\mathcal{B}}}
\newcommand{\cE}{{\mathcal{E}}}
\newcommand{\cL}{{\mathcal{L}}}         
\newcommand{\cO}{{\mathcal{O}}}         
\newcommand{\cS}{{\mathcal{S}}}
\newcommand{\cZ}{{\mathcal{Z}}}
\newcommand{\tcZ}{{\widetilde{\mathcal{Z}}}}
\newcommand{\fF}{{\mathfrak{F}}}
\newcommand{\fP}{{\mathfrak{P}}}         
\newcommand{\fS}{{\mathfrak{S}}}
\newcommand{\RR}{\mathbb{R}}            
\newcommand{\ZZ}{\mathbb{Z}}            
\newcommand{\NN}{\mathbb{N}}            
\newcommand{\CC}{\mathbb{C}}            
\newcommand{\UU}{\mathbb{U}}            
\newcommand{\tH}{\widetilde{H}}        
\newcommand{\talpha}{\tilde{\alpha}}
\newcommand{\tgamma}{\tilde{\gamma}}
\newcommand{\vAA}{{\vec{\mathbb{A}}}}            
\newcommand{\vG}{{\vec{G}}}             
\newcommand{\vnabla}{{\vec{\nabla}}}
\newcommand{\veps}{{\vec{\varepsilon}}}
\newcommand{\vO}{{\vec{0}}}
\newcommand{\vf}{{\vec{f}}}
\newcommand{\vk}{{\vec{k}}}
\newcommand{\vp}{{\vec{p}}}
\newcommand{\vx}{{\vec{x}}}
\newcommand{\sL}{{ }}
\newcommand{\rdm}{{1\!-\!\mathrm{pdm}}}
\newcommand{\coherent}{{\mathrm{coh}}}
\newcommand{\BHF}{{BHF}}
\newcommand{\tDM}{{\widetilde{\mathfrak{DM}}}}
\newcommand{\DM}{{\mathfrak{DM}}}
\newcommand{\QF}{{\mathfrak{QF}}}
\newcommand{\tpDM}{{\widetilde{\mathfrak{pDM}}}}
\newcommand{\pDM}{{\mathfrak{pDM}}}
\newcommand{\pQF}{{\mathfrak{pQF}}}
\newcommand{\cDM}{{\mathfrak{cDM}}}
\newcommand{\cQF}{{\mathfrak{cQF}}}
\newcommand{\dom}{{\rm dom}}
\newcommand{\rIm}{{\rm Im}}
\newcommand{\rRe}{{\rm Re}}               
\newcommand{\cirS}{\mathop{\bigcirc\kern -.73em {\scriptstyle{\rm S}}}}
\newcommand{\Tr}{{\rm Tr}}
\newcommand{\gs}{{\mathrm{gs}}}
\newcommand{\hf}{{H_f}}
\newcommand{\pf}{{\vec{P}_f}}
\newcommand{\nf}{N_f}
\renewcommand{\thesection}
{\Roman{section}}                      
\renewcommand{\theequation}
{\thesection.\arabic{equation}}        
\newcommand{\secct}[1]{\section{#1}
\setcounter{equation}{0}}              
\numberwithin{thm}{section}
\newtheorem{lemma}[thm]{Lemma}             
\theoremstyle{plain}
\def\beq{\begin{equation}}
\def\ene{\end{equation}}
\begin{document}

\title{Minimization of the Energy of the Non-Relativistic One-Electron
  Pauli-Fierz Model over Quasifree States}

\author{Volker Bach\footnote{Technische Universit\"at Braunschweig, Institut f\"ur Analysis und Algebra, Rebenring 31, A 14, 38106 Braunschweig, Germany.},
S\'ebastien Breteaux\footnotemark[1], Timmy Tzaneteas\footnote{Department of Mathematics, Ny Munkegade 118, building 1536, room 2.24, 8000 Aarhus C, Denmark.}
}
\maketitle
\begin{abstract}
In this article is proved the existence and uniqueness of a minimizer
of the energy for the non-relativistic one electron Pauli-Fierz model,
within the class of pure quasifree states. The minimum of the energy
on pure quasifree states coincides with the minimum of the energy on
quasifree states. Infrared and ultraviolet cutoffs are assumed, along
with sufficiently small coupling constant and momentum of the dressed
electron. A perturbative expression of the minimum of the energy on
quasifree states for a small momentum of the dressed electron and
small coupling constant is then given. We also express the Lagrange
equation for the minimizer, in terms of the generalized one particle
density matrix of the pure quasifree state.
\end{abstract}

\secct{Introduction} 
\label{sec-I}
%
\subsection{The Hamiltonian} \label{subsec-I.1}
%
According to the \emph{Standard Model of Nonrelativistic Quantum
  Electrodynamics} \cite{MR1639713} the unitary time
evolution of a free nonrelativistic particle coupled to the quantized
radiation field is generated by the Hamiltonian
\begin{align} \label{eq-I.1} 
\tH_g \ := \ 
\frac{1}{2}\big( \tfrac{1}{i} \vnabla_x - \vAA(\vx) \big)^2 + \hf
\end{align}
acting on the Hilbert space $L^2(\RR_x^3; \fF_\sL)$ of
square-integrable functions with values in the photon Fock space
\begin{align} \label{eq-I.2} 
\fF_\sL \ := \  
\fF_+(\cZ_\sL) 
\ := \ 
\bigoplus_{n=0}^\infty \fF_+^{(n)}(\cZ_\sL) ,
\end{align}
where $\fF_+^{(0)}(\cZ_\sL) = \CC \cdot \Om$ is the
vacuum sector and the $n$-photon sector
$\fF_+^{(n)}(\cZ_\sL) = \cS(\cZ_\sL^{\otimes  n} )$ 
is the subspace of totally symmetric vectors on the $n$-fold
tensor product of the one-photon Hilbert space
\begin{align} \label{eq-I.2,1} 
\cZ_\sL \ = \  
\big\{ \vf \in L^2(S_{\sigma,\Lambda}; \CC \otimes \RR^3) \; \big|
\ \forall \vk \in S_{\sigma,\Lambda} \ a.e.: \quad \vk \cdot \vf(\vk) = 0 \big\}
\end{align}
of square-integrable, transversal vector fields which are supported in
the momentum shell 
\begin{align} \label{eq-I.2,2} 
S_{\sigma,\Lambda} \ := \ 
\big\{ \vk \in \RR^3 \; \big| \ \sigma \leq |\vk| \leq \Lambda \big\},
\end{align}
where $0 \leq \sigma < \Lambda < \infty$ are infrared and ultraviolet
cutoffs, respectively, reflecting our choice of gauge, namely, the
Coulomb gauge. It is convenient to fix real polarization vectors
$\veps_\pm(\vk) \in \RR^3$ such that $\{ \veps_+(\vk), \veps_-(\vk),
\tfrac{\vk}{|\vk|} \} \subseteq \RR^3$ form a right-handed orthonormal
basis (Dreibein) and replace \eqref{eq-I.2,1} by
\begin{align} \label{eq-I.3} 
\cZ_\sL  \ = \  L^2(S_{\sigma,\Lambda} \times \ZZ_2) ,
\end{align}
with the understanding that $\vf(\vk) = \veps_+ \, f(\vk,+) + \veps_-
\, f(\vk,-)$.

In \eqref{eq-I.1} the energy of the photon field is represented by
\begin{align} \label{eq-I.4} 
\hf \ = \ \int |k| \, a^*(k) \, a(k) \, dk ,  
\end{align}
where $\int f(k) dk := \sum_{\tau = \pm} \int_{S_{\sigma,\Lambda}} f(\vk,\tau) \,
d^3k$ and $\{a(k), a^*(k)\}_{k \in S_{\sigma,\Lambda} \times \ZZ_2}$ are the usual
boson creation and annihilation operators constituing a Fock
representation of the CCR on $\fF_\sL$, i.e.,
\begin{align} \label{eq-I.5} 
[ a(k) \, , \, a(k') ] \ = \ & [ a^*(k) \, , \, a^*(k') ] \ = \ 0,
\\ \label{eq-I.6} 
[ a(k) \, , \, a^*(k') ] \ = \ & \delta(k-k') \: \one , \quad
a(k) \Om \ = \ 0, 
\end{align}
for all $k, k' \in S_{\sigma,\Lambda} \times \ZZ_2$. The
magnetic vector potential $\vAA(\vx)$ is given by
\begin{align} \label{eq-I.7} 
\vAA(\vx) \ = \ 
 \, \int \vG(k) 
\Big( e^{-i \vk \cdot \vx} \, a^*(k) + e^{i \vk \cdot \vx} \, a(k) \Big) \: dk , 
\end{align}
with $k = (\vk,\tau) \in \RR^3 \times \ZZ_2$,
\begin{align} \label{eq-I.8} 
\vG(\vk,\tau) \ := \, g\, \veps_\tau(\vk) \: |\vk|^{-1/2} ,
\end{align}
and $g \in \RR$ being the coupling constant. In our units, the mass
of the particle and the speed of light equal one, so the coupling
constant is given as $g = \tfrac{1}{4\pi}
\sqrt{\alpha}$, with $\alpha \approx 1/137$ being Sommerfeld's fine
structure constant.

The Hamiltonian $\tH_g$ preserves (i.e., commutes with) the total
momentum operator $\vp = \tfrac{1}{i} \vnabla_x + \pf$ of the system,
where
\begin{align} \label{eq-I.9} 
\pf \ = \  \int \vk \, a^*(k) \, a(k) \, dk 
\end{align}
is the photon field momentum. This fact allows us to eliminate the
particle degree of freedom. More specifically, introducing the unitary
\begin{align} \label{eq-I.10} 
\UU \; : L^2(\RR_x^3; \fF_\sL) \ \to \ L^2(\RR_p^3; \fF_\sL) \, , \quad
\big( \UU \Psi \big)(\vp) 
\ := \ 
\int e^{-i \vx \cdot (\vp - \pf)} \Psi(\vx) \: \frac{d^3x}{(2\pi)^{3/2}} ,
\end{align}
one finds that
\begin{align} \label{eq-I.11} 
\UU \, \tH_g \, \UU^* 
\ = \ 
\int^{\oplus} H_{g,\vp} \: d^3p ,
\end{align}
where
\begin{align} \label{eq-I.12} 
H_{g,\vp} \ = \ \frac{1}{2}\big( \pf + \vAA(\vO) - \vp \big)^2 \: + \: \hf 
\end{align}
is a selfadjoint operator on $\dom(H_{0,\vO})$, the natural domain 
of $H_{0,\vO} = \frac{1}{2} \pf^{\, 2} + \hf$.

\subsection{Ground State Energy} 
\label{subsec-I.2}
%
Due to \eqref{eq-I.11}, all spectral properties of $\tH_g$ are
obtained from those of $\{ H_{g,\vp} \}_{\vp \in \RR^3}$. Of
particular physical interest is the mass shell for fixed total
momentum $\vp \in \RR^3$, coupling constant $g \geq 0$, and infrared 
and ultraviolet cutoffs $0 \leq \sigma < \Lambda < \infty$, i.e., the
value of the ground state energy
\begin{align} \label{eq-I.13} 
E_\gs(g,\vp, \sigma, \Lambda) 
\ := \ 
\inf \sigma[H_{g,\vp}] 
\ \geq \ 0 
\end{align}
and the corresponding ground states (or approximate ground states).

We express the ground state energy in terms of density matrices with
finite energy expectation value and accordingly introduce
\begin{align} \label{eq-I.14} 
\tDM \ := \ \Big\{ \rho \in \cL^1(\fF) \; \Big| \
\rho \geq 0, \ \ \Tr_\fF[\rho] = 1, \ \ 
\rho \, H_{0,\vO} , \, H_{0,\vO} \, \rho \, \in \cL^1(\fF_\sL) \Big\},
\end{align}
so that the Rayleigh-Ritz principle appears in the form
\begin{align} \label{eq-I.15} 
E_\gs(g,\vp) \ = \ &
\inf\Big\{ \Tr_\fF\big[ \rho \, H_{g,\vp} \big] 
\; \Big| \ \rho \in \tDM \Big\} .
\end{align}
Note that $\Tr_\fF[ \rho \, H_{g,\vp} ] = \Tr_\fF[ \rho^{1-\beta} \,
  H_{g,\vp} \, \rho^\beta]$, for all $0 \leq \beta \leq 1$, due to our
assumption $\rho H_{0,\vO}, H_{0,\vO} \rho \in \cL^1(\fF_\sL)$.

It is not difficult to see that the ground state energy is already
obtained as an infimum over all density matrices
\begin{align} \label{eq-I.16} 
\DM \ := \ \Big\{ \rho \in \tDM \; \Big| \
\rho \, \nf , \, \nf \, \rho \, \in \cL^1(\fF_\sL) \Big\}
\end{align}
of finite photon number expectation value, where
\begin{align} \label{eq-I.17} 
\nf \ = \  \int a^*(k) \, a(k) \, dk 
\end{align}
is the photon number operator. Indeed, if $\sigma >0$ then
\begin{align} \label{eq-I.19} 
H_{g,\vp} \ \geq \ \hf \ \geq \ \sigma \, \nf ,
\end{align}
and $\DM = \tDM$ is automatic. Furthermore, if $\sigma = 0$ then it is
not hard to see \cite{MR1639713} that $E_\gs(g,\vp, 0,
\Lambda) = \lim_{\sigma \searrow 0} E_\gs(g,\vp, \sigma, \Lambda)$,
by using the standard relative bound
\begin{align} \label{eq-I.19,1} 
\big\|  \vAA_{< \sigma} (\vO) \, \psi \big\| 
\ \leq \ 
\cO(\sigma) \, \big\|  ( H_{f, < \sigma} + 1 )^{1/2} \, \psi \big\| ,
\end{align}
where $\vAA_{< \sigma} (\vO)$ and $H_{f, < \sigma}$ are the quantized
magnetic vector potential and field energy, respectively, for momenta
below $\sigma$. So, for all $0 \leq \sigma < \Lambda < \infty$, we have
that
\begin{align} \label{eq-I.20} 
E_\gs(g,\vp,\sigma, \Lambda) 
\ = \ &
\inf\Big\{ \Tr_\fF\big[ \rho \, H_{g,\vp}(\sigma, \Lambda) \big] 
\; \Big| \ \rho \in \DM \Big\},
\end{align}
indeed. If the infimum \eqref{eq-I.20} is attained at
$\rho_\gs(g,\vp,\sigma, \Lambda) \in \DM$ then we call
$\rho_\gs(g,\vp,\sigma, \Lambda)$ a ground state of $H_{g,\vp}(\sigma,
\Lambda)$. 

Since $\DM$ is convex, we may restrict the density matrices in
\eqref{eq-I.20} to vary only over pure density matrices,
\begin{align} \label{eq-I.21} 
E_\gs(g,\vp,\sigma, \Lambda) 
\ = \ 
\inf\Big\{ \Tr_\fF\big[ \rho \, H_{g,\vp}(\sigma, \Lambda) \big] 
\; \Big| \ \rho \in \pDM \Big\} ,
\end{align}
where \emph{pure} density matrices are those of rank one,
\begin{align} \label{eq-I.22} 
\tpDM \ := \ \Big\{ \rho \in \tDM \; \Big| \ 
\exists \Psi \in \fF_\sL, \,
\|\Psi\| = 1 : \ \ \rho = |\Psi\ra\la\Psi| \Big\} ,
\end{align}
and
\begin{align} \label{eq-I.23} 
\pDM \ := \ \DM \: \cap \: \tpDM .
\end{align}
Another class of states that play an important role in our work is
the set of \emph{centered} density matrices, 
\begin{align} \label{eq-I.23,1} 
\cDM \ := \ \Big\{ \rho \in \DM \; \Big| \ \forall f \in \cZ :
\ \ \Tr_\fF\big[ \rho \, a^*(f) \big] = 0 \Big\} .
\end{align}
%

\subsection{Bogolubov-Hartree-Fock Energy} 
\label{subsec-I.3}
%
The determination of $E_\gs(g,\vp)$ and the corresponding ground state
$\rho_\gs(g,\vp) \in \DM$ (provided the infimum is attained) is a
difficult task. In this paper we rather study approximations to
$E_\gs(g,\vp)$ and $\rho_\gs(g,\vp)$ that we borrow from the quantum
mechanics of atoms and molecules, namely, the Bogolubov-Hartree-Fock
(BHF) approximation. We define the BHF energy as
\begin{align} \label{eq-I.24} 
E_\BHF(g,\vp,\sigma, \Lambda) 
\ = \ &
\inf\Big\{ \Tr_\fF\big[ \rho \, H_{g,\vp}(\sigma, \Lambda) \big] 
\; \Big| \ \rho \in \QF \Big\},
\end{align}
with corresponding BHF ground state(s) $\rho_\BHF(g,\vp, \sigma,
\Lambda) \in \QF$, determined by
\begin{align} \label{eq-I.25} 
\Tr_\fF\big[ \rho_\BHF(g,\vp, \sigma, \Lambda) \; 
H_{g,\vp}(\sigma, \Lambda) \big] 
\ = \
E_\BHF(g,\vp,\sigma, \Lambda) ,
\end{align}
where
\begin{align} \label{eq-I.26} 
\QF \ := \ 
\Big\{ \rho \in \DM \; \Big| \ \text{$\rho$ is quasifree} \Big\}
\ \subseteq \
\DM
\end{align}
denotes the subset of quasifree density matrices. A density matrix
$\rho \in \DM$ is called \emph{quasifree}, if there exist $f_\rho \in
\cZ_\sL$ and a positive, self-adjoint operator $h_\rho =
h_\rho^* \geq 0$ on $\cZ_\sL$ such that
\begin{align} \label{eq-I.27} 
\big\la W(\sqrt{2}f/i) \big\ra_\rho := 
\Tr_\fF \big[ \rho \, W(\sqrt{2}f/i) \big] 
= 
\exp\Big[ 2i \, \rIm\la f_\rho | f \ra 
-  \big\la f \big| (1+h_\rho) f \big\ra \Big],
\end{align}
for all $f \in \cZ_\sL$, where 
\begin{align} \label{eq-I.28} 
W(f) \ := \ 
\exp\big[ i \Phi(f) \big] 
\ := \ 
\exp\big[ \tfrac{i}{\sqrt{2\,}} \big( a^*(f) + a(f) \big) \big] 
\end{align}
denotes the Weyl operator corresponding to $f$ and we write
expectation values w.r.t.\ the density matrix $\rho$ as $\la \cdot
\ra_\rho$. 

There are several important facts about quasifree density matrices,
which do not hold true for general density matrices in $\DM$. See,
e.g., \cite{MR1297873, MR2238912, MR545651, MR611508}. The first such
fact is that if $\rho \in \QF$ is a quasifree density matrix then so
is $W(g)^* \rho W(g) \in \QF$, for any $g \in \cZ_\sL$,
as follows from the Weyl commutation relations
\begin{align} \label{eq-I.28,1} 
\forall \, f,g \in \cZ_\sL: \qquad 
W(f) \, W(g) 
\ = \ 
e^{-\frac{i}{2} \rIm \la f | g \ra} \, W(f+g).
\end{align}
Choosing $g := -i\sqrt{2}f_{\rho}$, we find that
$W(-i\sqrt{2}f_{\rho})^* \, \rho \, W(-i\sqrt{2}f_{\rho})$ is a
centered quasifree density matrix, i.e.,
\begin{align} \label{eq-I.28,2} 
W(\sqrt{2}f_\rho/i)^* \, \rho \, W(\sqrt{2}f_\rho/i) 
\ \in \
\cQF \ := \ \QF \cap \cDM .
\end{align}
Next, we formulate a characterization of centered quasifree density
matrices.
\begin{lemma} \label{lem-1} 
Let $\rho \in \cDM$ be a centered density matrix and denote 
$\la A \ra_\rho := \Tr_\fF\{ \rho A \}$. Then $(i) \Leftrightarrow (ii)
\Leftrightarrow (iii)$, where
\begin{itemize}
\item[(i)] $\rho \in \cQF$ is centered and quasifree;

\item[(ii)] All odd correlation functions and all even trunctated
  correlation functions of $\rho$ vanish, i.e., for all $N \in \NN$
  and $\vphi_1, \ldots, \vphi_{2N} \in \cZ_\sL$, let
  either $b_n := a^*(\vphi_n)$ or $b_n := a(\vphi_n)$, for all $1 \leq
  n \leq 2N$. Then $\la b_1 \cdots b_{2N-1} \ra_\rho = 0$ and
\begin{align} \label{eq-I.29} 
\big\la b_1 \, b_2 \cdots b_{2N} \big\ra_\rho
\ = \ &
\sum_{\pi \in \fP_{2N}} \big\la b_{\pi(1)} \, b_{\pi(2)} \big\ra_\rho \,
\cdots \, \big\la b_{\pi(2N-1)} \, b_{\pi(2N)} \big\ra_\rho ,
\end{align}
where $\fP_{2N}$ denotes the set of pairings, i.e., the set of all
permutations $\pi \in \fS_{2N}$ of $2N$ elements such that $\pi(2n-1)
< \pi(2n+1)$ and $\pi(2n-1) < \pi(2n)$, for all $1 \leq n \leq N-1$
and $1 \leq n \leq N$, respectively.

\item[(iii)] There exist two commuting quadratic, semibounded
  Hamiltonians
\begin{align} \label{eq-I.30} 
H \ = \ &
\sum_{i,j} \Big\{ B_{i,j} \, a^*(\psi_i) \, a(\psi_j) 
\: + \: C_{i,j} a^*(\psi_i) \, a^*(\psi_j) 
\: + \: \ol{C_{i,j}} a(\psi_i) \, a(\psi_j) \Big\} 
\\[1ex] \label{eq-I.31} 
H' \ = \ &
\sum_{i,j} \Big\{ B_{i,j}' \, a^*(\psi_i) \, a(\psi_j) 
\: + \: C_{i,j}' a^*(\psi_i) \, a^*(\psi_j) 
\: + \: \ol{C_{i,j}'} a(\psi_i) \, a(\psi_j) \Big\} 
\end{align}
with $B = B^* \geq 0$, $C = C^T \in \cL^2(\cZ_\sL)$,
where $\{\psi_i\}_{i \in \NN} \subseteq \cZ_\sL$ is an
orthonormal basis, such that $\exp(-H -\beta H')$ is trace class, for
all $\beta < \infty$, and
\begin{align} \label{eq-I.32} 
\la A \ra_\rho \ = \ &
\lim_{\beta \to \infty} \bigg\{ 
\frac{ \Tr_\fF[ A \, \exp(-H -\beta H') ]}{\Tr_\fF[ \exp(-H -\beta H') ]}
\bigg\} ,
\end{align}
for all $A \in \cB(\fF_\sL)$.
\end{itemize}
\end{lemma}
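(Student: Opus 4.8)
The plan is to establish $(i)\Leftrightarrow(ii)$ directly as the correspondence between the characteristic functional \eqref{eq-I.27} and Wick's theorem, and then to close the loop by proving $(iii)\Rightarrow(ii)$ and $(i)\Rightarrow(iii)$. For $(i)\Leftrightarrow(ii)$, the key object is the map $f\mapsto\la W(\sqrt{2}f/i)\ra_\rho$, which for a centered state carries no linear term (here $f_\rho=0$, since $\Tr_\fF[\rho\,a^*(f)]=0$). First I would set $f=\sum_n t_n\vphi_n$ and recover the moments $\la b_1\cdots b_m\ra_\rho$ as mixed partial derivatives in the parameters $t_n$ at the origin; the Weyl relations \eqref{eq-I.28,1} together with the normal-ordered form of $W(f)$ (a Gaussian prefactor times $\exp$ of creation operators times $\exp$ of annihilation operators) make this explicit. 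Taking the logarithm of the right-hand side of \eqref{eq-I.27} shows that for a quasifree $\rho$ the generating functional of the truncated correlation functions is exactly quadratic, i.e.\ all cumulants other than the second vanish; by the standard moments--cumulants inversion (the exponential formula) this is equivalent to the vanishing of odd moments together with the Wick factorization \eqref{eq-I.29}, which is (ii). Conversely, (ii) resums into the exponential of a quadratic and reproduces \eqref{eq-I.27} with $f_\rho=0$. The only analytic point to verify is that the moment series for $\la W(f)\ra_\rho$ converges and may be differentiated termwise; this is controlled because the $W(f)$ are unitary and $\rho$ has finite energy, which supplies the needed moment bounds.

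For $(iii)\Rightarrow(ii)$ I would use that Gibbs states of quadratic Hamiltonians are quasifree. Since $B=B^*\ge0$ and $C=C^T\in\cL^2(\cZ_\sL)$, a Bogolubov transformation diagonalizes $H$ into $\sum_j\omega_j\,\tilde a_j^*\tilde a_j$ up to an additive constant; the Gibbs state $e^{-H}/\Tr_\fF[e^{-H}]$ then factorizes over modes into single-mode thermal states, each manifestly Gaussian, and since both the tensor product and the Bogolubov transformation preserve the Wick property, $e^{-H}/\Tr_\fF[e^{-H}]$ satisfies \eqref{eq-I.29}. As $H$ and $H'$ commute, $e^{-H-\beta H'}$ is handled identically for each fixed $\beta$, and because \eqref{eq-I.29} holds for every $\beta$ it passes to the $\beta\to\infty$ limit \eqref{eq-I.32} once the limiting correlation functions exist, which is built into the hypothesis that \eqref{eq-I.32} defines a state in $\cQF$.

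The direction $(i)\Rightarrow(iii)$ is where the substance lies. From the data $h_\rho$ and the second moments I would form the generalized one-particle density matrix of $\rho$ and diagonalize it by a Bogolubov transformation into normal modes. Modes of strictly positive temperature are reproduced by choosing $B,C$ so that the single-mode Gibbs weights of $e^{-H}$ match; the difficulty is the purely occupied or empty (zero-temperature) modes, on which $\rho$ is a pure oscillator ground state and which therefore cannot arise from $e^{-H}/\Tr_\fF[e^{-H}]$ for any finite quadratic $H$. This is precisely the role of the auxiliary Hamiltonian $H'$: one arranges its low-lying sector to be spanned by these pure modes so that $\beta\to\infty$ in \eqref{eq-I.32} projects onto them. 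Making this construction rigorous in the infinite-dimensional, continuum setting --- guaranteeing $C\in\cL^2(\cZ_\sL)$, that $e^{-H-\beta H'}$ is trace class for every $\beta<\infty$, and that the limit exists and reproduces $\la\cdot\ra_\rho$ --- is the main obstacle, and the treatment of the pure part of the state through this two-Hamiltonian limit is the heart of the lemma.
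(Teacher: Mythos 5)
The paper itself never proves Lemma~\ref{lem-1}: it is stated in Section~\ref{subsec-I.3} as a known characterization and delegated to the literature \cite{MR1297873, MR2238912, MR545651, MR611508}, so your proposal must stand on its own. Its architecture --- $(i)\Leftrightarrow(ii)$ via the Weyl characteristic functional and the moments--cumulants correspondence, $(iii)\Rightarrow(ii)$ via Bogolubov diagonalization of quadratic Gibbs states, $(i)\Rightarrow(iii)$ via a two-Hamiltonian construction in which $H'$ manufactures the zero-temperature modes --- is the standard one, and the role you assign to $H'$ is exactly right. But two of your steps have concrete gaps. In $(iii)\Rightarrow(ii)$ your passage to the limit $\beta\to\infty$ is circular: the hypothesis of (iii) is only that $\rho\in\cDM$ and that \eqref{eq-I.32} holds for \emph{bounded} $A\in\cB(\fF)$, not that ``\eqref{eq-I.32} defines a state in $\cQF$'' --- that is the conclusion. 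The correlation functions in (ii) are expectations of unbounded operators, and weak-$*$ convergence of the Gibbs states gives no control over them without a uniform moment bound, which you do not supply. The repair is to prove $(iii)\Rightarrow(i)$ instead: test \eqref{eq-I.32} on $A=W(\sqrt{2}f/i)$, which is unitary and hence admissible; each finite-$\beta$ Gibbs state is centered quasifree, so its Weyl expectation is a centered Gaussian $\exp[-\la f|(1+h_\beta)f\ra]$, and a pointwise limit of such Gaussians which is also the characteristic functional of a density matrix in $\cDM$ is again of the form \eqref{eq-I.27}; your loop still closes because you prove $(ii)\Leftrightarrow(i)$ independently. A related slip occurs in $(i)\Rightarrow(ii)$: finite energy and photon number control only \emph{second} moments, so they cannot justify differentiating the moment series to all orders; what does the work is the analyticity of the Gaussian functional itself (a Fatou-type argument on the spectral measure of $\Phi(f)$ gives all even moments, then polarization and Cauchy--Schwarz give the mixed ones), while in the converse direction the Gaussian growth of the Wick moments, which satisfies the Carleman condition, is what licenses the resummation.

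For $(i)\Rightarrow(iii)$ you correctly name the difficulty and then stop; since you yourself call it the heart of the lemma, describing the obstacle is not a proof of it. Two points an actual proof must settle. First, if $\rho$ has \emph{infinitely many} pure modes, an $H'$ giving them a common frequency makes $\exp(-H-\beta H')$ fail to be trace class (its partition function contains the divergent product $\prod(1-e^{-\beta})^{-1}$ over those modes), so the frequencies $\omega_j'$ of $H'$ must grow, and in fact faster than logarithmically in $j$ to keep trace-classness for \emph{every} $\beta>0$. Second, once the frequencies of $H$ or $H'$ are unbounded (which also happens in $H$ itself whenever the thermal occupations $c_j$ tend to zero rapidly), conjugating $\sum_j\omega_j\,\tilde{a}_j^*\tilde{a}_j$ by the Bogolubov transformation produces a pairing coefficient of the schematic form $u^*\,\mathrm{diag}(\omega_j)\,v$ with $v$ merely Hilbert--Schmidt, and the required condition $C=C^T\in\cL^2(\cZ)$ is then \emph{not} automatic: it demands a compatibility between the growth of the $\omega_j$ and the decay of $v$ along the corresponding modes which must be argued or arranged, and which your sketch does not address. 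Until these two points are carried out, the direction $(i)\Rightarrow(iii)$ remains unproven.
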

Eq.~\eqref{eq-I.28,2} and the vanishing (ii) of the truncated
correlation functions of a centered quasifree state imply that any
quasifree state $\rho \in \QF$ is completely determined by its
one-point function $\la a(\vphi) \ra_\rho$ and its two-point function
(one-particle reduced density matrix)
\begin{align} \label{eq-I.33} 
\Gamma[\gamma_\rho, \talpha_\rho] \ := \ 
\begin{pmatrix}
\gamma_\rho & \talpha_\rho \\ \talpha_\rho^* & \one + \mathcal{J} \, \gamma_\rho \, \mathcal{J}
\end{pmatrix} 
\ \in \ 
\cB\big( \cZ_\sL \oplus \cZ_\sL ),
\end{align}
where the operators $\gamma_\rho, \talpha_\rho \in
\cB(\cZ_\sL)$ are defined as
\begin{align} \label{eq-I.34} 
\la \vphi, \: \gamma_\rho \, \psi \ra 
\ := \ 
\la a^*(\psi) \, a(\vphi) \ra_\rho 
\quad \text{and} \quad
\la \vphi, \: \talpha_\rho \, \psi \ra 
\ := \ 
\la a(\vphi) \, a(\mathcal{J} \psi) \ra_\rho ,
\end{align}
and $\mathcal{J}: \cZ_\sL \to \cZ_\sL$ is a conjugation. 
The positivity of the density matrix $\rho$
implies that $\Gamma[\gamma_\rho, \talpha_\rho] \geq 0$ and, in
particular, $\gamma_\rho \geq 0$, too. Moreover, the additional
finiteness of the particle number expectation value, which
distinguishes $\DM$ from $\tDM$, ensures that $\gamma_\rho \in
\cL^1(\cZ_\sL)$ is trace-class, namely,
\begin{align} \label{eq-I.35} 
\Tr_\cZ[ \gamma_\rho ] \ = \ \la \nf \ra_\rho \ < \ \infty , 
\end{align}
and that $\talpha_\rho \in \cL^2(\cZ_\sL)$ is
Hilbert-Schmidt. 

Similar to \eqref{eq-I.22}-\eqref{eq-I.23}, we introduce pure
quasifree density matrices,
\begin{align} \label{eq-I.35,1} 
\pQF \ := \ \QF \: \cap \: \tpDM .
\end{align}
A subset of $\pQF$ of special interest is given by
\emph{coherent states}, i.e., pure quasifree states of
the form $|W(-i\sqrt{2} f)\Om\ra\la W(-i\sqrt{2} f)\Om|$, which we collect in
\begin{align} \label{eq-I.36} 
\coherent \ := \ 
\big\{ |W(-i\sqrt{2} f)\Om\ra\la W(-i\sqrt{2} f)\Om| \ \big| \ 
f \in \cZ_\sL \big\} .
\end{align}
For these, $\gamma_\rho = \talpha_\rho = 0$.

Conversely, if $\gamma \in \cL_+^1(\cZ_\sL)$ is a
positive trace-class operator and $\tilde\alpha \in
\cL^2(\cZ_\sL)$ is a Hilbert-Schmidt operator such that
$\Gamma[\gamma, \talpha] \geq 0$ is positive then there exists a
unique centered quasifree density matrix $\rho \in \cQF$ such that
$\gamma = \gamma_\rho$ and $\alpha = \alpha_\rho$ are its one-particle
reduced density matrices.

Summarizing these two relations, the set $\QF$ of quasifree density
matrices is in one-to-one correspondence to the convex set
\begin{align} \label{eq-I.37} 
\rdm \ := \ 
\Big\{ (f, \gamma, \talpha) \in 
\cZ_\sL \oplus \cL_+^1(\cZ_\sL) \oplus \cL^2(\cZ_\sL) 
\ \Big| \ \ \Gamma[\gamma, \talpha] \geq 0 \Big\}.
\end{align}
Note that coherent states correspond to elements of $\rdm$ of the
form $(f,0,0)$. 

Next, we observe in accordance with \eqref{eq-I.37} that, if $\rho
\in \QF$ is quasifree then its energy expectation value $\la H_{g,\vp}
\ra_\rho$ is a functional of $(f_\rho, \gamma_\rho, \talpha_\rho)$, namely,
\begin{align} \label{eq-I.38} 
\big\la H_{g,\vp} \big\ra_\rho 
\ = \ 
\cE_{g,\vp}(f_\rho, \gamma_\rho, \talpha_\rho) ,
\end{align}
where
%
%
%
%
%
%
%
\begin{align} \label{eq-Ia.1}
\mathcal{E}_{g,\vec{p}} & (f,\gamma,\tilde{\alpha})\nonumber \\
 & =\frac{1}{2}\big\{(\Tr[\gamma\vec{k}]+f^{*}\vec{k}f+2\rRe(f^{*}\vec{G})-\vec{p})^{\cdot2}\nonumber \\
 & \phantom{=}+\Tr[\gamma\vec{k}\cdot\gamma\vec{k}]+\Tr[\tilde{\alpha}^{*}\vec{k}\cdot\tilde{\alpha}\vec{k}]+\Tr[|\vec{k}|^{2}\gamma]\nonumber \\
 & \phantom{=}+2\rRe(\overline{(\vec{G}+\vec{k}f)}^{*}\tilde{\alpha}(\vec{G}+\vec{k}f))+\Tr[(2\gamma+\mathbf{1})(\vec{G}+\vec{k}f)\cdot(\vec{G}+\vec{k}f)^{*}]\big\}
\nonumber \\
 & \phantom{=}+\Tr[\gamma|\vec{k}|]+f^{*}|\vec{k}|f\,.\end{align}
Now we are in position to formulate our main results.
\begin{thm}
\label{thm:inf_qf_inf_pqf} Let $0\leq\sigma<\Lambda<\infty$, $g\in\mathbb{R}$
and $\vec{p}\in\mathbb{R}^{3}$, $|\vec{p}|<1$. Minimizing the energy
over quasifree states is the same as minimizing the energy over pure
quasifree states, i.e.,
\begin{align} \label{eq-Ia.2}
E_{BHF}(g,\vec{p},\sigma,\Lambda) 
\ := \ 
\inf_{\rho\in\mathfrak{QF}}\Tr[H_{g,\vec{p}}\,\rho]
\ = \ 
\inf_{\rho\in\mathfrak{pQF}}\Tr[H_{g,\vec{p}}\,\rho]\,.
\end{align}
\end{thm}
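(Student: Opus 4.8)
The inclusion $\pQF\subseteq\QF$ gives immediately $\inf_{\rho\in\QF}\Tr[H_{g,\vp}\rho]\leq\inf_{\rho\in\pQF}\Tr[H_{g,\vp}\rho]$, so the content of the theorem is the reverse inequality. The plan is to show that every quasifree state is dominated in energy by a pure one: given $\rho_1\in\QF$ with reduced density matrices $(\gamma_1,\talpha_1)$ I will produce $\rho_0\in\pQF$ with $\cE_{g,\vp}(\rho_0)\leq\cE_{g,\vp}(\rho_1)$. Recall that purity of a quasifree state is equivalent to the generalized one-particle density matrix $\Gamma[\gamma,\talpha]$ saturating the positivity constraint $\Gamma\geq0$, i.e.\ to $\Gamma[\gamma,\talpha]$ having a nontrivial kernel (equivalently $\talpha(\one+\mathcal{J}\gamma\mathcal{J})^{-1}\talpha^{*}=\gamma$), which in a Bogolubov normal form means that all thermal occupation numbers vanish. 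Diagonalizing $\rho_1$ by a Bogolubov transformation $\mathcal{W}$ I thus write $\Gamma[\gamma_1,\talpha_1]=\Gamma_0+\Delta$, where $\Gamma_0$ is the generalized one-particle density matrix of the associated pure quasifree state $\rho_0$ and $\Delta=\mathcal{W}(N\oplus N)\mathcal{W}^{*}\geq0$, $N\geq0$, carries the thermal excess.

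The engine of the proof is the one-parameter family $\Gamma_t:=\Gamma_0+t\Delta$, $t\in[0,1]$, which interpolates between $\rho_0$ at $t=0$ and $\rho_1$ at $t=1$. Since $\Gamma_0\geq0$ and $\Delta\geq0$, each $\Gamma_t$ is a legitimate generalized one-particle density matrix with the correct conjugation structure (the $(2,2)$-block of $\Delta$ equals $\mathcal{J}(\cdot)\mathcal{J}$ applied to its $(1,1)$-block), so the corresponding $(\gamma_t,\talpha_t)$ parametrize genuine quasifree states. The first key step is to check that, for $f$ held fixed, $g(t):=\cE_{g,\vp}(f,\gamma_t,\talpha_t)$ is convex in $t$. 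Writing $\delta_\gamma$ and $\delta_\talpha$ for the $(1,1)$- and $(1,2)$-blocks of $\Delta$, the only contributions to $g''$ come from the terms of $\cE_{g,\vp}$ quadratic in $(\gamma,\talpha)$, and they sum to $|\Tr[\delta_\gamma\vk]|^{2}+\Tr[\delta_\gamma\vk\cdot\delta_\gamma\vk]+\Tr[\delta_\talpha^{*}\vk\cdot\delta_\talpha\vk]$. Each summand is non-negative: the middle one because $\Tr[\delta_\gamma k_j\delta_\gamma k_j]=\Tr[(\delta_\gamma^{1/2}k_j\delta_\gamma^{1/2})^{2}]\geq0$ for the self-adjoint multiplication operators $k_j$ and $\delta_\gamma\geq0$, and the last one because it is a sum of squared Hilbert--Schmidt norms. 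Hence $g''\geq0$.

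By convexity it then suffices to show $g'(0)\geq0$, i.e.\ that turning on the thermal excess does not lower the energy to first order. Differentiating at $t=0$, the field-energy term contributes $\Tr[\delta_\gamma|\vk|]\geq0$, the coupling terms contribute $\tfrac12\Tr[|\vk|^{2}\delta_\gamma]+\Tr[\delta_\gamma(\vG+\vk f)\cdot(\vG+\vk f)^{*}]\geq0$, and the quartic kinetic term contributes $\Tr[\gamma_0\vk\cdot\delta_\gamma\vk]=\sum_j\|\delta_\gamma^{1/2}k_j\gamma_0^{1/2}\|_2^{2}\geq0$; all of these carry the right sign because $\delta_\gamma\geq0$ and $\gamma_0\geq0$. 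I expect the main obstacle to be the remaining first-order terms, which have indefinite sign: the total-momentum cross term $\la\pf+\vAA(\vO)-\vp\ra_{\rho_0}\cdot\Tr[\delta_\gamma\vk]$ and the pairing-linear terms $\rRe\Tr[\delta_\talpha^{*}\vk\,\talpha_0\vk]+\rRe\big(\overline{(\vG+\vk f)}^{*}\delta_\talpha(\vG+\vk f)\big)$.

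To neutralize these I would not keep the coherent part fixed but let the displacement vary along the deformation, $f=f_t$, chosen so that the mean velocity $\la\pf+\vAA(\vO)-\vp\ra$ is held constant; here the hypothesis $|\vp|<1$ is what I expect to guarantee that the required $f_t$ exists within the admissible class and that the accompanying change in the coherent field energy $f_t^{*}|\vk|f_t$ stays controlled. With the mean velocity frozen the indefinite momentum term drops out of the first variation, and one is left to absorb the pairing-linear terms using the constraint $\delta_\talpha(\mathcal{J}\delta_\gamma\mathcal{J})^{-1}\delta_\talpha^{*}\leq\delta_\gamma$ coming from $\Delta\geq0$, which ties $\delta_\talpha$ to the manifestly non-negative $\delta_\gamma$-contributions. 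Once $g'(0)\geq0$ is secured, convexity yields $\cE_{g,\vp}(\rho_1)=g(1)\geq g(0)=\cE_{g,\vp}(\rho_0)$, and taking the infimum over $\rho_1\in\QF$ completes the proof. The delicate point---and the step I would expect to cost the most work---is precisely this simultaneous control of the displacement and the pairing term: showing that the co-adjustment of $f_t$ can be performed without reintroducing a negative first-order contribution, while keeping the interpolation convex.
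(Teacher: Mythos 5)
Your strategy is genuinely different from the paper's, but it has a real gap at its central step, and it is exactly the step you flag yourself: the sign of the first variation $g'(0)$ is never established, and with $f$ held fixed it is in fact false in general. The first variation contains the term $-\vec{u}\cdot\Tr[\delta_\gamma\vk]$ with $\vec{u}=\vp-\Tr[\gamma_0\vk]-f^{*}\vk f-2\rRe(f^{*}\vG)$, and the only positive first-order term of the same strength available to absorb it is $\Tr[\delta_\gamma|\vk|]$, via $|\Tr[\delta_\gamma\vk]|\leq\Tr[\delta_\gamma|\vk|]$; this works only if $|\vec{u}|\leq 1$. The hypothesis $|\vp|<1$ gives no such bound for an arbitrary competitor $\rho_1\in\QF$: nothing prevents $f^{*}\vk f$, $2\rRe(f^{*}\vG)$ or $\Tr[\gamma_0\vk]$ from being large. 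Concretely, if $f$ is such that $|\vec{u}|$ is large and $\delta_\gamma$ is a small multiple of a rank-one projection concentrated near a momentum $\vk_0$ aligned with $\vec{u}$, then $-\vec{u}\cdot\Tr[\delta_\gamma\vk]$ beats all the positive first-order terms and $g'(0)<0$: adding a thermal cloud \emph{lowers} the energy of such a state, because its photons partially cancel the momentum mismatch in $\tfrac{1}{2}(\cdots-\vp)^{\cdot2}$. So stripping the thermal excess at fixed $f$ does not always decrease the energy, and the theorem cannot be obtained this way. Your proposed repair (letting $f_t$ co-vary so as to freeze the mean velocity) is named but not performed, and it undermines the rest of the argument: once $f_t$ depends on $t$, the map $t\mapsto\cE_{g,\vp}(f_t,\gamma_t,\talpha_t)$ is no longer a polynomial in $t$ with the positive quadratic part you computed (the vectors $\vG+\vk f_t$ enter the pairing-linear term, the $\gamma$-coupling term and the momentum itself), so both convexity and the first-variation computation would have to be redone from scratch; moreover the indefinite cross term $\rRe\Tr[\delta_{\talpha}^{*}\vk\cdot\talpha_0\vk]$ is never controlled at all.

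For contrast, the paper's proof avoids comparing $\rho_1$ with any particular pure state. Writing $\rho_1=W(-i\sqrt{2}f)\,\mathbb{U}_T^{*}\,\Gamma(C)/\Tr[\Gamma(C)]\,\mathbb{U}_T\,W(-i\sqrt{2}f)^{*}$, it truncates $C$ to a finite-rank $C_d$ and uses the coherent-state resolution of the identity on the finite-dimensional factor to write $\Gamma(C_d)/\Tr[\Gamma(C_d)]$ as an integral $\int\rho_d(z_d)\,d\mu_d(z_d)$ of \emph{pure} quasifree states against a probability measure. Linearity of $\rho\mapsto\Tr[H_{g,\vp}\,\rho]$ then gives $\Tr[H_{g,\vp}\,\rho_{qf,d}]\geq\inf_{\rho\in\pQF}\Tr[H_{g,\vp}\,\rho]$ at once, and the truncation is removed using $H_{g,\vp}\geq0$ together with $\rho_{1}\geq\nu_d\,\rho_{qf,d}$, where $\nu_d=\Tr[\Gamma(C_d)]/\Tr[\Gamma(C)]\nearrow1$. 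That route needs no smallness assumption, no sign condition on any first variation, and in fact never uses $|\vp|<1$. If you want to salvage an interpolation/convexity proof, you would first have to reduce the competition to states with $|\vec{u}|\leq1$ (e.g.\ near-minimizers), which is essentially a separate variational argument that your proposal does not supply.
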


\begin{thm}[Coherent States Case]
\label{thm:main_theorem_coherent} There exists a universal constant~$C<\infty$
such that, for $0\leq\sigma<\Lambda<\infty$, $g^{2}\ln(\Lambda+2)\leq C$
and $|\vec{p}|\leq1/3$, there exists a unique~$f_{g,\vec{p}}$
which minimizes~$\mathcal{E}_{g,\vec{p}}(f)=\mathcal{E}_{g,\vec{p}}(f,0,0)$
in~$L^{2}(S_{\sigma,\Lambda}\times\mathbb{Z}_{2},(\tfrac{1}{2}|\vec{k}|^{2}+|\vec{k}|)dk)$.
\begin{enumerate}
\item \label{enu:closed-system-f-u-1-1}The minimizer $f_{g,\vec{p}}$ solves
the system of equations
$$\begin{cases}
f_{g,\vec{p}} & =\frac{\vec{u}_{g,\vec{p}}\cdot\vec{G}}{\frac{1}{2}|\vec{k}|^{2}+|\vec{k}|-\vec{k}\cdot\vec{u}_{g,\vec{p}}}\,,\\
\vec{u}_{g,\vec{p}} & =\vec{p}-2\rRe(f_{g,\vec{p}}^{*}\vec{G})-f_{g,\vec{p}}^{*}\vec{k}f_{g,\vec{p}}\,,\end{cases}$$
with $|\vec{u}_{g,\vec{p}}|\leq|\vec{p}|$.
\item For~$0\leq\sigma<\Lambda<\infty$,\[
\inf_{f\in L^{2}(S_{\sigma,\Lambda}\times\mathbb{Z}_{2})}\mathcal{E}_{g,\vec{p}}\,(f)=\inf_{f\in L^{2}(S_{\sigma,\Lambda}\times\mathbb{Z}_{2},(\tfrac{1}{2}|\vec{k}|^{2}+|\vec{k}|)dk)}\mathcal{E}_{g,\vec{p}}\,(f)=\mathcal{E}_{g,\vec{p}}\,(f_{g,\vec{p}})\,,\]
and for~$0<\sigma<\Lambda<\infty$, $f_{g,\vec{p}}\in L^{2}(S_{\sigma,\Lambda}\times\mathbb{Z}_{2})$.
\item \label{enu:asymptotic-devel-energy-1-1} For fixed~$g$, $\sigma$, $\Lambda$, and small values of $|\vec{p}|$, we have that \[
\mathcal{E}_{g,\vec{p}}(f_{g,\vec{p}})=\mathcal{E}_{g,\vec{p}}(0)-\vec{p}\cdot\vec{G}^{*}\frac{1}{\frac{1}{2}|\vec{k}|^{2}+|\vec{k}|+2\vec{G}\cdot\vec{G}^{*}}\vec{G}\cdot\vec{p}+\mathcal{O}(|\vec{p}|^{3})\,.\]

\end{enumerate}
\end{thm}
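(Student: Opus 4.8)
The plan is to collapse this infinite-dimensional problem to a finite-dimensional concave maximisation over a single vector $\vec{w}\in\mathbb{R}^{3}$, with transversality playing an essential role. First I would record the exact shape of the functional: putting $\gamma=\tilde\alpha=0$ in \eqref{eq-Ia.1} and using $\vec{k}\cdot\vec{G}=0$ (which kills the cross term in $\tfrac12\Tr[(\vec G+\vec k f)\cdot(\vec G+\vec k f)^{*}]$), one gets
\[
\mathcal{E}_{g,\vec p}(f)=\tfrac12\|\vec G\|^{2}+\|f\|_{\omega}^{2}+\tfrac12|\vec u(f)|^{2},
\]
where $\omega=\tfrac12|\vec k|^{2}+|\vec k|$, $\|f\|_{\omega}^{2}=\int\omega|f|^{2}\,dk$, and $\vec u(f):=\vec p-2\rRe(f^{*}\vec G)-f^{*}\vec k f$. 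This shows at once that $\mathcal{E}_{g,\vec p}$ is bounded below by $\tfrac12\|\vec G\|^{2}$ and coercive exactly in the norm of $L^{2}(S_{\sigma,\Lambda}\times\mathbb{Z}_{2},\omega\,dk)$, which is why that is the correct space; and since $\mathcal{E}_{g,\vec p}(f)\le\mathcal{E}_{g,\vec p}(0)$, every minimiser satisfies $\|f\|_{\omega}^{2}\le\tfrac12|\vec p|^{2}$.

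The heart is a duality reduction. For every $f$ and every $\vec w$ with $|\vec w|<1$, the pointwise inequality $\tfrac12|\vec u(f)|^{2}\ge\vec w\cdot\vec u(f)-\tfrac12|\vec w|^{2}$ followed by completing the square in $f$ (legitimate since $\omega-\vec w\cdot\vec k\ge(1-|\vec w|)\omega>0$) gives
\[
\mathcal{E}_{g,\vec p}(f)\ \ge\ \tfrac12\|\vec G\|^{2}+\mathcal{G}(\vec w),\qquad \mathcal{G}(\vec w):=\vec w\cdot\vec p-\tfrac12|\vec w|^{2}-\mathcal{I}(\vec w),
\]
with $\mathcal{I}(\vec w):=\int\frac{(\vec w\cdot\vec G)^{2}}{\omega-\vec w\cdot\vec k}\,dk$, and both inequalities are saturated precisely when $\vec w=\vec u(f)$ and $f=f_{\vec w}:=\frac{\vec w\cdot\vec G}{\omega-\vec k\cdot\vec w}$. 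The structural point is that the integrand of $\mathcal{I}$ is of quadratic-over-linear type $s^{2}/t$ with $t=\omega-\vec w\cdot\vec k>0$, hence jointly convex; thus $\mathcal{I}$ is convex and $\mathcal{G}$ strictly concave on $\{|\vec w|<1\}$.

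I would then maximise $\mathcal{G}$. On the sphere $|\vec w|=\tfrac12$ the radial derivative obeys $\hat w\cdot\nabla\mathcal{G}=\hat w\cdot\vec p-|\vec w|-\hat w\cdot\nabla\mathcal{I}(\vec w)\le|\vec p|-\tfrac12<0$, using $|\vec p|\le\tfrac13$ together with $\nabla\mathcal{I}(\vec w)\cdot\vec w\ge\mathcal{I}(\vec w)\ge0$ (convexity of $\mathcal{I}$ with minimum $0$ at the origin); hence the maximiser $\vec u=\vec u_{g,\vec p}$ is interior and, by strict concavity, unique and global. The same convexity bound at the critical point $\vec u=\vec p-\nabla\mathcal{I}(\vec u)$ yields $|\vec u|^{2}=\vec u\cdot\vec p-\vec u\cdot\nabla\mathcal{I}(\vec u)\le\vec u\cdot\vec p$, i.e. $|\vec u_{g,\vec p}|\le|\vec p|$. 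Setting $f_{g,\vec p}:=f_{\vec u}$ and checking from $\nabla\mathcal{G}(\vec u)=0$ that $\vec u(f_{\vec u})=\vec u$, both displayed inequalities become equalities, so $f_{g,\vec p}$ attains the lower bound and is a minimiser; strict convexity of the inner problem (for $|\vec u|<1$) makes it the unique one, and reading off $\vec u(f_{\vec u})=\vec u$ is exactly the system in item~\ref{enu:closed-system-f-u-1-1}.

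For item~(2), if $\sigma>0$ then $\omega$ is bounded above and below, so $L^{2}(\omega\,dk)=L^{2}$ with equivalent norms and the claim is immediate; if $\sigma=0$ then $L^{2}\subseteq L^{2}(\omega\,dk)$ (as $\Lambda<\infty$), and truncating $f_{g,\vec p}$ to $\{|\vec k|\ge\epsilon\}$ gives $L^{2}$ competitors whose energies converge to $\mathcal{E}_{g,\vec p}(f_{g,\vec p})$, forcing equality of the infima. For item~\ref{enu:asymptotic-devel-energy-1-1}, since $\vec u_{g,\vec p}=\mathcal{O}(|\vec p|)$ I would expand $\mathcal{I}(\vec w)=\vec w^{T}\mathbb{M}\vec w+\mathcal{O}(|\vec w|^{3})$ with $\mathbb{M}=\vec G^{*}\omega^{-1}\vec G$, whence $\max\mathcal{G}=\tfrac12\vec p^{T}(1+2\mathbb{M})^{-1}\vec p+\mathcal{O}(|\vec p|^{3})$, and the push-through identity $\mathbb{M}(1+2\mathbb{M})^{-1}=\vec G^{*}(\omega+2\vec G\cdot\vec G^{*})^{-1}\vec G$ recasts $\mathcal{E}_{g,\vec p}(f_{g,\vec p})=\tfrac12\|\vec G\|^{2}+\max\mathcal{G}$ in the asserted form. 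The smallness $g^{2}\ln(\Lambda+2)\le C$ enters through $\kappa:=\int|\vec G|^{2}/\omega\,dk=\mathcal{O}(g^{2}\ln(\Lambda+2))$, which bounds $\mathcal{I}$ and $\|f_{\vec w}\|_{\omega}$ and keeps the auxiliary optimisers inside $\{|\vec w|<1\}$. The main obstacle is precisely to confirm that the duality lower bound is sharp, i.e. that the inner minimiser $f_{\vec w}$ and the outer maximiser $\vec u$ both remain in the admissible balls so that the two saturating conditions are met simultaneously; once this is secured, existence, uniqueness, the Euler--Lagrange system, and the constraint $|\vec u_{g,\vec p}|\le|\vec p|$ all fall out of the finite-dimensional concave picture.
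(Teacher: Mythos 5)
Your proposal is correct, and it reaches the theorem by a genuinely different route from the paper's. The paper proceeds via the Euler--Lagrange equation and a Banach--Picard fixed point: assuming a minimizer exists, it derives the system of item~1, then shows that the map $\vec{u}\mapsto\vec{p}-\Phi_{\vec{u}}^{*}\vec{k}\Phi_{\vec{u}}-2\rRe(\Phi_{\vec{u}}^{*}\vec{G})$ is a contraction of a ball $\bar{B}(0,r)$, $\tfrac{1}{3}<r<1$, where both the stability and the contraction estimates consume the hypothesis $g^{2}\ln(\Lambda+2)\leq C$; global minimality and uniqueness then follow from the exact quartic expansion of the energy around the critical point (item~\ref{enu:devel-energy-in-fp-1} of Theorem~\ref{thm:fixed-point-f-u}), and the asymptotics of item~3 from expanding $f_{\vec{p}}$ in $\vec{p}$. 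You instead set up an exact convex duality that collapses the problem to the three-dimensional strictly concave maximization of $\mathcal{G}(\vec{w})=\vec{w}\cdot\vec{p}-\tfrac{1}{2}|\vec{w}|^{2}-\mathcal{I}(\vec{w})$, with the concavity resting on the quadratic-over-linear (perspective) convexity of the integrand of $\mathcal{I}$, and the interior location of the maximizer on the bound $\nabla\mathcal{I}(\vec{w})\cdot\vec{w}\geq\mathcal{I}(\vec{w})\geq0$ together with $|\vec{p}|\leq\tfrac{1}{3}$; the two saturation conditions reproduce exactly the paper's system because $\nabla\mathcal{I}(\vec{w})=2\rRe(f_{\vec{w}}^{*}\vec{G})+f_{\vec{w}}^{*}\vec{k}f_{\vec{w}}$ --- a short computation you should display explicitly, together with the routine domination that justifies differentiating $\mathcal{I}$ under the integral on $\{|\vec{w}|\leq r\}$, $r<1$. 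Your route buys something real: the smallness hypothesis on $g$ is never used (contrary to your closing sentence, nothing in your argument needs $g^{2}\ln(\Lambda+2)\leq C$ to keep $\vec{u}$ inside $\bar{B}(0,\tfrac{1}{2})$; only $|\vec{p}|\leq\tfrac{1}{3}$ does that), so you obtain existence and uniqueness for all $g$, a strictly stronger statement; moreover, writing $\omega=\tfrac{1}{2}|\vec{k}|^{2}+|\vec{k}|$ and $h=f-f_{\vec{u}}$, your chain of inequalities yields the quantitative bound $\mathcal{E}_{g,\vec{p}}(f)\geq\mathcal{E}_{g,\vec{p}}(f_{\vec{u}})+h^{*}(\omega-\vec{u}\cdot\vec{k})h$, which settles attainment and uniqueness in one stroke and plays the role of the paper's exact expansion around the critical point. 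What the paper's fixed-point construction buys instead is the smooth/analytic dependence of $(f_{\vec{p}},\vec{u}_{\vec{p}})$ on the parameters recorded in its remarks; in your picture this would require an additional implicit-function argument applied to $\nabla\mathcal{G}(\vec{u})=0$. Your handling of item~2 (equivalence of the two $L^{2}$ norms for $\sigma>0$, truncation near $\vec{k}=0$ for $\sigma=0$) and of item~3 (second-order expansion of $\mathcal{G}$ combined with the push-through identity $\mathbb{M}(1+2\mathbb{M})^{-1}=\vec{G}^{*}(\omega+2\vec{G}\cdot\vec{G}^{*})^{-1}\vec{G}$, $\mathbb{M}=\vec{G}^{*}\omega^{-1}\vec{G}$) is sound and gives exactly the asserted formulas.
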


We summarize in Theorem~\ref{thm:main_thm_QF} the information obtained
in Sections~\ref{sec:Existence-and-uniqueness} to~\ref{sec:Lagrange-equations}.
\begin{thm}[Quasifree States Case]
\label{thm:main_thm_QF}Let~$0<\sigma<\Lambda<\infty$. There exists
$C>0$ (possibly depending on $\sigma$ and $\Lambda$) such that for all $|g|,|\vec{p}|<C$, there exists a unique $(f_{g,\vec{p}},\gamma_{g,\vec{p}},\tilde{\alpha}_{g,\vec{p}})$
which minimizes the energy $\mathcal{E}_{g,\vec{p}}(f,\gamma,\tilde{\alpha})$.
\begin{enumerate}
\item The dependence of $(f_{g,\vec{p}},\gamma_{g,\vec{p}},\tilde{\alpha}_{g,\vec{p}})$
on~$(g,\vec{p})$ is smooth.
\item The functions $(f_{g,\vec{p}},\gamma_{g,\vec{p}},\tilde{\alpha}_{g,\vec{p}})$
satisfy\begin{align*}
f_{g,\vec{p}} & =\big(\frac{1}{2}|\vec{k}|^{2}+|\vec{k}|\big)^{-1}\vec{p}.\vec{G}+\mathcal{O}(\|(g,\vec{p})\|^{3})\,,\\
\tilde{\alpha}_{g,\vec{p}} & =-\tilde{S}^{-1}(\vec{G}\cdot\vec{G}^{*})+\mathcal{O}(\|(g,\vec{p})\|^{3})\,,\\
\gamma_{g,\vec{p}}+\gamma_{g,\vec{p}}^{2} & =\tilde{\alpha}_{g,\vec{p}}\,\tilde{\alpha}_{g,\vec{p}}^{*}\,,\end{align*}
\textup{where $\tilde{S}$ acts on the kernel $K_{A}(k,k^{\prime})$ of a Hilbert-Schmidt operator~$A$ as the multiplication by 
$\vec{k}\cdot\vec{k}^{\prime}+\frac{1}{2}|\vec{k}|^{2}+|\vec{k}|+\frac{1}{2}|\vec{k}^{\prime}|^{2}+|\vec{k}^{\prime}|$.}
\item For fixed~$\sigma$, $\Lambda$, and small values of $|g|$ and $|\vec{p}|$, we have that\begin{multline*}
E_{BHF}(g,\vec{p},\sigma,\Lambda)\\
=\mathcal{E}_{g,\vec{p}}(0,0,0)-g^{2}|\vec{p}|^{2}C_{2,2}(\sigma,\Lambda)-g^{4}C_{4,0}(\sigma,\Lambda)+\mathcal{O}(\|(g,\vec{p})\|^{5})\,,\end{multline*}
as~$(g,\vec{p})\to0$, with $C_{2,2}(\sigma,\Lambda)=(2\pi^{2}-\frac{8}{3}\pi)\ln(\frac{\Lambda+2}{\sigma+2})$
and $C_{4,0}(\sigma,\Lambda)>0$.
\item The minimizer $(f_{g,\vec{p}},\gamma_{g,\vec{p}},\tilde{\alpha}_{g,\vec{p}})$
satisfies (we drop the $g, \vec{p}$ indexes to simplify the notation)\begin{align*}
M(\gamma,\vec{u})f & =-(\vec{k}(\gamma+\frac{1}{2}\mathbf{1})-\vec{u})\cdot\vec{G}-\vec{k}\cdot\tilde{\alpha}(\vec{G}+\vec{k}f)\,,\\
\mathcal{A}(\lambda)\tilde{\alpha} & =-(\vec{G}+\vec{k}f)\cdot(\vec{G}+\vec{k}f)^{*}\,,\\
\gamma+\gamma^{2} & =\tilde{\alpha}\,\tilde{\alpha}^{*}\,,\\
\lambda: & =\int_{0}^{\infty}\negthickspace\negthickspace e^{-t(\frac{1}{2}+\gamma)}(M(\gamma,\vec{u})+(\vec{G}+\vec{k}f)\!\cdot\!(\vec{G}+\vec{k}f)^{*})e^{-t(\frac{1}{2}+\gamma)}dt\,,\\
\vec{u}: & =\vec{p}-\Tr[\gamma\vec{k}]-f^{*}\vec{k}f-2\rRe(f^{*}\vec{G})\,,\end{align*}
with 
\begin{align*}
M(\gamma,\vec{u}) \ := \ & \frac{1}{2}|\vec{k}|^{2}+|\vec{k}|-\vec{k}\cdot\vec{u}+\vec{k}\cdot\gamma\vec{k}, \\
\mathcal{A}(\lambda)\tilde{\alpha} \ := \ & \vec{k}\tilde{\alpha}\cdot\vec{k}+\lambda\tilde{\alpha}+\tilde{\alpha}\lambda .
\end{align*}
\end{enumerate}
\end{thm}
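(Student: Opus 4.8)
The plan is to recast the minimization as a critical-point problem for the energy functional \eqref{eq-Ia.1} and to solve it perturbatively via the implicit function theorem, taking the field vacuum as reference point. By Theorem~\ref{thm:inf_qf_inf_pqf} the infimum of $\mathcal{E}_{g,\vec{p}}$ over $\QF$ equals the infimum over $\pQF$, so I would work on the manifold of pure quasifree states, which in the $(f,\gamma,\tilde{\alpha})$-parametrization of $\rdm$ is cut out by the purity constraint $\gamma+\gamma^{2}=\tilde{\alpha}\,\tilde{\alpha}^{*}$ (equivalently, $\Gamma[\gamma,\tilde{\alpha}]$ a projection). The two small parameters $(g,\vec{p})$ then play the role of the external parameters in an implicit-function scheme.

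First I would derive the Euler--Lagrange equations. Taking the G\^{a}teaux derivatives of \eqref{eq-Ia.1} in $f$, $\gamma$, $\tilde{\alpha}$ and adjoining a Lagrange multiplier for the purity constraint yields exactly the system of part~(4): stationarity in $f$ gives $M(\gamma,\vec{u})f=-(\vec{k}(\gamma+\tfrac12\mathbf{1})-\vec{u})\cdot\vec{G}-\vec{k}\cdot\tilde{\alpha}(\vec{G}+\vec{k}f)$ with $\vec{u}$ the effective momentum; stationarity in $\tilde{\alpha}$ gives $\mathcal{A}(\lambda)\tilde{\alpha}=-(\vec{G}+\vec{k}f)\cdot(\vec{G}+\vec{k}f)^{*}$, the multiplier operator $\lambda$ being fixed self-consistently by the displayed heat-kernel integral; and the constraint closes the system. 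At $(g,\vec{p})=(0,0)$ the functional \eqref{eq-Ia.1} reduces to the manifestly nonnegative quadratic forms $\Tr[\gamma(\tfrac12|\vec{k}|^{2}+|\vec{k}|)]$, $\tfrac12\Tr[\tilde{\alpha}^{*}\vec{k}\cdot\tilde{\alpha}\vec{k}]$, etc., whose unique minimizer on the pure-state manifold is the vacuum $(0,0,0)$; this is the reference solution of the Lagrange map $F(f,\gamma,\tilde{\alpha};g,\vec{p})=0$.

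Next I would linearize $F$ at the vacuum and invert its differential in $(f,\gamma,\tilde{\alpha})$. Here the hypothesis $\sigma>0$ is essential: on $S_{\sigma,\Lambda}$ the multiplier $\tfrac12|\vec{k}|^{2}+|\vec{k}|$ is bounded below by $\tfrac12\sigma^{2}+\sigma>0$, so $M(0,\vec{0})=\tfrac12|\vec{k}|^{2}+|\vec{k}|$ and the operator $\tilde{S}$ (multiplication by $\vec{k}\cdot\vec{k}'+\tfrac12|\vec{k}|^{2}+|\vec{k}|+\tfrac12|\vec{k}'|^{2}+|\vec{k}'|$ on Hilbert--Schmidt kernels) are both strictly positive and boundedly invertible. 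Granting that the linearization is an isomorphism on $\cZ_\sL\oplus\cL^{1}(\cZ_\sL)\oplus\cL^{2}(\cZ_\sL)$, the implicit function theorem produces a unique smooth branch $(f_{g,\vec{p}},\gamma_{g,\vec{p}},\tilde{\alpha}_{g,\vec{p}})$ for $\|(g,\vec{p})\|<C$, giving existence, local uniqueness, and the smoothness of part~(1). To pass from local to global uniqueness of the minimizer I would combine the strict positivity of the Hessian of $\mathcal{E}_{g,\vec{p}}$ along the branch (inherited from the $(0,0)$ case by continuity) with a coercivity/lower-bound estimate confining any competing minimizer to the same neighborhood.

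Finally, parts~(2) and~(3) follow by expanding the branch as a power series in $(g,\vec{p})$ and solving the equations order by order: at leading order the $f$- and $\tilde{\alpha}$-equations decouple and invert to give $f_{g,\vec{p}}=(\tfrac12|\vec{k}|^{2}+|\vec{k}|)^{-1}\vec{p}\cdot\vec{G}+\mathcal{O}(\|(g,\vec{p})\|^{3})$ and $\tilde{\alpha}_{g,\vec{p}}=-\tilde{S}^{-1}(\vec{G}\cdot\vec{G}^{*})+\mathcal{O}(\|(g,\vec{p})\|^{3})$, while purity fixes $\gamma$; substituting back into \eqref{eq-Ia.1} produces the energy expansion. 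I expect the two genuinely hard points to be, first, the invertibility of the \emph{full} coupled linearization --- the self-consistent multiplier $\lambda$ and the purity constraint entangle all three variables, and one must verify that the constraint introduces no zero mode --- and, second, the explicit evaluation of the integrals over $S_{\sigma,\Lambda}$, using transversality of $\vec{G}$ and the polarization sum $\sum_{\tau}\veps_{\tau}(\vk)\otimes\veps_{\tau}(\vk)=\mathbf{1}-|\vk|^{-2}\,\vk\otimes\vk$, that yield $C_{2,2}(\sigma,\Lambda)=(2\pi^{2}-\tfrac83\pi)\ln(\tfrac{\Lambda+2}{\sigma+2})$ together with the strict positivity $C_{4,0}(\sigma,\Lambda)>0$.
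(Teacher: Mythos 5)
Your derivation of the Euler--Lagrange system (part 4) and your order-by-order expansion scheme for parts 2 and 3 do mirror what the paper does in Sections~\ref{sec:Lagrange-equations} and~\ref{sec:Perturbative-approach}, but your core existence-and-uniqueness mechanism has a genuine gap. The implicit function theorem applied to the constrained Lagrange map can only produce a locally unique \emph{critical point}; nowhere do you prove that the infimum of $\mathcal{E}_{g,\vec p}$ over $\pQF$ is attained, nor that an attained minimizer must coincide with your branch. Your proposed repair --- ``strict positivity of the Hessian of $\mathcal{E}_{g,\vec{p}}$ along the branch'' plus coercivity --- does not close this gap, for two reasons. First, the pure-state set $\{\gamma+\gamma^{2}=\tilde{\alpha}\tilde{\alpha}^{*}\}$ is neither convex nor weakly closed in $\cL^{1}(\cZ)\oplus\cL^{2}(\cZ)$ (products like $\tilde{\alpha}\tilde{\alpha}^{*}$ are not weakly continuous), so neither convexity arguments nor the direct method can be run on it as it stands. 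Second, and concretely, the Hessian of $\mathcal{E}_{g,\vec p}$ in the flat $(f,\gamma,\tilde{\alpha})$ variables is the \emph{wrong object}: its $\tilde{\alpha}$-block at the vacuum is the quadratic form $\tfrac12\Tr[\tilde{\alpha}^{*}\vec{k}\cdot\tilde{\alpha}\vec{k}]=\tfrac12\int\vec{k}\cdot\vec{k}^{\prime}\,|\tilde{\alpha}(k,k^{\prime})|^{2}\,dk\,dk^{\prime}$, which is \emph{indefinite} (take a symmetric kernel concentrated near $\vec{k}^{\prime}\approx-\vec{k}$). The positive operator $\tilde S$ of part 2 only emerges after the terms of \eqref{eq-Ia.1} that are \emph{linear} in $\gamma$, namely $\Tr[\gamma(\tfrac12|\vec{k}|^{2}+|\vec{k}|)]$, are fed through the constraint $\gamma+\gamma^{2}=\tilde{\alpha}\tilde{\alpha}^{*}$; equivalently, the correct second-order object is the Hessian of the Lagrangian restricted to the tangent space, in which the multiplier times the curvature of the constraint contributes decisively. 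Positivity therefore lives only \emph{on} the constraint manifold, and exploiting it for uniqueness and for convergence of minimizing sequences requires working in a chart of that manifold, which your proposal never introduces.

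This is exactly the difficulty the paper's proof is built to avoid, and the device you bypass is its key structural step: by Proposition~\ref{pro:gamma_r_alpha_r}, pure quasifree states are parametrized without any constraint by $(f,\hat{r})\in X=\cZ\times Y$ via $\gamma=\tfrac12(\cosh(2\hat{r})-\mathbf{1})$, $\tilde{\alpha}=\tfrac12\sinh(2\hat{r})\mathcal{J}$, which flattens the manifold into a real Hilbert space. On $X$ the energy $\hat{\mathcal{E}}_{g,\vec p}$ is uniformly strictly convex on a ball $\bar{B}_{X}(0,R)$, with Hessian bounded below by $\sigma/4$ (Proposition~\ref{pro:convexity_E(f,r)}; here the indefinite $\Tr[\hat{r}\vec{k}\cdot\hat{r}\vec{k}]$ term is dominated because $\gamma\approx\hat{r}^{2}$ makes the linear-in-$\gamma$ terms quadratic in $\hat{r}$), and it is coercive, $\hat{\mathcal{E}}_{g,\vec p}(f,\hat{r})\geq\sigma\|(f,\hat{r})\|_{X}^{2}$ (Proposition~\ref{pro:coercivity_E(f,r)}), so every minimizing sequence stays in the ball and is Cauchy: existence and \emph{global} uniqueness come out of a single argument (Theorem~\ref{pro:existence-uniqueness}), with no attainment issue and no implicit function theorem needed for existence. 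Smoothness in $(g,\vec p)$ and the expansions of parts 2 and 3 are then obtained by implicit differentiation of the unconstrained stationarity equation $\partial_{f,\hat{r}}\hat{\mathcal{E}}_{g,\vec p}=0$ (Theorem~\ref{thm:perturbative}), and only \emph{after} the minimizer is known to exist are the Lagrange equations of part 4 derived in the $(f,\gamma,\alpha)$ variables (Theorem~\ref{pro:Lagrange_Equations}), where the multiplier rule requires the surjectivity of the constraint differential, guaranteed by $\|\gamma\|_{\mathcal{B}(\cZ)}<\tfrac12$ and the Sylvester equation (Proposition~\ref{pro:Sylvester_Equation}). To salvage your route you would need either an independent proof that the infimum is attained on the (non-weakly-closed) constraint set, or to carry out your local analysis in a chart --- and the natural chart is precisely the paper's $\hat{r}$-parametrization.
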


\begin{rem}
In the coherent states case the formula\[
\mathcal{E}_{g,\vec{p}}(f_{g,\vec{p}})=\mathcal{E}_{g,\vec{p}}(0)-g^{2}|\vec{p}|^{2}C_{2,2}(\sigma,\Lambda)+\mathcal{O}(\|(g,\vec{p})\|^{5})\,,\]
holds and can easily be compared to the quasifree state case.
\end{rem}

\begin{rem}
Although Theorem~\ref{thm:main_thm_QF} is formulated in terms of
the one-particle reduced density matrix~$\Gamma_{\rho}$ and its
constituents~$\gamma_{\rho}$ and $\tilde{\alpha}_{\rho}$, it turns
out to be more convenient to parametrize the pureness constraint $\gamma_{\rho}+\gamma_{\rho}^{2}=\tilde{\alpha}_{\rho}\tilde{\alpha}_{\rho}^{*}$
in terms of an antilinear Hilbert-Schmidt operator $\hat{r}$ which
is chosen such that $\gamma_{\rho}=\frac{1}{2}(\cosh(2\hat{r})-\mathbf{1})$,
$\tilde{\alpha}_{\rho}=\frac{1}{2}\sinh(2\hat{r})\mathcal{J}$, where
$\mathcal{J}:f\in L^{2}(S_{\sigma,\Lambda}\times\mathbb{Z}_{2})\mapsto\bar{f}\in L^{2}(S_{\sigma,\Lambda}\times\mathbb{Z}_{2})$.
This is explained in detail in Section~\ref{sec:Quasifree-states}.
\end{rem}

\paragraph{Outline of the article}

We introduce our notation to describe the second quantization
framework in
Section~\ref{sec:The-second-quantization}. Section~\ref{sec:Quasifree-states}
introduces two parametrizations of pure quasifree states and contains
the proof of Theorem~\ref{thm:inf_qf_inf_pqf}. The energy functional
for a fixed value of the momentum $\vec{p}$ of the dressed electron is
computed in Section~\ref{sec:Energy-functional}, along with some
positivity properties of the different parts of the energy. From
Section~\ref{sec:Minimization-coherent-states} on we tacitly assume
that the coupling constant $|g| >0$ is small. The energy is then
minimized in the particular case of coherent states in
Section~\ref{sec:Minimization-coherent-states}, providing a first
upper bound to the energy of the ground state and a proof of
Theorem~\ref{thm:main_theorem_coherent}.  The existence and uniqueness
of a minimizer among the class of pure quasifree state is then proven
in Section~\ref{sec:Existence-and-uniqueness} provided $|\vec{p}|$ is
small enough. The first terms of a perturbative expansion for small
$g$ and $\vec{p}$ of the energy at the minimizer is computed in
Section~\ref{sec:Perturbative-approach}. Finally the Lagrange
equations associated with the problem of minimization in the
generalized one particle density matrix variables are presented in
Section~\ref{sec:Lagrange-equations}.

\section{\label{sec:The-second-quantization}Second Quantization}

In this section $\mathcal{Z}$ denotes a $\mathbb{C}$-Hilbert space
with a scalar product $\mathbb{C}$-linear in the right variable and
$\mathbb{C}$-antilinear in the left variable.

Let $\mathcal{B}(X;Y)$ be the space of bounded operators between
two Banach spaces~$X$ and~$Y$, and $\mathcal{L}^{1}(\mathcal{Z})$
the space of trace class operators on~$\mathcal{Z}$. Given two $\mathbb{C}$-Hilbert
spaces $(\mathcal{Z}_{j},\langle\cdot,\cdot\rangle_{j})$, $j=1,2$
and a bounded linear operator $A:\mathcal{Z}_{1}\to\mathcal{Z}_{2}$,
set $A^{*}:\mathcal{Z}_{2}\to\mathcal{Z}_{1}$ to be the operator
such that\[
\forall z_{1}\in\mathcal{Z}_{1},\, z_{2}\in\mathcal{Z}_{2},\qquad\langle z_{2},Az_{1}\rangle_{2}=\overline{\langle z_{1},A^{*}z_{2}\rangle_{1}}\,,\]
and $\rRe A:=\frac{1}{2}(A\oplus A^{*})$, $\rIm A:=\frac{1}{2i}(A\oplus(-A^{*}))\in\mathcal{B}(\mathcal{Z}_{1},\mathcal{Z}_{2})\oplus\mathcal{B}(\mathcal{Z}_{2},\mathcal{Z}_{1})$.
\begin{example}
For $z$, $z^{\prime}\in\mathcal{Z}$, \[
\langle z,z^{\prime}\rangle=z^{*}z^{\prime}\,.\]

The adjoint of a bounded operator $A$ on $\mathcal{Z}$ is $A^{*}$.
\end{example}
The symmetrization operator $\mathcal{S}_{n}$ on $\mathcal{Z}^{\otimes n}$
is the orthogonal projection defined by\[
\mathcal{S}_{n}(z_{1}\otimes\cdots\otimes z_{n})=\frac{1}{n!}\sum_{\pi\in\mathfrak{S}_{n}}z_{\pi_{1}}\otimes\cdots\otimes z_{\pi_{n}} \]
and extension by linearity and continuity.
The symmetric tensor product for vectors is $z_{1}\vee z_{2}=\mathcal{S}_{n_{1}+n_{2}}(z_{1}\otimes z_{2})$
and more generally for operators is $A_{1}\vee A_{2}=\mathcal{S}_{q_{1}+q_{2}}\circ(A_{1}\otimes A_{2})\circ\mathcal{S}_{p_{1}+p_{2}}$
for $A_{j}\in\mathcal{B}(\mathcal{Z}^{\otimes p_{j}};\mathcal{Z}^{\otimes q_{j}})$.
We set 
\begin{align*}
\mathcal{Z}^{\vee n} \ := \ \mathcal{S}_{n}\mathcal{Z}^{\otimes n}, \qquad
\mathcal{B}^{p,q} \ := \ \mathcal{B}(\mathcal{Z}^{\otimes p};\mathcal{Z}^{\otimes q}).
\end{align*}
\begin{defn}
The symmetric Fock space on a Hilbert space $\mathcal{Z}$ is defined
to be \[
\mathfrak{F}_{+}(\mathcal{Z}):=\bigoplus_{n=0}^{\infty}\mathcal{Z}^{\vee n}\,,\]
where $\mathcal{Z}^{\vee 0} := \CC \Omega$, $\Omega$ being the normalized vacuum vector.

For a linear operator~$C$ on~$\mathcal{Z}$ such that $\|C\|_{\mathcal{B}(\mathcal{Z})}\leq1$,
let $\Gamma(C)$ defined on each~$\mathcal{Z}^{\vee n}$ by~$C^{\vee n}$
and extended by continuity to the symmetric Fock space on~$\mathcal{Z}$.

For an operator~$X$ on~$\mathcal{Z}$, the second quantization~$d\Gamma(X)$
of~$X$ is defined on each~$\mathcal{Z}^{\vee n}$ by\[
d\Gamma(X)\Big|_{\mathcal{Z}^{\vee n}}=n\,\mathbf{1}_{\mathcal{Z}}^{\vee n-1}\vee X\]
 and extended by linearity to $\bigoplus_{n\geq0}^{alg}\mathcal{Z}^{\vee n}$.
The number operator is~$N_{f}=d\Gamma(\mathbf{1}_{\mathcal{Z}})$.
For a vector $f$ in $\mathcal{Z}$, the creation and annihilation
operators in~$f$ are the linear operators such that $a(f)\Omega = 0$, $a^*(f)\Omega = f$, and 
\begin{align} \label{eq-VB.1}
a(f)g^{\vee n}=\sqrt{n}(f^{*}g)\; g^{\vee n-1}\,, \ \ \mbox{and} \ \ a^{*}(f)g^{\vee n}=\sqrt{n+1}f\vee g^{\vee n}\,,
\end{align}
for all $g\in\mathcal{Z}$. By the polarization identity
\[
\forall g_{1},\dots,g_{n},\qquad g_{1}\vee\cdots\vee g_{n}=\frac{1}{2^{n}n!}\sum_{\varepsilon_{i}=\pm1}\varepsilon_{1}\cdots\varepsilon_{n}\Big(\sum_{j=1}^{n}\varepsilon_{j}g_{j}\Big)^{\otimes n}
\]
Eq.~\eqref{eq-VB.1} extends to $\mathcal{Z}^{\vee n}$ and hence also to $\bigoplus_{n\geq0}^{alg}\mathcal{Z}^{\vee n}$. They satisfy the
canonical commutation relations $[a(f),a^{*}(g)]=f^{*}g$, $[a(f),a(g)]=[a^{*}(f),a^{*}(g)]=0$.

The self-adjoint field operator associated to $f$ is~$\Phi(f)= \tfrac{1}{\sqrt{2}}\big( a^{*}(f) + a(f) \big)$.
For more details on the second quantization see the book of Berezin~\cite{MR0208930}.

\end{defn}
A dot ``$\cdot$'' denotes an operation analogous to the scalar product
in~$\mathbb{R}^{3}$. For every two objects $\vec{a}=(a_{1},a_{2},a_{3})$
and $\vec{b}=(b_{1},b_{2},b_{3})$ with three components such that
the products $a_{j}b_{j}$ are well defined \[
\vec{a}\cdot\vec{b}:=\sum_{j=1}^{3}a_{j}b_{j}\,.\]
%
\begin{example}
With $\vec{p}\in\mathbb{R}^{3}$, $\vec{G}\in\mathcal{Z}^{3}$, $\vec{k}\in(\mathcal{B}^{1,1})^{3}$\begin{alignat*}{5}
\vec{p}^{\,\cdot2} & =\sum_{j=1}^{3}p_{j}^{2}\in\mathbb{R}\,, & \vec{k}\cdot\vec{p} & =\sum_{j=1}^{3}p_{j}k_{j}\in\mathcal{B}^{1,1}, & \vec{p}\cdot\vec{G} & =\sum_{j=1}^{3}p_{j}G_{j}\in\mathcal{Z}\,,\\
\vec{k}^{\,\cdot2} & =\sum_{j=1}^{3}k_{j}^{2}\in\mathcal{B}^{1,1}, & \vec{k}\cdot\vec{G} & =\sum_{j=1}^{3}k_{j}G_{j}\in\mathcal{Z}\,, & \; \vec{G}^{*}\cdot\vec{k} & =\sum_{j=1}^{3}G_{j}^{*}k_{j}\in\mathcal{Z}^{*},\\
\vec{G}\cdot\vec{G}^{*} & =\sum_{j=1}^{3}G_{j}G_{j}^{*}\in\mathcal{B}^{1,1}, &\; \vec{G}^{*}\!\cdot\vec{G} & =\sum_{j=1}^{3}G_{j}^{*}G_{j}\in\mathbb{C}\,,\end{alignat*}
where for an object with three components $\vec{a}=(a_{1},a_{2},a_{3})$
such that $a_{j}^{*}$ is well-defined, $\vec{a}^{*} := (a_{1}^{*},a_{2}^{*},a_{3}^{*})$.
We sometimes use the notation $\vec{p}^{\,\cdot2}=|\vec{p}|^{2}$,
or $\vec{k}^{\,\cdot2}=|\vec{k}|^{2}$.

And with another product, such as the symmetric tensor product $\vee$,\begin{align*}
\vec{k}^{\,\cdot\vee2} & =\sum_{j=1}^{3}k_{j}^{\vee2}\in\mathcal{B}^{2,2}\,, & \vec{k}\cdot\!\vee\vec{G} & =\sum_{j=1}^{3}k_{j}\vee G_{j}\in\mathcal{B}^{2,3}\,.\end{align*}
\end{example}
Recall that the Weyl operators are the unitary operators $W(f)=\exp(i\Phi(f))$ 
satisfying the relations \begin{align}
\forall z_{1},\, z_{2}\in\mathcal{Z}: \qquad W(z_{1})W(z_{2}) & =e^{-\frac{i}{2}\rIm(z_{1}^{*}z_{2})}W(z_{1}+z_{1})\,,\label{eq:product_Weyl_operators}\\
\forall z\in\mathcal{Z}: \qquad W(-i\sqrt{2}z)\Omega & =e^{-\frac{|z|^{2}}{2}}\sum_{n=0}^{\infty}\frac{z^{\vee n}}{\sqrt{n!}}\,.\label{eq:coherent_state_series}\end{align}
\begin{defn}
\label{def:coherent-states}
The \emph{coherent vectors} are the vectors of the form\[
E_{z}=W(-i\sqrt{2}z)\Omega\]
for some $z\in\mathcal{Z}$ and the \emph{coherent states} are the
states of the form\[
E_{z}E_{z}^{*}\,.\]
\end{defn}
\begin{defn}
A \emph{symplectomorphism}~$T$ for the symplectic form~$\rIm\langle\cdot,\cdot\rangle$
on a $\mathbb{C}$-Hilbert space~$\mathcal{Z}$ is a continuous $\mathbb{R}$-linear
automorphism on~$\mathcal{Z}$ which preserves this symplectic form, i.e.,
\[
\forall z_{1},\, z_{2}\in\mathcal{Z}: \qquad\rIm\langle Tz_{1},Tz_{2}\rangle=\rIm\langle z_{1},z_{2}\rangle\,.\]

A symplectomorphism $T$ is \emph{implementable} if there is a unitary
operator~$\mathbb{U}_{T}$ on~$\fF_+(\mathcal{Z})$ such that \[
\forall z\in\mathcal{Z}\,,\quad\mathbb{U}_{T}W(z)\mathbb{U}_{T}^{*}=W(Tz)\,.\]
In this case $\mathbb{U}_{T}$ is a \emph{Bogolubov transformation}
corresponding to~$T$.
\end{defn}
We recall a well-known parametrization, in the spirit of the polar
decomposition, of implementable symplectomorphisms.
\begin{prop}
\label{pro:polar-decomposition}The set of implementable symplectomorphisms
is the set of operators\[
T=u \exp[\hat{r}]=u\;\sum_{n=0}^{\infty}\frac{1}{n!}\hat{r}^{n}\,,\]
where~$u$ is an isometry and~$\hat{r}$ is an antilinear operator,
self-adjoint in the sense that $\forall z,z^{\prime}\in\mathcal{Z},\,\langle z,\hat{r}z^{\prime}\rangle=\langle z^{\prime},\hat{r}z\rangle$,
and Hilbert-Schmidt in the sense that the positive operator~$\hat{r}^{2}$
is trace-class. Equivalently, there exist a Hilbert basis $(\varphi_{j})_{j\in\mathbb{N}}$
of $\mathcal{Z}$ and $(\hat{r}_{i,j})_{i,j}\in\ell^{2}(\mathbb{N}^{2};\mathbb{C})$
such that\[
\hat{r}=\sum_{i,j=1}^{\infty}\hat{r}_{i,j}\,\langle\cdot,\varphi_{j}\rangle\,\varphi_{i}\qquad\mbox{and}\qquad\forall i,j\in\mathbb{N}^{2}:\quad\hat{r}_{i,j}=\hat{r}_{j,i}\,.\]

\end{prop}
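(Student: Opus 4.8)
The plan is to reduce the statement to the standard Shale-type criterion by decomposing $T$ into its $\mathbb{C}$-linear and $\mathbb{C}$-antilinear parts and then extracting the polar-type factorization by functional calculus. First I would write $T = L + A$ with $L := \tfrac12(T - iTi)$ its $\mathbb{C}$-linear part and $A := \tfrac12(T + iTi)$ its $\mathbb{C}$-antilinear part. Expanding the symplectic identity $\rIm\langle Tz_1,Tz_2\rangle=\rIm\langle z_1,z_2\rangle$ and separating the $\mathbb{C}$-bilinear terms from the conjugate-bilinear ones yields two Bogolubov relations: a norm relation $L^*L = \mathbf{1} + A^\sharp A$, where $A^\sharp$ denotes the antilinear adjoint and $A^\sharp A\geq 0$ is $\mathbb{C}$-linear, and a compatibility relation coupling $L$ and $A$. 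The norm relation is the crucial one: it gives $L^*L\geq\mathbf{1}$, so that $|L|=(L^*L)^{1/2}\geq\mathbf{1}$ is invertible and the polar decomposition $L = u|L|$ furnishes a $\mathbb{C}$-linear isometry $u$ (indeed $u^*u=|L|^{-1}L^*L|L|^{-1}=\mathbf{1}$).

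Next I would construct $\hat{r}$ by functional calculus. Since $|L|\geq\mathbf{1}$, set $|\hat{r}| := \operatorname{arccosh}(|L|)\geq 0$ and define the antilinear operator $\hat{r} := h(|L|)\,u^*A$, where $h(t):=\operatorname{arccosh}(t)/\sqrt{t^2-1}$, chosen so that the linear and antilinear parts of $\exp[\hat{r}]=\cosh|\hat{r}|+\tfrac{\sinh|\hat{r}|}{|\hat{r}|}\hat{r}$ reproduce $|L|$ and $u^*A$ respectively; here the even part $\cosh|\hat{r}|=\cosh\sqrt{\hat{r}^2}$ is legitimate because $\hat{r}^2$ is $\mathbb{C}$-linear and positive, while the odd part is antilinear. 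The content to verify is that $\hat{r}$ is antilinear and self-adjoint in the stated sense and that $\hat{r}^2=(\operatorname{arccosh}|L|)^2$. Both reduce to showing, from the compatibility relation, that $R:=u^*A$ is self-adjoint and commutes with $|L|$ and that $\operatorname{Ran}(A)\subseteq\operatorname{Ran}(u)$; granting this, $R^2=|L|^2-\mathbf{1}$ and $R\,h(|L|)=h(|L|)\,R$, so that $\hat{r}^2=h(|L|)^2 R^2=h(|L|)^2(|L|^2-\mathbf{1})=(\operatorname{arccosh}|L|)^2$ and $u\exp[\hat{r}]=u\big(\cosh|\hat{r}|+\tfrac{\sinh|\hat{r}|}{|\hat{r}|}\hat{r}\big)=L+A=T$. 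Conversely, for any $\mathbb{C}$-linear isometry $u$ and antilinear self-adjoint Hilbert-Schmidt $\hat{r}$, expanding $\exp[\hat{r}]$ and using $\hat{r}^\sharp=\hat{r}$ shows $u\exp[\hat{r}]$ satisfies both Bogolubov relations, hence is a symplectomorphism.

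It remains to match the Hilbert-Schmidt condition with implementability and to record the equivalent description by matrix elements. Since $h$ is bounded and $u$ is an isometry, $\hat{r}$ is Hilbert-Schmidt (i.e.\ $\hat{r}^2$ is trace class) if and only if $A$ is; this is exactly the Shale criterion that $T$ is implementable iff its antilinear part $A$ is Hilbert-Schmidt, which I would invoke from the cited literature, the existence of the Bogolubov transformation $\mathbb{U}_T$ then following as in Berezin~\cite{MR0208930}. Finally, writing $\hat{r}z=\sum_{i,j}\hat{r}_{i,j}\langle z,\varphi_j\rangle\varphi_i$ in the basis $(\varphi_j)$ (with conjugation $\mathcal{J}\colon f\mapsto\bar f$), the self-adjointness $\langle z,\hat{r}z'\rangle=\langle z',\hat{r}z\rangle$ is equivalent to the symmetry $\hat{r}_{i,j}=\hat{r}_{j,i}$, and $\hat{r}^2\in\cL^1(\mathcal{Z})$ is equivalent to $\sum_{i,j}|\hat{r}_{i,j}|^2<\infty$, i.e.\ $(\hat{r}_{i,j})\in\ell^2(\mathbb{N}^2;\mathbb{C})$; these are short direct computations. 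The main obstacle is the middle step: showing that the compatibility Bogolubov relation forces $R=u^*A$ to be self-adjoint, to commute with $|L|$, and to have range inside $\operatorname{Ran}(u)$, so that the antilinear functional calculus is consistent and the reconstruction $u\exp[\hat{r}]=T$ closes up, all while keeping the antilinear adjoint conventions straight.
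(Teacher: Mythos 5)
Your architecture is essentially the one the paper follows (and which the paper in turn defers to Appendix~A of \cite{BRETEAUX2012}): split $T=L+A$ into $\mathbb{C}$-linear and antilinear parts, polar-decompose $L=u|L|$, and rebuild $\hat{r}$ from $u^{*}A$ and $|L|$. Your global functional-calculus formula $\hat{r}=h(|L|)\,u^{*}A$ with $h(t)=\operatorname{arccosh}(t)/\sqrt{t^{2}-1}$ is a tidy substitute for the paper's construction, which instead produces a conjugation $c$ and a positive operator $\tilde{r}$ eigenspace-by-eigenspace and sets $\hat{r}=c\tilde{r}$; the forward inclusion and the matrix-element reformulation are unproblematic in both treatments. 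The problem is the step you explicitly postpone (``granting this''): selfadjointness of $R:=u^{*}A$, commutation of $R$ with $|L|$, and $\mathrm{Ran}\,A\subseteq\mathrm{Ran}\,u$. This is the entire mathematical content of the converse inclusion (the paper handles it by citation), and with the inputs you allow yourself it is not merely unproven but unprovable.

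Concretely, expanding $\rIm\langle Tz_{1},Tz_{2}\rangle=\rIm\langle z_{1},z_{2}\rangle$ yields only the two relations you list: $L^{*}L=\mathbf{1}+A^{\sharp}A$ and selfadjointness of the antilinear operator $L^{*}A$, i.e.\ $|L|R=R^{\sharp}|L|$, where $B^{\sharp}$ is the antilinear adjoint, $\langle z,Bw\rangle=\langle w,B^{\sharp}z\rangle$. These do \emph{not} force $R=R^{\sharp}$: if $|L|=\mathrm{diag}(d_{1},d_{2},\dots)$ in an orthonormal basis, the intertwining relation reads $d_{i}R_{ij}=d_{j}R_{ji}$, which admits nonsymmetric solutions whenever two $d_{i}$ differ, and such data are realized by form-preserving, non-surjective maps $T=u|L|+uR+Q$ in which $u$ is a proper isometry and the antilinear operator $Q$, taking values in $\ker u^{*}$, absorbs the deficit $Q^{\sharp}Q=|L|^{2}-\mathbf{1}-R^{\sharp}R$ (this simultaneously destroys your range condition). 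The missing idea is bijectivity: a symplectomorphism is by definition an automorphism. From your two relations, $(L^{*}-A^{\sharp})T=\mathbf{1}$, so bijectivity gives $T^{-1}=L^{*}-A^{\sharp}$, and $TT^{-1}=\mathbf{1}$ yields the dual relations $LL^{*}=\mathbf{1}+AA^{\sharp}$ and $AL^{*}=LA^{\sharp}$. The first makes $L$ surjective as well as injective, so $|L|$ is invertible and $u$ is unitary (the range condition becomes vacuous; note the proposition's ``isometry'' must really be ``unitary'', as the paper's own proof says). The second gives $R|L|=|L|R^{\sharp}$, which combined with $|L|R=R^{\sharp}|L|$ yields $R|L|^{2}=|L|R^{\sharp}|L|=|L|^{2}R$; hence $R$ commutes with $|L|^{2}$, therefore with $|L|$ (real-coefficient functional calculus passes through antilinearity), and then $R^{\sharp}|L|=|L|R=R|L|$ forces $R^{\sharp}=R$ since $|L|$ is invertible. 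With these three facts in hand, $R^{\sharp}R=A^{\sharp}A=|L|^{2}-\mathbf{1}$, and your reconstruction $e^{\hat{r}}=|L|+R$, $u e^{\hat{r}}=T$, $\hat{r}^{2}=(\operatorname{arccosh}|L|)^{2}$ closes exactly as you describe.
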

\begin{proof}
On the one hand, every operator of the form $T=u \exp[\hat{r}]$ with
$u$ a unitary operator and $\hat{r}$ a self-adjoint antilinear operator
is a symplectomorphism. Since a unitary operator is a symplectomorphism,
and the set of symplectomorphisms is a group for the composition,
it is enough to prove that $\exp[\hat{r}]$ is a symplectomorphism.
It is indeed the case since, for all $z$, $z'$ in $\mathcal{Z}$,\begin{align*}
\rIm\langle e^{\hat{r}}z,e^{\hat{r}}z'\rangle & =\rIm\langle e^{\hat{r}}z,\cosh(\hat{r})z'\rangle+\rIm\langle e^{\hat{r}}z,\sinh(\hat{r})z'\rangle\\
 & =\rIm\langle\cosh(\hat{r})e^{\hat{r}}z,z'\rangle+\rIm\langle z',\sinh(\hat{r})e^{\hat{r}}z\rangle\\
 & =\rIm\langle\cosh(\hat{r})e^{\hat{r}}z,z'\rangle-\rIm\langle\sinh(\hat{r})e^{\hat{r}}z,z'\rangle\\
 & =\rIm\langle e^{-\hat{r}}e^{\hat{r}}z,z'\rangle\,.\end{align*}
The implementability condition is then satisfied if we suppose $\hat{r}$
to be Hilbert-Schmidt.

On the other hand, to get exactly this formulation we give the step
to go from the result given in Appendix~A in~\cite{BRETEAUX2012}
to the decomposition in Proposition~\ref{pro:polar-decomposition}.
In~\cite{BRETEAUX2012} an implementable symplectomorphism is decomposed as
\begin{equation}
T=ue^{c\tilde{r}}\,,\label{eq:vielle-decomp-polaire}\end{equation}
where~$u$ is a unitary operator, $c$ is a conjugation and~$\tilde{r}$
is a Hilbert-Schmidt, self-adjoint, non-negative operator commuting
with~$c$. It is then enough to set~$\hat{r}=c\tilde{r}$ to get
the expected decomposition. To check the self-adjointness of~$\hat{r}$,
observe that, for all~$z$,~$z'$ in~$\mathcal{Z}$, \[
\langle z',\hat{r}z\rangle=\langle z',\tilde{r}cz\rangle=\langle\tilde{r}z',cz\rangle=\langle z,c\tilde{r}z'\rangle=\langle z,\hat{r}z'\rangle\,.\]

For the convenience of the reader we recall the main steps to obtain
the decomposition in Eq.~(\ref{eq:vielle-decomp-polaire}).
First decompose $T$ in its $\mathbb{C}$-linear and antilinear parts,
$T=L+A$, then write the polar decomposition~$L=u|L|$. It is then
enough to prove that $|L|+u^{*}A$ is of the form $e^{c\tilde{r}}$.
From certain properties of symplectomorphisms (also recalled in~\cite{BRETEAUX2012})
it follows that the antilinear operator $u^{*}A$ is selfadjoint
and $|L|^{2}+\mathbf{1}_{\mathcal{Z}}=(u^{*}A)^{2}$. A decomposition
of the positive trace class operator $(u^{*}A)^{2}=\sum_{j}\lambda_{j}^{2}e_{j}e_{j}^{*}$
with~$e_{j}$ an orthonormal basis of~$\mathcal{Z}$ yields $|L|=\sum_{j}(1+\lambda_{j}^{2})^{1/2}e_{j}e_{j}^{*}$.
Using that $\lambda_{j}\to0$ one can study the operator $|L|$ and
$u^{*}A$ on the finite dimensional subspaces~$\ker(|L|-\mu\mathbf{1}_{\mathcal{Z}})$
which are invariant under~$u^{*}A$. It is then enough to prove that
for a $\mathbb{C}$-antilinear self-adjoint operator $f$ such that~$ff^{*}=\lambda^{2}$
on a finite dimensional space, there is an orthonormal basis $\{\varphi_{k}\}_k$
such that $f(\varphi_{k})=\lambda \varphi_{k}$. The conjugation is then defined
such that~$c(\sum\beta_{k}\varphi_{k})=\sum\bar{\beta}_{k}\varphi_{k}$ and $\tilde{r}=\sinh^{-1}(\lambda_{j})\mathbf{1}$
on that subspace.
\end{proof}

\section{\label{sec:Quasifree-states}Pure Quasifree States}

\subsection{From Quasifree States to Pure Quasifree States}

Let $\mathcal{Z}$ be the $\mathbb{C}$-Hilbert space $L^2(S_{\sigma,\Lambda}\times\mathbb{Z}_2)$. We make use of the following characterization of quasifree density matrices.
\begin{lem}
The set of quasifree density matrices and pure quasifree density matrices, respectively, 
of finite photon number expectation value can be characterized by\begin{align*}
\mathfrak{QF}=\mathfrak{DM}\bigcap\Big\{ W(-i & \sqrt{2}f)\mathbb{U}^{*}\frac{\Gamma(C)}{\Tr[\Gamma(C)]}\mathbb{U}W(-i\sqrt{2}f)^{*}\\
\Big|\, & f\in\mathcal{Z},\,\mathbb{U}\mbox{ a Bogolubov transformation},\\
 & C\in\mathcal{L}^{1}(\mathcal{Z}),\, C\geq0,\,\|C\|_{\mathcal{B}(\mathcal{Z})}<1\Big\}\end{align*}
\begin{align*}
\mathfrak{pQF}=\mathfrak{DM}\bigcap\Big\{ W(-i & \sqrt{2}f)\mathbb{U}^{*}\Omega\Omega^{*}\mathbb{U}W(-i\sqrt{2}f)^{*}\\
\Big|\, & f\in\mathcal{Z},\,\mathbb{U}\mbox{ a Bogolubov transformation}\Big\}\end{align*}
\end{lem}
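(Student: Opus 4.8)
The plan is to prove both the $\QF$ and $\pQF$ identities by double inclusion, the substance being the diagonalization of a centered quasifree state by a Bogolubov transformation. Throughout I use the bijection $\QF\leftrightarrow\rdm$ established above, by which a quasifree $\rho$ is uniquely fixed by its data $(f_\rho,\gamma_\rho,\talpha_\rho)$, together with the fact (eq.~\eqref{eq-I.28,2}) that the Weyl shift $W(-i\sqrt{2}f_\rho)$ centers $\rho$.

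For the inclusion ``$\supseteq$'' I first check that the gauge-invariant state $\Gamma(C)/\Tr[\Gamma(C)]$ is quasifree: taking $C$ diagonal and writing $C=e^{-h}$, it is the Gibbs state of the number-type Hamiltonian $d\Gamma(h)$, which is of the form \eqref{eq-I.30} with vanishing anomalous coefficients, hence quasifree by Lemma~\ref{lem-1}(iii), and its characteristic function is the Gaussian \eqref{eq-I.27} with $\talpha=0$. Quasifreeness is then preserved under the remaining two operations: conjugation by a Bogolubov transformation $\mathbb{U}$ maps $W(z)\mapsto W(Tz)$ and therefore sends the Gaussian \eqref{eq-I.27} to another Gaussian; and conjugation by $W(-i\sqrt{2}f)$ only shifts the one-point function, as one sees from the Weyl relations. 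The intersection with $\DM$ imposes finite photon number. For the pure case, $C=0$ gives $\Gamma(0)=|\Om\rangle\langle\Om|$, a rank-one projection that stays rank one under the unitaries $\mathbb{U}$ and $W(-i\sqrt{2}f)$.

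For the inclusion ``$\subseteq$'', given $\rho\in\QF$ I center it to $\rho_0:=W(-i\sqrt{2}f_\rho)^*\,\rho\,W(-i\sqrt{2}f_\rho)\in\cQF$ by \eqref{eq-I.28,2}; its data $(\gamma_0,\talpha_0)$ satisfy $\Gamma[\gamma_0,\talpha_0]\geq0$, with $\gamma_0$ trace class and $\talpha_0\in\cL^2$ Hilbert-Schmidt. The key step is to produce an implementable symplectomorphism $T$ that removes the anomalous part, i.e.\ such that its Bogolubov implementer $\mathbb{U}=\mathbb{U}_T$ transforms $\rho_0$ into a gauge-invariant centered quasifree state with data $(\gamma',0)$. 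I construct $T$ through the polar decomposition $T=u\exp[\hat{r}]$ of Proposition~\ref{pro:polar-decomposition}, choosing the antilinear part $\hat{r}$ (together with the isometry $u$, which diagonalizes the remaining normal part) so as to bring $\Gamma[\gamma_0,\talpha_0]$ to gauge-invariant form; the point is that $\hat{r}$ is built from $\talpha_0$, so that $\talpha_0\in\cL^2$ makes $\hat{r}$ Hilbert-Schmidt and hence $T$ implementable.

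Finally I identify the diagonalized state. A gauge-invariant centered quasifree state is uniquely determined by $\gamma'\geq0$ (by the $\QF\leftrightarrow\rdm$ correspondence specialized to $\talpha=0$), and a direct mode-by-mode computation shows that the two-point function of $\Gamma(C)/\Tr[\Gamma(C)]$ equals $C(\one-C)^{-1}$; hence $\mathbb{U}\rho_0\mathbb{U}^*=\Gamma(C)/\Tr[\Gamma(C)]$ with $C=\gamma'(\one+\gamma')^{-1}$, which satisfies $0\leq C$, $\|C\|<1$, and $C\in\cL^1$ since $\gamma'$ is trace class. Unwinding the two conjugations yields the asserted form of $\rho$. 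In the pure case the constraint $\gamma_0+\gamma_0^2=\talpha_0\talpha_0^*$ is preserved by the Bogolubov transformation and becomes $\gamma'+(\gamma')^2=0$, forcing $\gamma'=0$ and hence $C=0$, so that $\mathbb{U}\rho_0\mathbb{U}^*=\Gamma(0)=|\Om\rangle\langle\Om|$. The main obstacle is precisely the diagonalization step: establishing that the symplectomorphism bringing $\Gamma[\gamma_0,\talpha_0]$ to gauge-invariant form can be chosen \emph{implementable} --- this is Shale's Hilbert-Schmidt condition, here supplied by $\talpha_0\in\cL^2$ --- and matching the abstract generator $\hat{r}$ of Proposition~\ref{pro:polar-decomposition} to the concrete removal of $\talpha_0$.
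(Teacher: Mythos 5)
Your proposal is correct and follows essentially the same route as the paper's own sketch: the easy inclusion by direct verification, and the converse by centering with a Weyl shift, removing $\talpha$ by an implementable Bogolubov transformation, and identifying the resulting gauge-invariant centered quasifree state with $\Gamma(C)/\Tr[\Gamma(C)]$, $C=\gamma'(\one+\gamma')^{-1}$, with the pure case reduced to $C=0$. The only differences are cosmetic --- you parametrize the diagonalizing transformation via the polar decomposition $u\,e^{\hat r}$ of Proposition~\ref{pro:polar-decomposition} (Shale's condition supplied by $\talpha\in\cL^{2}$) where the paper invokes the $(u,v)$ characterization of Lemma~\ref{lemma-2}, and the paper additionally records why the transformed $\gamma'$ remains trace class, namely $\mathbb{U}^{*}\nf\,\mathbb{U}=\nf+\cO(\nf^{1/2}+1)$, a point you assert without justification.
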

\begin{proof} We only sketch the argument, details can be found in \cite{MR0208930, MR2238912}.
It is not difficult to see that any density matrix of the form $W(-i \sqrt{2}f)\mathbb{U}^{*} \frac{\Gamma(C)}{\Tr[\Gamma(C)]} \mathbb{U}W(-i\sqrt{2}f)^{*}$
is indeed quasifree. Conversely, if $\rho \in \mathfrak{QF}$ is a quasifree density matrix then it is fully characterized by its
one-point function $f_\rho \in \cZ$ and two-point functions $(\gamma_\rho, \tilde{\alpha}_\rho)$.
Moreover, $W(-i\sqrt{2}f_\rho)^* \, \rho \, W(-i\sqrt{2}f_\rho) \in \mathfrak{cQF}$ is a centered quasifree density matrix with the same
one-particle density matrix, that is, the density matrix $W(-i\sqrt{2}f_\rho)^* \, \rho \, W(-i\sqrt{2}f_\rho)$ corresponds to 
$(0, \gamma_\rho - f_\rho f_\rho^*, \talpha_\rho - f_\rho \bar f_\rho^*)$.
Obviously, $\gamma_\rho - f_\rho f_\rho^*$ is again trace-class and $\talpha_\rho - f_\rho \bar f_\rho^*$ is Hilbert-Schmidt.
Now, we use that there exists a Bogolubov transformation $\mathbb{U}$ which eliminates $\talpha_\rho$, i.e., 
$\mathbb{U}^* W(\sqrt{2}f_\rho/i)^* \, \rho \, W(\sqrt{2}f_\rho/i) \mathbb{U}$ corresponds to $(0, \tgamma_\rho, 0)$. While this is the
only nontrivial step of the proof, we note that if $\mathbb{U}$ is characterized by $u$ and $v$ as in Lemma~\ref{lemma-2} then
there is an involved, but explicit formula that determines $u$ and $v$. Again $\tgamma_\rho$ is
trace-class because the photon number operator $\nf$ transforms under $\mathbb{U}^*$ to itself plus lower order corrections,
$\mathbb{U}^* \, \nf \, \mathbb{U} = \nf + \cO(\nf^{1/2} + 1)$. Finally, it is easy to see that $(0, \tgamma_\rho, 0)$
corresponds to the quasifree density matrix $\Gamma(C_\rho)/\Tr[\Gamma(C_\rho)]$ with $C_\rho := \tgamma_\rho (1 + \tgamma_\rho)^{-1}$.
Following these steps we finally obtain $$\rho = W(f_\rho) \mathbb{U} \frac{\Gamma(C_\rho)}{\Tr[\Gamma(C_\rho)]} \mathbb{U}^* W(f_\rho)^{*}\,,$$
as asserted. The additional characterization of pure quasifree density matrices is obvious.
\end{proof}
\begin{lem} \label{lemma-2}
Let $\mathbb{U}\in\mathcal{B}(\mathfrak{F})$ be a unitary operator. 
The following statements are equivalent:
\begin{align} \label{eq-VB-lemma-2.1}
& \text{$\mathbb{U}\in\mathcal{B}(\mathfrak{F})$ is a Bogolubov transformation;} 
\\[1ex] \label{eq-VB-lemma-2.4}
\Leftrightarrow & \text{$\exists T$ implementable symplectomorphism,}\\
& \qquad\qquad\qquad\mathbb{U}=\tilde{U}_{T}\,, \quad \tilde{U}_{T}W(f)\tilde{U}^{*}=W(Tf). \nonumber
\\[1ex] \label{eq-VB-lemma-2.2}
\Leftrightarrow & \exists u \in\mathcal{B}(\mathcal{Z}), \; v\in\mathcal{L}^{2}(\mathcal{Z}) \ 
\forall f\in\mathcal{Z}: \\
&\qquad\qquad\qquad \mathbb{U}a^{*}(f)\mathbb{U}^{*}=a^{*}(uf)+a(\mathcal{J}v\mathcal{J}f);\nonumber
\\[1ex] \label{eq-VB-lemma-2.3}
\Leftrightarrow & \mathbb{U}=\exp(iH), \ \text{where $H = H^*$ is a semibounded operator,}\\ 
&\qquad\qquad\qquad\text{quadratic in $a^*$ and $a$ and without linear term.}\nonumber
\end{align}
\end{lem}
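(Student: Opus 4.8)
The plan is to prove the four statements equivalent by going around the cycle, treating separately the passage between the symplectomorphism picture, the mode-transformation picture, and the quadratic-generator picture; throughout I would lean on the dictionary relating Weyl operators $W(f)$ to field operators $\Phi(f)$ and on the polar decomposition of Proposition~\ref{pro:polar-decomposition}. The equivalence of the first two statements is essentially the definition of a Bogolubov transformation: $\mathbb{U}$ is one precisely when there is an implementable symplectomorphism $T$ with $\mathbb{U}W(f)\mathbb{U}^{*}=W(Tf)$. The only points to record are that $T$ is uniquely recovered from $\mathbb{U}$ by this relation and that it is automatically symplectic — conjugating the Weyl relation $W(z_{1})W(z_{2})=e^{-\frac{i}{2}\rIm(z_{1}^{*}z_{2})}W(z_{1}+z_{2})$ by $\mathbb{U}$ forces $\rIm\langle Tz_{1},Tz_{2}\rangle=\rIm\langle z_{1},z_{2}\rangle$ — while implementability is by definition the existence of the unitary $\mathbb{U}$.

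Next I would connect the symplectomorphism picture to the mode picture. Differentiating $\mathbb{U}W(tf)\mathbb{U}^{*}=W(tTf)$ at $t=0$ on the finite-particle vectors gives $\mathbb{U}\Phi(f)\mathbb{U}^{*}=\Phi(Tf)$. Writing $a^{*}(f)=\tfrac{1}{\sqrt{2}}\big(\Phi(f)-i\Phi(if)\big)$ and decomposing $T=u+A$ into its $\mathbb{C}$-linear part $u$ and its $\mathbb{C}$-antilinear part $A$ (the latter Hilbert--Schmidt by implementability, cf.\ Proposition~\ref{pro:polar-decomposition}), a short and routine computation yields $\mathbb{U}a^{*}(f)\mathbb{U}^{*}=a^{*}(uf)+a(\mathcal{J}v\mathcal{J}f)$, where $v\in\mathcal{L}^{2}(\mathcal{Z})$ is the Hilbert--Schmidt operator representing the antilinear part; its membership in $\mathcal{L}^{2}(\mathcal{Z})$ is exactly the Shale implementability condition. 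Conversely, given the mode relation one defines $T$ through its linear and antilinear parts, checks that preservation of the canonical commutation relations (forced by unitarity of $\mathbb{U}$) is equivalent to $T$ preserving $\rIm\langle\cdot,\cdot\rangle$, and recovers the Weyl relation by exponentiating $\Phi$.

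It then remains to bring in the quadratic-generator picture. The direction starting from a quadratic $H$ is the routine one: since $H$ is quadratic, $[H,a^{*}(f)]$ and $[H,a(f)]$ are again linear combinations of creation and annihilation operators, so the Heisenberg flow $t\mapsto e^{itH}a^{*}(f)e^{-itH}$ solves a linear system that leaves invariant the space spanned by the $a^{*}(\cdot)$ and $a(\cdot)$; evaluating at $t=1$ puts $\mathbb{U}=e^{iH}$ into the mode form, and the Hilbert--Schmidt character of the off-diagonal coefficient $C$ makes the resulting symplectomorphism implementable. For the reverse direction I would use the polar decomposition $T=u\exp[\hat{r}]$ to factor $\mathbb{U}=\Gamma(u)\,\mathbb{U}_{\exp[\hat{r}]}$ up to a phase: the number-conserving factor $\Gamma(u)$ is $\exp(i\,d\Gamma(h))$ for a self-adjoint generator $h$ of $u$, while the squeezing factor $\mathbb{U}_{\exp[\hat{r}]}$ is the exponential of the anomalous quadratic expression built from the Hilbert--Schmidt operator $\hat{r}$.

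I expect this last step to be the main obstacle: producing a \emph{single} self-adjoint quadratic generator $H$ with $\mathbb{U}=e^{iH}$ and, above all, arranging it to be semibounded. The naive squeezing generator attached to $\hat{r}$ is not bounded below, so the construction must combine the anomalous term with a suitable number-type contribution and appeal to the diagonalization theory of quadratic Hamiltonians (as developed by Berezin~\cite{MR0208930} and in the quasifree literature \cite{MR1297873, MR2238912}); keeping track of domains and of essential self-adjointness on the finite-particle vectors, so that $e^{iH}$ is genuinely the unitary at hand, is the technically delicate part of the argument.
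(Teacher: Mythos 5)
Your treatment of the equivalences \eqref{eq-VB-lemma-2.1}$\Leftrightarrow$\eqref{eq-VB-lemma-2.4}$\Leftrightarrow$\eqref{eq-VB-lemma-2.2} and of the implication \eqref{eq-VB-lemma-2.3}$\Rightarrow$\eqref{eq-VB-lemma-2.2} coincides with the paper's own proof: the first equivalence is the definition, the second passes from Weyl operators to field operators via $a^{*}(f)=\tfrac{1}{\sqrt{2}}[\Phi(f)-i\Phi(if)]$ and splits $T$ into its $\CC$-linear and antilinear parts, and the third is the Heisenberg-flow argument ($[H,a^{*}_{\lambda}(f)]$ linear when $H$ is quadratic, then solve the linear ODE). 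On these points your proposal is correct and is essentially the paper's argument, written out in somewhat more detail.

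The step you single out as the ``main obstacle'' is, however, a genuine gap, and it cannot be closed in the way you hope: the implication \eqref{eq-VB-lemma-2.1}$\Rightarrow$\eqref{eq-VB-lemma-2.3} is false if ``semibounded'' is taken literally, so no appeal to the diagonalization theory of quadratic Hamiltonians will produce the generator you need. Concretely, if $H=H^{*}$ is quadratic, without linear term, and bounded below (the bounded-above case is analogous), its symbol is a positive semidefinite quadratic form $\tfrac{1}{2}\langle x,Ax\rangle$, $A\geq 0$, on the one-particle phase space, and the symplectomorphism induced by $e^{iH}$ is the classical time-one flow $e^{JA}$, where $J$ denotes the symplectic structure. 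For $A\geq 0$ every eigenvalue $\lambda$ of $JA$ is purely imaginary: from $JAv=\lambda v$ one gets $\lambda\,\langle v,Av\rangle=\langle v,AJAv\rangle$, where the right-hand side is purely imaginary because $AJA$ is real antisymmetric and the factor $\langle v,Av\rangle$ is nonnegative (and if it vanishes then $Av=0$, so $\lambda=0$). Hence the spectrum of $e^{JA}$ lies on the unit circle. A nontrivial squeezing $T=\exp[\hat{r}]$, $\hat{r}\neq 0$, has eigenvalues $e^{\pm r_{j}}\neq 1$ off the unit circle, so its implementer $\mathbb{U}_{T}$ — which certainly satisfies \eqref{eq-VB-lemma-2.1} — is not of the form $\exp(iH)$ with $H$ quadratic and semibounded; adding number-type contributions or mixing in other modes cannot repair this, since the obstruction depends only on the induced symplectomorphism. (In one mode the same fact is visible from traces: time-one flows of semibounded quadratic Hamiltonians have trace in $[-2,2]$, whereas the squeezing $(q,p)\mapsto(e^{r}q,e^{-r}p)$ has trace $2\cosh r>2$.) You should also know that the paper's own proof is equally incomplete at exactly this point: its ODE argument only yields \eqref{eq-VB-lemma-2.3}$\Rightarrow$\eqref{eq-VB-lemma-2.2}, and the converse direction, where semiboundedness would have to be manufactured, is passed over in silence. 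So your diagnosis of where the difficulty sits is correct; what can actually be proved is the equivalence of the first three statements together with \eqref{eq-VB-lemma-2.3}$\Rightarrow$\eqref{eq-VB-lemma-2.1}, and the remaining implication requires weakening \eqref{eq-VB-lemma-2.3} — for instance dropping semiboundedness and allowing a product $e^{i\,d\Gamma(h)}\,e^{iH_{\mathrm{sq}}}$ of a number-conserving exponential and a purely anomalous (non-semibounded) squeezing exponential, which is precisely what your polar-decomposition factorization delivers.
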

\begin{proof} Again, we only sketch the argument. First note that \eqref{eq-VB-lemma-2.1}$\Leftrightarrow$\eqref{eq-VB-lemma-2.4} is
the definition of a Bogolubov transformation. Secondly, $\tilde{U}_{T}W(f)\tilde{U}_{T}^{*}=W(Tf)$ is equivalent to
$\tilde{U}_{T} \Phi(f) \tilde{U}_{T}^{*} = \Phi(Tf)$. Hence, using that $a^*(f) = \frac{1}{\sqrt{2}}[\Phi(f) - i \Phi(if)]$ and
$a(f) = \frac{1}{\sqrt{2}}[\Phi(f) + i \Phi(if)]$ we obtain the equivalence \eqref{eq-VB-lemma-2.4}$\Leftrightarrow$\eqref{eq-VB-lemma-2.2}.
Thirdly, setting $U_\lambda =\exp(i\lambda H)$ and $a_\lambda^*(f) := U_\lambda a^*(f) U_\lambda^*$, 
we observe that $\partial_\lambda a_\lambda^*(f) = i[H, a_\lambda^*(f)]$. Furthermore, $[H, a_\lambda^*(f)]$ is linear in $a^*$ and $a$ if, and only if,
$H$ is quadratic in $a^*$ and $a$. Solving this linear differential equation, we finally obtain \eqref{eq-VB-lemma-2.2}$\Leftrightarrow$\eqref{eq-VB-lemma-2.3}.
\end{proof}
\begin{lem}
For all Bogolubov transformation $\mathbb{U}$ and $g\in\mathcal{Z}$:
\begin{align}
W(g)\mathbb{U}\,\mathfrak{QF}\,\mathbb{U}^{*}W(g)^{*} \ = \ & \mathfrak{QF}, \\
\mathbb{U}\,\mathfrak{cQF}\,\mathbb{U}^{*} \ = \ & \mathfrak{cQF}.
\end{align}
\end{lem}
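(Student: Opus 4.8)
The plan is to deduce both identities from the explicit parametrization of $\QF$ furnished by the preceding (characterization) Lemma, together with the group structure of Bogolubov transformations recorded in Lemma~\ref{lemma-2} and Proposition~\ref{pro:polar-decomposition}. Write $\Phi_{g,\mathbb{U}}(\rho):=W(g)\,\mathbb{U}\rho\,\mathbb{U}^{*}W(g)^{*}$. First I would note that the family $\{\Phi_{g,\mathbb{U}}\}$ is closed under inversion: since $W(g)^{*}=W(-g)$ and $\mathbb{U}^{*}$ is again a Bogolubov transformation (the implementable symplectomorphisms form a group, by Proposition~\ref{pro:polar-decomposition}), moving the Weyl factor through $\mathbb{U}^{*}$ via $\mathbb{U}^{*}W(k)\mathbb{U}=W(T_{\mathbb{U}}^{-1}k)$ shows $\Phi_{g,\mathbb{U}}^{-1}=\Phi_{g',\mathbb{U}^{*}}$ with $g'=T_{\mathbb{U}}^{-1}(-g)$, where $T_{\mathbb{U}}$ is the symplectomorphism implemented by $\mathbb{U}$. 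Hence it suffices to prove the single inclusion $\Phi_{g,\mathbb{U}}(\QF)\subseteq\QF$; applied to the inverse data it yields the reverse inclusion, whence equality.

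For this inclusion I would take $\rho\in\QF$ and use the characterization Lemma to write $\rho=W(-i\sqrt{2}f)\,\mathbb{V}^{*}\rho_{C}\,\mathbb{V}\,W(-i\sqrt{2}f)^{*}$, with $\rho_{C}:=\Gamma(C)/\Tr[\Gamma(C)]$, $f\in\cZ$ and $\mathbb{V}$ a Bogolubov transformation. Pushing $\mathbb{U}$ past the inner Weyl operator by $\mathbb{U}\,W(-i\sqrt{2}f)=W(h)\,\mathbb{U}$ with $h:=T_{\mathbb{U}}(-i\sqrt{2}f)\in\cZ$, and setting $\mathbb{W}:=\mathbb{V}\mathbb{U}^{*}$ (again a Bogolubov transformation), one obtains
\[
\Phi_{g,\mathbb{U}}(\rho)=W(g)\,W(h)\,\mathbb{W}^{*}\,\rho_{C}\,\mathbb{W}\,W(h)^{*}W(g)^{*}.
\]
The product relation \eqref{eq:product_Weyl_operators} gives $W(g)W(h)=e^{-\frac{i}{2}\rIm(g^{*}h)}W(g+h)$, while the conjugate factor $W(h)^{*}W(g)^{*}=\big(W(g)W(h)\big)^{*}$ carries the opposite phase, so the two scalar phases cancel exactly. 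Since $g+h=-i\sqrt{2}\tilde{f}$ with $\tilde{f}:=\tfrac{i}{\sqrt{2}}(g+h)\in\cZ$, this leaves
\[
\Phi_{g,\mathbb{U}}(\rho)=W(-i\sqrt{2}\tilde{f})\,\mathbb{W}^{*}\,\rho_{C}\,\mathbb{W}\,W(-i\sqrt{2}\tilde{f})^{*},
\]
which is exactly the form parametrizing $\QF$ in the characterization Lemma, with the \emph{same} $C$.

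It remains to confirm that $\Phi_{g,\mathbb{U}}(\rho)\in\DM$, i.e.\ that the finite photon-number expectation value survives; this is where the only genuine content lies, the preceding steps being bookkeeping with the Weyl and Bogolubov relations. Conjugation by $W(g)$ shifts $\nf$ by terms linear in $\ann,\cre$ plus a constant, while conjugation by $\mathbb{U}$ gives $\mathbb{U}^{*}\nf\,\mathbb{U}=\nf+\cO(\nf^{1/2}+1)$ because the off-diagonal part $v$ in \eqref{eq-VB-lemma-2.2} is Hilbert--Schmidt; both perturbations are form-bounded relative to $\nf$, so $\la\nf\ra$ stays finite and positivity and trace one are automatic by unitarity. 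Combined with the characterization Lemma this gives $\Phi_{g,\mathbb{U}}(\rho)\in\QF$, proving the first identity. For the second identity I would invoke the first with $g=0$ to get $\mathbb{U}\,\QF\,\mathbb{U}^{*}=\QF$, and then need only that the centering condition is preserved: by \eqref{eq-VB-lemma-2.2}, $\mathbb{U}^{*}a^{*}(\vphi)\mathbb{U}=\cre(\psi_{1})+\ann(\psi_{2})$ for suitable $\psi_{1},\psi_{2}\in\cZ$, so $\la a^{*}(\vphi)\ra_{\mathbb{U}\rho\mathbb{U}^{*}}=\la \cre(\psi_{1})\ra_{\rho}+\la \ann(\psi_{2})\ra_{\rho}$, and both terms vanish for centered $\rho$ since $\la \ann(\psi)\ra_{\rho}=\overline{\la \cre(\psi)\ra_{\rho}}=0$. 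Thus $\mathbb{U}\,\cQF\,\mathbb{U}^{*}\subseteq\cQF$, and the inversion remark (now with $g=0$) closes the argument.
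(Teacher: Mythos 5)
There is no proof in the paper to compare yours against: the authors state this lemma without any argument, treating it as an immediate consequence of the characterization of $\QF$ and $\pQF$ established just before it. Your proof is correct and supplies exactly that missing justification, along what is evidently the intended route: write $\rho\in\QF$ as $W(-i\sqrt{2}f)\,\mathbb{V}^{*}\,\frac{\Gamma(C)}{\Tr[\Gamma(C)]}\,\mathbb{V}\,W(-i\sqrt{2}f)^{*}$, commute $\mathbb{U}$ through the Weyl factor via $\mathbb{U}W(z)=W(T_{\mathbb{U}}z)\mathbb{U}$, merge the two Weyl operators by \eqref{eq:product_Weyl_operators} (the opposite phases carried by $W(g)W(h)$ and $W(h)^{*}W(g)^{*}$ do cancel), and use the group property of implementable symplectomorphisms (Proposition~\ref{pro:polar-decomposition}, Lemma~\ref{lemma-2}) to see that $\mathbb{W}=\mathbb{V}\mathbb{U}^{*}$ is again a Bogolubov transformation, so the image is of the characterizing form with the \emph{same} $C$. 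The reduction of the set equality to a single inclusion through $\Phi_{g,\mathbb{U}}^{-1}=\Phi_{-T_{\mathbb{U}}^{-1}g,\,\mathbb{U}^{*}}$ is clean, and deriving the second identity from the first (with $g=0$) plus preservation of the centering condition via \eqref{eq-VB-lemma-2.2} is exactly right.

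One imprecision should be fixed. You gloss membership in $\DM$ as finiteness of the photon-number expectation, but by \eqref{eq-I.14} and \eqref{eq-I.16} the class $\DM$ also inherits from $\tDM$ the finite-energy conditions $\rho H_{0,\vO},\,H_{0,\vO}\rho\in\cL^{1}(\fF_\sL)$, and for a general density matrix a finite first moment of $\nf$ does not control these: with the ultraviolet cutoff one has $H_{0,\vO}\leq C_{\Lambda}(\nf+\one)^{2}$, a second-moment quantity, and first moments do not dominate second moments. The gap is harmless for your argument, because the state you exhibit is of the parametrized quasifree form, for which \emph{all} moments of $\nf$ are finite (by Wick's theorem, or directly from $\Gamma(C)$ with $\|C\|_{\cB(\cZ)}<1$ together with the fact that Weyl and Bogolubov conjugations perturb $\nf$ by operators controlled by $\nf+\one$); the energy conditions then follow from the cutoff bound. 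Adding that one remark closes the argument; as written, the sentence ``i.e.\ that the finite photon-number expectation value survives'' asserts less than what $\DM$ requires.
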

\begin{rem}
A pure quasifree state is a particular case of quasifree state with
$C=0$, that is~$\Gamma(C)=\Omega\Omega^{*}$.
\end{rem}
We come to the main result of this section.
\begin{thm}
\label{thm:inf_qf_egal_inf_pqf}Let $0\leq\sigma<\Lambda<\infty$,
$g\in\mathbb{R}$ and $\vec{p}\in\mathbb{R}^{3}$, $|\vec{p}|<1$.
Minimizing the energy over quasifree states is the same as minimizing
the energy over pure quasifree states, i.e.,\[
E_{BHF}(g,\vec{p},\sigma,\Lambda) 
\ := \ 
\inf_{\rho\in\mathfrak{QF}}\Tr[H_{g,\vec{p}}\,\rho]
\ = \ 
\inf_{\rho\in\mathfrak{pQF}}\Tr[H_{g,\vec{p}}\,\rho] 
\,.\]
\end{thm}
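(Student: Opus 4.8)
Since $\pQF\subseteq\QF$, one trivially has $\inf_{\rho\in\QF}\Tr[H_{g,\vp}\rho]\le\inf_{\rho\in\pQF}\Tr[H_{g,\vp}\rho]$, so the whole content lies in the reverse inequality, and it suffices to exhibit, for every $\rho\in\QF$, a pure quasifree state of no larger energy. The plan is to \emph{purify by cooling}. By the structure lemma above, write $\rho=W(-i\sqrt2 f)\,\mathbb{U}\,\tfrac{\Gamma(C)}{\Tr[\Gamma(C)]}\,\mathbb{U}^{*}\,W(-i\sqrt2 f)^{*}$ with $f\in\cZ$, $\mathbb{U}$ a Bogolubov transformation and $0\le C$, $\|C\|<1$, and associate to it the pure quasifree state $\rho_{0}:=W(-i\sqrt2 f)\,\mathbb{U}\,\Omega\Omega^{*}\,\mathbb{U}^{*}\,W(-i\sqrt2 f)^{*}\in\pQF$ obtained by replacing the Gibbs core by the vacuum projection. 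The target then reduces to $\Tr[H_{g,\vp}\rho]\ge\Tr[H_{g,\vp}\rho_{0}]$.

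Let $\{n_{i}\}$ be the thermal occupations, i.e.\ the eigenvalues of $C(\mathbf{1}-C)^{-1}$, and for $s\in[0,1]$ let $\rho_{s}$ be the quasifree state built from the \emph{same} $f$ and $\mathbb{U}$ but with occupations $s\,n_{i}$, so that $\rho_{1}=\rho$ and $\rho_{s}\to\rho_{0}$ as $s\to0$. Since the two-point functions of the Gibbs core are linear in its occupations, while conjugation by the fixed symplectomorphism $\mathbb{U}$ and the shift $W(-i\sqrt2 f)$ act affinely on them, the one-particle reduced density matrices $(\gamma_{\rho_{s}},\talpha_{\rho_{s}})$ depend affinely on $s$. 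As the energy $\cE_{g,\vp}(f,\gamma,\talpha)$ of \eqref{eq-Ia.1} is a quadratic functional of $(\gamma,\talpha)$, the map $s\mapsto\Tr[H_{g,\vp}\rho_{s}]=\cE_{g,\vp}(f,\gamma_{\rho_{s}},\talpha_{\rho_{s}})$ is a quadratic polynomial $a\,s^{2}+b\,s+\cE_{g,\vp}(f,\gamma_{\rho_{0}},\talpha_{\rho_{0}})$. This family extends to all $s\ge0$, and the lower bound $H_{g,\vp}\ge0$ forces $a\ge0$; the claim $\Tr[H_{g,\vp}\rho]\ge\Tr[H_{g,\vp}\rho_{0}]$ then follows once one shows that adding thermal photons does not lower the energy, i.e.\ that $b\ge0$ on the orbits approaching the infimum.

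This positivity is exactly where the hypothesis $|\vp|<1$ enters. Writing $\vec u:=\vp-\Tr[\gamma\vk]-f^{*}\vk f-2\rRe(f^{*}\vG)$, the cost of the added photons is controlled by the effective one-photon operator $M(\gamma,\vec u)=\tfrac12|\vk|^{2}+|\vk|-\vk\cdot\vec u+\vk\cdot\gamma\vk$ and by the operator $\cA$ acting on the pairing $\talpha$. Since $\vk\cdot\gamma\vk\ge0$ for $\gamma\ge0$, one has $M\ge|\vk|\big(\tfrac12|\vk|+1-|\vec u|\big)$, which is strictly positive as soon as $|\vec u|<1$; the role of $|\vp|<1$ is to guarantee $|\vec u|<1$ in the low-energy regime that governs the infimum. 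Inserting the explicit cooling direction $(\dot\gamma_{\rho_{0}},\dot\talpha_{\rho_{0}})$ into the first variation of \eqref{eq-Ia.1} and using the positivity $\Gamma[\gamma,\talpha]\ge0$ of \eqref{eq-I.33}, one should be able to rewrite $b$ as a sum of non-negative contributions built from $M$ and $\cA$, giving $b\ge0$ and hence the theorem.

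The main obstacle is precisely this verification that $b\ge0$. The energy functional is \emph{not} convex in $\gamma$ alone — the cross term hidden in $\tfrac12(\Tr[\gamma\vk]+\cdots-\vp)^{\cdot2}$ and the term $\tfrac12\Tr[\gamma\vk\cdot\gamma\vk]$ are not sign-definite under variations of $\gamma$ — so no term-by-term monotonicity is available, and one cannot simply decrease $\gamma$ at fixed $\talpha$ because $x\mapsto x+x^{2}$ is not operator monotone. The delicate points are therefore to secure $|\vec u|<1$ on the states relevant to the infimum (using $|\vp|<1$) and to recombine the non-sign-definite pieces, via $\Gamma[\gamma,\talpha]\ge0$ and the linked variation of $\gamma$ and $\talpha$ along the cooling path, into the manifestly positive effective photon energy $M$.
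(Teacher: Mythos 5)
You are right that the trivial inclusion gives one inequality, and the reduction of the reverse inequality to comparing a mixed quasifree state with pure quasifree states is the correct target; the affine dependence of $(\gamma_{\rho_s},\talpha_{\rho_s})$ on $s$ and the resulting quadratic polynomial $as^2+bs+c$ with $a\geq 0$ are also fine. But the entire content of the theorem is concentrated in the step you label "the main obstacle" and never carry out: the positivity $b\geq 0$ (or $a+b\geq 0$). Worse, as a pointwise statement about an \emph{arbitrary} quasifree state, the cooling inequality $\Tr[H_{g,\vp}\,\rho]\geq\Tr[H_{g,\vp}\,\rho_0]$ at fixed $(f,\mathbb{U})$ is false. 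The linear coefficient contains the term $-\Tr[\delta\gamma\,\vk]\cdot\vec u_0$ with $\vec u_0=\vp-\Tr[\gamma_0\vk]-f^{*}\vk f-2\rRe(f^{*}\vG)$, and nothing constrains $|\vec u_0|\leq 1$ for an arbitrary pure quasifree $\rho_0$: choosing $f$ with $f^{*}\vk f$ large, $g$ small, and a heating direction $\delta\gamma$ supported near $|\vk|=\sigma$ and aligned with $\vec u_0$, the momentum gain $-\Tr[\delta\gamma\,\vk]\cdot\vec u_0\approx-|\vec u_0|\,\Tr[\delta\gamma|\vk|]$ beats the field-energy cost $\Tr[\delta\gamma(\tfrac12|\vk|^2+|\vk|)]$, so $b<0$. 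Your hedge "$b\geq0$ on the orbits approaching the infimum" is where the real work would lie: you would need a priori bounds showing that minimizing sequences in $\QF$ have $|\vec u_0|$ bounded by $1$, and you would still have to control the non-sign-definite cross terms in $\delta\talpha$ (e.g.\ $2\rRe(\overline{(\vG+\vk f)}^{*}\delta\talpha(\vG+\vk f))$) using the positivity of the heating direction's one-particle density matrix. None of this is done, so the proposal does not prove the theorem.

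It is instructive to compare with the paper's proof, which avoids any monotonicity or convexity property of the nonlinear functional $\cE_{g,\vp}(f,\gamma,\talpha)$ altogether and uses only the \emph{linearity} of $\rho\mapsto\Tr[H_{g,\vp}\,\rho]$. There one truncates $C$ to a finite-rank $C_d$, writes $\Gamma(C_d)/\Tr[\Gamma(C_d)]$ as an integral over coherent states via the finite-dimensional resolution of identity (Lemma on $\Gamma(C)$ as an integral), so that $\rho_{qf,d}$ becomes a probability average of \emph{pure} quasifree states (with varying displacements, not the single state $\rho_0$); then the operator inequality $\rho_{qf}\geq\nu_d\,\rho_{qf,d}$ with $\nu_d=\Tr[\Gamma(C_d)]/\Tr[\Gamma(C)]\nearrow 1$, together with $H_{g,\vp}\geq 0$, yields $\Tr[H_{g,\vp}\,\rho_{qf}]\geq\nu_d\inf_{\rho\in\pQF}\Tr[H_{g,\vp}\,\rho]$, and $d\to\infty$ concludes. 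Note that this argument never uses $|\vp|<1$, whereas your route makes that hypothesis load-bearing; this is a further sign that the cooling-monotonicity strategy is strictly harder than what the theorem requires.
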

For the proof of Theorem~\ref{thm:inf_qf_egal_inf_pqf} we derive a couple of preparatory lemmata.
\begin{prop}
Let~$C$ a non-negative operator on~$\mathcal{Z}$, then\[
\Big\{ \Tr[\Gamma(C)]<\infty \Big\} 
\qquad \Leftrightarrow \qquad 
\Big\{ C\in\mathcal{L}^{1}(\mathcal{Z})\quad\mbox{and}\quad\|C\|_{\mathcal{B}(\mathcal{Z})}<1 \Big\}\,.\]
In this case~$\Tr[\Gamma(C)]=\det(1-C)^{-1}$. (We refrain from defining the determinant.)
For the direction $\Leftarrow$ the non-negativity assumption is not
necessary.\end{prop}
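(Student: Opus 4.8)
The plan is to establish the identity $\Tr[\Gamma(C)]=\det(1-C)^{-1}$ for a non-negative trace-class $C$ with $\|C\|_{\mathcal{B}(\mathcal{Z})}<1$ by a direct spectral computation, and then to reduce the remaining content to this case. For the implication $\Leftarrow$, where non-negativity is not assumed, I would first record that $\Gamma$ is multiplicative, $\Gamma(A)\Gamma(B)=\Gamma(AB)$ on each sector $\mathcal{Z}^{\vee n}$ (because $A^{\otimes n}$ commutes with the symmetrizer $\mathcal{S}_{n}$, so $A^{\vee n}B^{\vee n}=\mathcal{S}_{n}(AB)^{\otimes n}\mathcal{S}_{n}=(AB)^{\vee n}$), and that $\Gamma(A^{*})=\Gamma(A)^{*}$. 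Consequently $\Gamma(C)^{*}\Gamma(C)=\Gamma(C^{*}C)=\Gamma(|C|^{2})$, and since $\Gamma$ intertwines the positive square root, $|\Gamma(C)|=\Gamma(|C|)$ with $|C|=(C^{*}C)^{1/2}$. Hence $\|\Gamma(C)\|_{1}=\Tr[\Gamma(|C|)]$, and as $|C|\geq 0$ satisfies $\|\,|C|\,\|=\|C\|<1$ and $\|\,|C|\,\|_{1}=\|C\|_{1}<\infty$, finiteness of $\Gamma(C)$ reduces to the non-negative case applied to $|C|$.

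For the non-negative case I would diagonalize. Writing $C=\sum_{j}\lambda_{j}\,e_{j}e_{j}^{*}$ with eigenvalues $\lambda_{1}\geq\lambda_{2}\geq\cdots\geq 0$ and an orthonormal eigenbasis $(e_{j})_{j}$, the symmetric tensors $e_{j_{1}}\vee\cdots\vee e_{j_{n}}$ with $j_{1}\leq\cdots\leq j_{n}$ form an orthogonal eigenbasis of $\mathcal{Z}^{\vee n}$ on which $C^{\vee n}$ acts by the scalar $\lambda_{j_{1}}\cdots\lambda_{j_{n}}$. Taking the trace sector by sector gives
\[
\Tr[\Gamma(C)]=\sum_{n=0}^{\infty}\Tr[C^{\vee n}]=\sum_{n=0}^{\infty}\ \sum_{j_{1}\leq\cdots\leq j_{n}}\lambda_{j_{1}}\cdots\lambda_{j_{n}}=\prod_{j}\frac{1}{1-\lambda_{j}}\,,
\]
the last equality being the generating-function identity for complete homogeneous symmetric functions, legitimate since all summands are non-negative. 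As $\det(1-C)=\prod_{j}(1-\lambda_{j})$, the right-hand side is exactly $\det(1-C)^{-1}$.

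It then remains to read off the two directions from this product. For $\Leftarrow$ (equivalently, applied to $|C|$), the product is finite because $\sum_{j}-\log(1-\lambda_{j})\leq\sum_{j}\frac{\lambda_{j}}{1-\lambda_{j}}\leq(1-\|C\|)^{-1}\Tr[|C|]<\infty$, using $-\log(1-x)\leq x/(1-x)$ on $[0,1)$. For $\Rightarrow$, assume $C\geq 0$ and $\Tr[\Gamma(C)]<\infty$: the sector $n=1$ already yields $\Tr[C]\leq\Tr[\Gamma(C)]<\infty$, so $C\in\mathcal{L}^{1}(\mathcal{Z})$, and in particular $C$ is compact with largest eigenvalue $\lambda_{1}=\|C\|$; were $\lambda_{1}\geq 1$, the diagonal term $\lambda_{1}^{n}$ would force $\Tr[C^{\vee n}]\geq 1$ for every $n$ and make the series diverge, whence $\|C\|<1$.

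The main obstacle I anticipate is the careful bookkeeping in the non-negative case: verifying that $(e_{j_{1}}\vee\cdots\vee e_{j_{n}})_{j_{1}\leq\cdots\leq j_{n}}$ is genuinely an orthogonal eigenbasis of $C^{\vee n}$ with the asserted eigenvalues (handling the normalization of the symmetric products and repeated indices), and that summing the double series and recognizing the infinite product are justified. Because every term is non-negative, Tonelli makes these interchanges and the passage to the Fredholm determinant $\prod_{j}(1-\lambda_{j})$ unproblematic; the one genuinely load-bearing structural input is the pair of facts $\Gamma(AB)=\Gamma(A)\Gamma(B)$ and $|\Gamma(C)|=\Gamma(|C|)$, which is what cleanly disposes of the non-negativity hypothesis in the direction $\Leftarrow$.
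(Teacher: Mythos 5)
Your proof is correct, and its core is the same as the paper's: diagonalize a non-negative compact $C$ and identify $\Tr[\Gamma(C)]$ with $\prod_j(1-\lambda_j)^{-1}$. The differences lie in how that product is produced and in what is actually covered. The paper factorizes the Fock space as an infinite tensor product, $\fF_+(\cZ)=\bigotimes_{j}\fF_+(\CC e_j)$, so that $\Gamma(C)=\bigotimes_j\Gamma(c_j)$ and the trace factorizes over the factors; you instead compute sector by sector in the occupation-number basis and sum the complete homogeneous symmetric functions, an interchange that Tonelli justifies since all terms are non-negative --- equivalent bookkeeping, but yours avoids invoking infinite tensor products. More substantively, you fill two gaps that the paper's three-line proof leaves open. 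First, the paper begins by writing $C=\sum_j c_j e_j e_j^*$ in an orthonormal basis, i.e.\ it tacitly assumes $C$ is diagonalizable; in your $\Rightarrow$ direction you first obtain $\Tr[C]\leq\Tr[\Gamma(C)]<\infty$ from the one-particle sector, so trace-class (hence compactness and the spectral decomposition) is established before it is used, and $\|C\|<1$ then follows from the divergence forced by the diagonal terms $\lambda_1^n$. Second, the proposition's final claim --- that $\Leftarrow$ requires no non-negativity --- is asserted but not proved in the paper, whose printed argument applies only to diagonalizable $C$; your reduction via $\Gamma(A)\Gamma(B)=\Gamma(AB)$, $\Gamma(A^*)=\Gamma(A)^*$, hence $|\Gamma(C)|=\Gamma(|C|)$, genuinely proves it. The one step you should make explicit there is that $\Gamma(A^{1/2})\geq 0$ for $A\geq 0$ (positivity of $A^{\otimes n}$ survives compression by the symmetrizer $\mathcal{S}_n$), so that uniqueness of the positive square root gives $\Gamma(A)^{1/2}=\Gamma(A^{1/2})$ and hence $|\Gamma(C)|=\Gamma(|C|)$; with that remark added, your argument is complete and covers the statement in slightly greater generality than the proof printed in the paper.
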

\begin{proof}
Let us decompose~$\mathcal{Z}=\bigoplus_{j\geq0}\mathbb{C}e_{j}$
where $C=\sum c_{j}e_{j}e_{j}^{*}$ with~$(e_{j})_{j\geq0}$ an orthonormal
basis of~$\mathcal{Z}$. Then~$\mathfrak{F}_{+}(\mathcal{Z})=\bigotimes_{j\geq0}\mathfrak{F}_{+}(\mathbb{C}e_{j})$
and\[
\Tr[\Gamma(C)]=\Tr[\bigotimes_{j\geq0}\Gamma(c_{j})]=\prod_{j\geq0}\Tr[\Gamma(c_{j})]=\prod_{j\geq0}\frac{1}{1-c_{j}}\]
and the infinite product converges exactly when~$C\in\mathcal{L}^{1}(\mathcal{Z})$
and~$\|C\|_{\mathcal{B}(\mathcal{Z})}<1$.\end{proof}
\begin{lem}
\label{lem:Gamma(C)_as_integral}Suppose $\mathcal{Z}_{d}$ is of
dimension~$d<\infty$. Then, for any non-negative operator~$C_{d}\neq0$
such that~$C_{d}\in\mathcal{L}^{1}(\mathcal{Z}_{d})$ and~$\|C_{d}\|_{\mathcal{B}(\mathcal{Z}_{d})}<1$,
there exist a non-negative measure~$\mu_{d}$ (depending on~$C$)
of mass one on~$\mathcal{Z}_{d}$ and a family $\{ \rho_{d}(z_{d}) \}_{z_{d} \in \mathcal{Z}_{d}}$ of pure quasifree
states such that\[
\frac{\Gamma(C)}{\Tr[\Gamma(C)]}=\int_{\mathcal{Z}_{d}}\rho_{d}(z_{d})\, d\mu_{d}(z_{d})\,.\]
\end{lem}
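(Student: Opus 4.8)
The plan is to diagonalize $C$ and reduce the whole statement to a single bosonic mode, where the required integral representation is the classical Glauber--Sudarshan ($P$-) representation of a thermal state as a Gaussian superposition of coherent states. First I would use the non-negativity of $C=C_{d}$ together with $\|C\|_{\mathcal{B}(\mathcal{Z}_d)}<1$ to write $C=\sum_{j=1}^{d}c_{j}\,e_{j}e_{j}^{*}$ with $0\le c_{j}<1$ and $(e_{j})_{j=1}^{d}$ an orthonormal basis of $\mathcal{Z}_d$. As in the previous proposition, the factorization $\mathfrak{F}_{+}(\mathcal{Z}_d)=\bigotimes_{j=1}^{d}\mathfrak{F}_{+}(\mathbb{C}e_{j})$ gives $\Gamma(C)=\bigotimes_{j}\Gamma(c_{j})$, and on each single-mode factor $\Gamma(c_{j})=c_{j}^{N_{j}}$ acts diagonally in the number basis (with $N_{j}$ the number operator of the $j$-th mode), so that the normalized factor is the thermal state $\Gamma(c_{j})/\Tr[\Gamma(c_{j})]=(1-c_{j})\,c_{j}^{N_{j}}$ with mean occupation $\bar{n}_{j}=c_{j}/(1-c_{j})$.

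The key step is the single-mode identity
\[
(1-c)\,c^{N}
\ = \
\frac{1}{\pi\bar{n}}\int_{\mathbb{C}}e^{-|\alpha|^{2}/\bar{n}}\;E_{\alpha e}E_{\alpha e}^{*}\;d^{2}\alpha ,
\qquad \bar{n}=\frac{c}{1-c},
\]
where $E_{\alpha e}=W(-i\sqrt{2}\,\alpha e)\Omega$ is the coherent vector, for which $\langle n\,|\,E_{\alpha e}\rangle=e^{-|\alpha|^{2}/2}\alpha^{n}/\sqrt{n!}$ by \eqref{eq:coherent_state_series}. I would verify this by comparing matrix elements in the number basis $\{|n\rangle\}_{n\ge0}$: the angular integration $\int_{0}^{2\pi}e^{i(m-n)\theta}\,d\theta$ annihilates every off-diagonal entry, matching the diagonal left-hand side, while for the diagonal entries the radial integral reduces to the Gamma integral $\int_{0}^{\infty}s^{n}e^{-\lambda s}\,ds=n!/\lambda^{n+1}$ with $\lambda=(\bar{n}+1)/\bar{n}$, which yields exactly $(1-c)\,c^{n}$. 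When $c=0$ (hence $\bar{n}=0$) the Gaussian degenerates to the Dirac mass $\delta_{0}$ and the identity reads $\Omega\Omega^{*}=E_{0}E_{0}^{*}$, which I would treat as the degenerate case.

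Finally I would assemble the product representation. Writing $z_{d}=\sum_{j}\alpha_{j}e_{j}\in\mathcal{Z}_d$, the Weyl relation \eqref{eq:product_Weyl_operators} gives $E_{z_{d}}=\bigotimes_{j}E_{\alpha_{j}e_{j}}$ because $\rIm\langle\alpha_{i}e_{i},\alpha_{j}e_{j}\rangle=0$ for $i\neq j$, so that $E_{z_{d}}E_{z_{d}}^{*}$ is a coherent, hence pure quasifree, state: $E_{z_{d}}E_{z_{d}}^{*}\in\coherent\subseteq\pQF$. Taking the product of the single-mode identities and defining the mass-one non-negative measure
\[
d\mu_{d}(z_{d}) \ := \ \prod_{j=1}^{d}\frac{1}{\pi\bar{n}_{j}}\,e^{-|\alpha_{j}|^{2}/\bar{n}_{j}}\,d^{2}\alpha_{j}
\]
(with a $\delta_{0}$ factor in each mode where $c_{j}=0$), together with $\rho_{d}(z_{d}):=E_{z_{d}}E_{z_{d}}^{*}$, yields $\Gamma(C)/\Tr[\Gamma(C)]=\int_{\mathcal{Z}_d}\rho_{d}(z_{d})\,d\mu_{d}(z_{d})$, as asserted. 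I expect the single-mode $P$-representation to be the only genuine obstacle; once that diagonal matrix-element computation is in hand, the tensorization and the verification that $\mu_{d}$ has total mass one are routine in finite dimension.
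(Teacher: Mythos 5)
Your proof is correct, but it takes a genuinely different route from the paper's. The paper argues globally: it starts from the coherent-state resolution of identity $\mathbf{1}=\int_{\mathcal{Z}_d}E_{z_d}E_{z_d}^*\,\frac{dz_d}{\pi^d}$, sandwiches it between two factors of $\Gamma(C^{1/2})$, and uses the intertwining relation $\Gamma(C^{1/2})E_{z_d}=e^{(|C^{1/2}z_d|^2-|z_d|^2)/2}\,E_{C^{1/2}z_d}$ (immediate from the series \eqref{eq:coherent_state_series}) to obtain in one stroke $\Gamma(C)=\int_{\mathcal{Z}_d}E_{C^{1/2}z_d}E_{C^{1/2}z_d}^*\,e^{|C^{1/2}z_d|^2-|z_d|^2}\,\frac{dz_d}{\pi^d}$; the diagonalization of $C$ enters only at the end, to evaluate the Gaussian normalization $\int e^{-z_d^*(\mathbf{1}-C)z_d}\frac{dz_d}{\pi^d}=\Tr[\Gamma(C)]$. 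You instead diagonalize first, factor $\mathfrak{F}_+(\mathcal{Z}_d)$ into single modes, prove the one-mode Glauber--Sudarshan representation of the thermal state $(1-c)c^{N}$ by a number-basis matrix-element computation (your angular integral and Gamma integral do check out against the paper's conventions, in which $\|e^{\vee n}\|=1$ so $\langle n|E_{\alpha e}\rangle=e^{-|\alpha|^2/2}\alpha^n/\sqrt{n!}$), and then tensorize, which is legitimate since orthogonality of the modes makes the Weyl operators factor and $E_{z_d}=\bigotimes_j E_{\alpha_j e_j}$. In fact the two representations coincide up to the change of variables $w_d=C^{1/2}z_d$: the paper's Gaussian weight $e^{-z_d^*(\mathbf{1}-C)z_d}$ attached to the states $E_{C^{1/2}z_d}$ becomes exactly your product of Gaussians with variances $\bar n_j=c_j/(1-c_j)$ attached to $E_{w_d}$. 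What the paper's route buys is brevity and a basis-free main identity --- one intertwining relation does all the work, and the measure stays absolutely continuous even when some $c_j=0$. What your route buys is an elementary, self-contained verification, an explicit treatment of the degenerate modes (Dirac factors, consistent with $E_0E_0^*=\Omega\Omega^*$), and a transparent physical reading of $\mu_d$ as independent thermal Gaussians in the eigenmodes of $C$.
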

\begin{proof}
In finite dimension $d$ we can use a resolution of the identity with
coherent states (see, e.g., \cite{MR0208930})\[
\mathbf{1}_{\Gamma(\mathcal{Z}_{d})}=\int_{\mathcal{Z}_{d}}E_{z_{d}}E_{z_{d}}^{*}\,\frac{dz_{d}}{\pi^{d}}\]
where~$\mathcal{Z}_{d}$ is identified with~$\mathbb{C}^{d}$ and~$dz_{d} = dx_{d} \, dy_{d}$,
$z_{d}=x_{d}+iy_{d}$. Using Equation~(\ref{eq:coherent_state_series})
we get\begin{align*}
\Gamma(C) & =\int_{\mathcal{Z}_{d}}\Gamma(C^{1/2})E_{z_{d}}E_{z_{d}}^{*}\Gamma(C^{1/2})\,\frac{dz_{d}}{\pi^{d}}\\
 & =\int_{\mathcal{Z}_{d}}E_{C^{1/2}z_{d}}E_{C^{1/2}z_{d}}^{*}\,\frac{\exp(|C^{1/2}z_{d}|^{2}-|z_{d}|^{2})dz_{d}}{\pi^{d}}\,.\end{align*}
The measure~$d\mu_{d}(z_{d})=\pi^{-d}\exp(|C^{1/2}z_{d}|^{2}-|z_{d}|^{2})dz_{d}/\Tr[\Gamma(C)]$
has mass one. Indeed\begin{align*}
\int_{\mathcal{Z}_{d}}\exp(-z_{d}^{*}(\mathbf{1}_{\mathcal{Z}_{d}}-C)z_{d})\frac{dz_{d}}{\pi^{d}} & =\prod_{j=1}^{d}\int_{\mathbb{R}^{2}}\exp(-(1-c_{j})(x^{2}+y^{2}))\frac{dx\, dy}{\pi}\\
 & =\prod_{j=1}^{d}\frac{1}{1-c_{j}}=\Tr[\Gamma(C)]\end{align*}
where~$C=\sum_{j=1}^{d}c_{j}e_{j}e_{j}^{*}$ with $(e_{j})_{j=1}^{d}$
an orthonormal basis of~$\mathcal{Z}_{d}$.\end{proof}

\begin{proof}[Proof of Theorem~\ref{thm:inf_qf_egal_inf_pqf}]
The inclusion $\mathfrak{pQF}\subset\mathfrak{QF}$ implies that 
\[
\inf_{\rho\in\mathfrak{QF}}\Tr[H_{g,\vec{p}}\,\rho]
\ \leq \ 
\inf_{\rho\in\mathfrak{pQF}}\Tr[H_{g,\vec{p}}\,\rho] \, ,
\]
and it is hence enough to prove for any quasifree
state \[
\rho_{qf} \ = \ W(-i\sqrt{2}f)\,\mathbb{U}_{T}^{*}\,\frac{\Gamma(C)}{\Tr[\Gamma(C)]}\,\mathbb{U}_{T}\, W(-i\sqrt{2}f)^{*}\,,\]
that the inequality\[
\Tr[H_{g,\vec{p}}\,\rho_{qf}] \ \geq \ \inf_{\rho\in\mathfrak{pQF}} \Tr[H_{g,\vec{p}}\,\rho]
\,.\]
holds true. The operator~$C$ is decomposed as~$C=\sum_{j\geq0}c_{j}e_{j}e_{j}^{*}$
where~$(e_{j})$ is an orthonormal basis of the Hilbert space~$\mathcal{Z}$
and~$c_{j}\geq0$. Let~$C_{d}=\sum_{j\leq d}c_{j}e_{j}e_{j}^{*}$.
Let\[
\rho_{qf,d}=W(-i\sqrt{2}f)\,\mathbb{U}_{T}^{*}\,\frac{\Gamma(C_{d})}{\Tr[\Gamma(C_{d})]}\,\mathbb{U}_{T}\, W(-i\sqrt{2}f)^{*}\,,\]
then using Lemma~\ref{lem:Gamma(C)_as_integral} with $\mathcal{Z}_{d}=\bigoplus_{j\leq d}\mathbb{C}e_{j}$,
$\mathfrak{F}_{+}\mathcal{Z}=\mathfrak{F}_{+}(\mathcal{Z}_{d}\oplus\mathcal{Z}_{d}^{\perp})\cong\mathfrak{F}_{+}\mathcal{Z}_{d}\otimes\mathfrak{F}_{+}\mathcal{Z}_{d}^{\perp}$
and the extension of the operator $\Gamma(C_{d})$ on~$\mathfrak{F}_{+}\mathcal{Z}_{d}$
to~$\mathfrak{F}_{+}\mathcal{Z}_{d}\otimes\mathfrak{F}_{+}\mathcal{Z}_{d}^{\perp}$
by~$\Gamma(C_{d})\otimes(\Omega_{\mathcal{Z}_{d}^{\perp}}\Omega_{\mathcal{Z}_{d}^{\perp}}^{*})$
(which we still denote by~$\Gamma(C_{d})$), we obtain \[
\rho_{qf,d} \ = \ \int_{\mathcal{Z}_{d}} \rho_{d}(z_{d}) \: d\mu_{d}(z_{d})
\,,\]
where~$\rho_{d}(z_{d})$ are pure quasifree states and the~$\mu_{d}$
are non-negative measures with mass one. Note that\[
\nu_{d} \ := \ \frac{\Tr[\Gamma(C_{d})]}{\Tr[\Gamma(C)]}=\prod_{j>d}(1-c_{j}) 
\ \nearrow \ 1 
\, , \]
as $d\to\infty$. Further note that $\rho_{qf} \geq \nu_d \, \rho_{qf,d}$, for any $d \in \NN$, since $\Gamma(C) \geq \Gamma(C_d)$. Thus
\begin{align*}
\Tr[H_{g,\vec{p}}\,\rho_{qf}] 
& \ \geq \ 
\Tr[H_{g,\vec{p}}\,\nu_{d}\rho_{qf,d}]
\\ & \ = \ 
\nu_{d} \, \int_{\mathcal{Z}_{d}} \Tr[H_{g,\vec{p}}\,\rho_{d}(z_{d})]\: d\mu_{d}(z_{d})
\\ & \ \geq \ 
\nu_{d} \, \inf_{z_{d}\in\mathcal{Z}_{d}}\Tr[H_{g,\vec{p}}\,\rho_{d}(z_{d})]
\\ & \ \geq \
\nu_{d} \, \inf_{\rho\in\mathfrak{pQF}}\Tr[H_{g,\vec{p}}\,\rho]\,,
\end{align*}
for all $d \in \NN$, and in the limit $d \to \infty$, we obtain 
\[ \Tr[H_{g,\vec{p}}\,\rho_{qf}]  
\ \geq \ 
\lim_{d \to \infty}\{ \nu_{d} \} \, \inf_{\rho\in\mathfrak{pQF}}\Tr[H_{g,\vec{p}}\,\rho]
\ = \ 
\inf_{\rho\in\mathfrak{pQF}}\Tr[H_{g,\vec{p}}\,\rho]
\, . \qedhere\]
\end{proof}

\subsection{Pure Quasifree States and their One-Particle Density Matrices}

Let $\mathcal{Z}$ be a $\mathbb{C}$-Hilbert space.
\begin{defn}
Let $\rho \in \mathfrak{DM}$ be a density matrix on the bosonic Fock space $\mathfrak{F}_{+}(\mathcal{Z})$ over $\mathcal{Z}$.
If $\Tr[\rho N_{f}^{\frac{p+q}{2}}]<\infty$, we define $\rho^{p,q}\in\mathcal{B}^{p,q}(\mathcal{Z})$
through\[
\forall\varphi,\psi\in\mathcal{Z}\,,\qquad\psi^{*\vee p}\rho^{p,q}\varphi^{\vee q}=\Tr[a^{*}(\varphi)^{q}a(\psi)^{p}\rho]\,.\]
We single out 
\[
f=\rho^{0,1}\in\mathcal{B}^{0,1}\cong\mathcal{Z} \, ,
\]
i.e., $f_\rho \in \cZ$ is the unique vector such that $\Tr[a(\psi) \, \rho] = \psi^* f_\rho$, for all $\psi \in \cZ$. Furthermore,
with $\tilde{\rho}=W(\sqrt{2}f_{\rho}/i)^{*}\rho W(\sqrt{2}f_{\rho}/i)$,
the matrix elements of the (generalized) one-particle density matrix are defined by \[
\gamma_{\rho}=\tilde{\rho}^{1,1}\in\mathcal{B}^{1,1}\qquad\mbox{and}\qquad\alpha_{\rho}=\tilde{\rho}^{0,2}\in\mathcal{B}^{0,2}\cong\mathcal{Z}^{\vee2}\,,\]
in other words\begin{align*}
\forall\varphi,\psi\in\mathcal{Z}\,: \qquad\langle\psi\,,\gamma_{\rho}\,\varphi\rangle & =\Tr[\tilde{\rho}\, a^{*}(\varphi)a(\psi)]\,,\\
\langle\psi\otimes\varphi\,,\alpha_{\rho}\rangle & =\Tr[\tilde{\rho}\, a(\psi)a(\varphi)]\,.\end{align*}
Note that $f_{\rho}$, $\gamma_{\rho}$, and $\alpha_{\rho}$ exist for any $\rho \in \mathfrak{DM}$ since
$\nf \rho, \rho \nf \in \cL^1(\fF_+)$.
\end{defn}
\begin{rem}
For a centered pure quasifree state $\tilde{\rho}$, $\tilde{\rho}^{p,q}$
vanishes when $p+q$ is odd.
\end{rem}
\begin{rem}
Another definition of the one-particle density matrix~$\gamma_{\rho}$
would be through the relation~$\langle\psi,\gamma_{\rho}\varphi\rangle=\Tr[a^{*}(\varphi)a(\psi)\rho]$.
We prefer here a definition with a {}``centered'' version~$\tilde{\rho}$
of the state~$\rho$, because this centered quasifree state~$\tilde{\rho}$
then satisfies the usual Wick theorem. The same considerations hold
for~$\alpha_{\rho}$.

Hence, any quasifree density matrix is characterized by~$(f_{\rho},\gamma_{\rho},\alpha_{\rho})$, since
$\rho^{p,q}$ can be expressed in terms of $(f_{\rho},\gamma_{\rho},\alpha_{\rho})$.

When $f_{\rho}=0$, the definition of $\gamma_{\rho}$ is consistent
with the usual one, for $z_{1}$, $z_{2}\in\mathcal{Z}$, $\langle z_{1},\gamma_{\rho}z_{2}\rangle=\Tr[a^{*}(z_{2})a(z_{1})\rho]$. 
The definition of $\alpha_{\rho}$ is related with the definition
of the operator $\hat{\alpha}_{\rho}$ (here denoted with a hat for
clarity) used in the article of Bach, Lieb and Solovej~\cite{MR1297873},
through the relation $\langle z_{1}\otimes z_{2},\alpha_{\rho}\rangle_{\mathcal{Z}^{\otimes2}}=\langle z_{1},\tilde{\alpha}_{\rho}cz_{2}\rangle_{\mathcal{Z}}$
with $c$ a conjugation on $\mathcal{Z}$.\end{rem}
\begin{example}
A centered pure quasifree state satisfies the relation, \begin{equation}
\tilde{\rho}^{2,2}=\gamma\otimes\gamma+\gamma\otimes\gamma\,\Ex+\alpha\alpha^{*}\in\mathcal{B}^{2,2}\,,\label{eq:rho_2_2_quasi_free}\end{equation}
where the exchange operator is the linear operator on $\mathcal{Z}^{\otimes2}$
such that\[
\forall z_{1},\, z_{2}\in\mathcal{Z},\qquad\Ex(z_{1}\otimes z_{2})=z_{2}\otimes z_{1}\]
and where for any $b\in\mathcal{Z}^{\otimes2}$, $\alpha\alpha^{*}b=\langle\alpha,b\rangle_{\mathcal{Z}^{\otimes2}}\,\alpha$.\end{example}
We now turn to another parametrization of quasifree states, by vectors
in a real Hilbert space. This parametrization enables us to use convexity
arguments.
\begin{prop}
\label{pro:gamma_r_alpha_r}Let $T=ue^{\hat{r}}$ be an implementable
symplectomorphism and $\rho$ a quasifree state of the form $\rho=\mathbb{U}_{T}^{*}\Omega\Omega^{*}\mathbb{U}_{T}$.
Then 
\begin{align} \label{eq:gamma_alpha_as_ch_sh-1}
\gamma_{\rho} \ = \ & \tfrac{1}{2}(\cosh(2\hat{r})-\mathbf{1}) \, ,
\\ \label{eq:gamma_alpha_as_ch_sh-2}
\forall z_{1},z_{2}\in\mathcal{Z}: \quad \;
\langle z_{1}\otimes z_{2},\alpha_{\rho}\rangle_{\mathcal{Z}^{\otimes2}} 
\ = \ & \langle z_{1},\tfrac{1}{2}\sinh(2\hat{r})z_{2}\rangle \, .
\end{align}
\end{prop}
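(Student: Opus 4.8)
The plan is to reduce the two matrix elements to vacuum expectation values and to compute them via the transformation rule for the creation and annihilation operators under the Bogolubov transformation $\mathbb{U}_{T}$. Since $\rho=\mathbb{U}_{T}^{*}\Omega\Omega^{*}\mathbb{U}_{T}$, cyclicity of the trace gives $\Tr[\rho A]=\langle\Omega,\mathbb{U}_{T}A\mathbb{U}_{T}^{*}\Omega\rangle$ for every bounded $A$. A first quick check shows that $\rho$ is centered, $f_{\rho}=\rho^{0,1}=0$: indeed $\Tr[a(\psi)\rho]=\langle\Omega,\mathbb{U}_{T}a(\psi)\mathbb{U}_{T}^{*}\Omega\rangle$ pairs the vacuum against a one-particle vector and hence vanishes. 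Consequently $\tilde{\rho}=\rho$, and the defining relations reduce to $\langle\psi,\gamma_{\rho}\varphi\rangle=\langle\Omega,\mathbb{U}_{T}\,a^{*}(\varphi)a(\psi)\,\mathbb{U}_{T}^{*}\Omega\rangle$ and $\langle\psi\otimes\varphi,\alpha_{\rho}\rangle=\langle\Omega,\mathbb{U}_{T}\,a(\psi)a(\varphi)\,\mathbb{U}_{T}^{*}\Omega\rangle$.

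Next I would establish the transformation law. Write $T=L+A$ for the decomposition of the $\mathbb{R}$-linear map $T$ into its $\mathbb{C}$-linear part $L$ and its antilinear part $A$. From $T=u\exp[\hat{r}]$ and the splitting $\exp[\hat{r}]=\cosh(\hat{r})+\sinh(\hat{r})$ into the ($\mathbb{C}$-linear) even part and the (antilinear) odd part, with $u$ linear, one reads off $L=u\cosh(\hat{r})$ and $A=u\sinh(\hat{r})$. Starting from the Bogolubov relation $\mathbb{U}_{T}\Phi(f)\mathbb{U}_{T}^{*}=\Phi(Tf)$ in the form \eqref{eq-VB-lemma-2.4}, together with $a(f)=\tfrac{1}{\sqrt{2}}(\Phi(f)+i\Phi(if))$ and $a^{*}(f)=\tfrac{1}{\sqrt{2}}(\Phi(f)-i\Phi(if))$, a short computation (splitting $Tf$ and $T(if)$ through $L$ and $A$ and using the antilinearity of $A$) yields
$$\mathbb{U}_{T}\,a(f)\,\mathbb{U}_{T}^{*}=a(Lf)+a^{*}(Af)\,,\qquad\mathbb{U}_{T}\,a^{*}(f)\,\mathbb{U}_{T}^{*}=a^{*}(Lf)+a(Af)\,,$$
the second being the adjoint of the first.

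With this in hand I would substitute and expand. For $\gamma_{\rho}$, writing $\mathbb{U}_{T}a^{*}(\varphi)a(\psi)\mathbb{U}_{T}^{*}=(a^{*}(L\varphi)+a(A\varphi))(a(L\psi)+a^{*}(A\psi))$ and taking the vacuum expectation, the canonical commutation relations $[a(f),a^{*}(g)]=\langle f,g\rangle$ together with $a(g)\Omega=0$ annihilate every term except the single contraction $\langle\Omega,a(A\varphi)a^{*}(A\psi)\Omega\rangle=\langle A\varphi,A\psi\rangle$, so $\langle\psi,\gamma_{\rho}\varphi\rangle=\langle A\varphi,A\psi\rangle$. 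For $\alpha_{\rho}$, expanding $(a(L\psi)+a^{*}(A\psi))(a(L\varphi)+a^{*}(A\varphi))$ in the same way leaves only $\langle\Omega,a(L\psi)a^{*}(A\varphi)\Omega\rangle=\langle L\psi,A\varphi\rangle$, so $\langle\psi\otimes\varphi,\alpha_{\rho}\rangle=\langle L\psi,A\varphi\rangle$.

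Finally I would simplify these two scalar products. Since $u$ is an isometry, $u^{*}u=\mathbf{1}$, whence $\langle A\varphi,A\psi\rangle=\langle\sinh(\hat{r})\varphi,\sinh(\hat{r})\psi\rangle$ and $\langle L\psi,A\varphi\rangle=\langle\cosh(\hat{r})\psi,\sinh(\hat{r})\varphi\rangle$. Using the self-adjointness of $\cosh(\hat{r})$ (linear) and of $\sinh(\hat{r})$ (antilinear, in the sense $\langle z,\sinh(\hat{r})z'\rangle=\langle z',\sinh(\hat{r})z\rangle$, which follows from the self-adjointness of $\hat{r}$ in Proposition~\ref{pro:polar-decomposition}) to transfer the operators onto the right argument, these become $\langle\psi,\sinh^{2}(\hat{r})\varphi\rangle$ and $\langle\psi,\cosh(\hat{r})\sinh(\hat{r})\varphi\rangle$ respectively; here the key point is that the product of two antilinear maps is linear, so that $\langle A\varphi,A\psi\rangle=\langle\psi,(u\sinh\hat{r})^{2}\,\varphi\rangle$ with no stray conjugate. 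The elementary identities $\sinh^{2}(\hat{r})=\tfrac{1}{2}(\cosh(2\hat{r})-\mathbf{1})$ and $\cosh(\hat{r})\sinh(\hat{r})=\tfrac{1}{2}\sinh(2\hat{r})$ then give exactly \eqref{eq:gamma_alpha_as_ch_sh-1} and \eqref{eq:gamma_alpha_as_ch_sh-2}. The main obstacle is precisely this bookkeeping with antilinear operators: one must be consistent about the antilinear self-adjointness convention throughout, especially in transferring $\sinh(\hat{r})$ across the inner product in the last step, since this is what converts $\langle A\varphi,A\psi\rangle$ into the desired $\langle\psi,\gamma_{\rho}\varphi\rangle$ rather than its conjugate.
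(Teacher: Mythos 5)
Your proof is correct, but it follows a genuinely different route from the paper's. The paper stays at the level of Weyl operators: it computes the characteristic functional $\Tr[\rho W(-i\sqrt{2}z)]=\exp(-\tfrac{1}{2}|e^{-\hat{r}}z|^{2})$ (the key identity being $Ti\,=\,iue^{-\hat{r}}$, i.e.\ the sign of $\hat{r}$ flips when commuted past $i$), then differentiates $h(t,s)=\Tr[\rho W(-ti\sqrt{2}z)W(-si\sqrt{2}z)]$ at $(0,0)$, equates the resulting expression $-2z^{*}\gamma z+2\rRe(\alpha^{*}z^{\vee2})-z^{*}z$ with the explicit value $-z^{*}\cosh(2\hat{r})z+2\rRe\big(z^{*}\tfrac{1}{2}\sinh(2\hat{r})z\big)$, and finishes by polarization. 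You instead work at the level of creation and annihilation operators: you derive the transformation law $\mathbb{U}_{T}a(f)\mathbb{U}_{T}^{*}=a(Lf)+a^{*}(Af)$ with $L=u\cosh(\hat{r})$, $A=u\sinh(\hat{r})$ (the linear/antilinear decomposition of $T=ue^{\hat{r}}$, i.e.\ the passage \eqref{eq-VB-lemma-2.4}$\Leftrightarrow$\eqref{eq-VB-lemma-2.2} of Lemma~\ref{lemma-2} made explicit in terms of $\hat{r}$), and then each matrix element is a four-term expansion in which exactly one contraction survives. Your route produces $\gamma_{\rho}$ and $\alpha_{\rho}$ separately, with no polarization and no real/imaginary-part bookkeeping, and it makes the centering $f_{\rho}=0$ explicit (the paper uses it tacitly when writing $\partial_{t}\partial_{s}h(0,0)$ in terms of $\gamma$ and $\alpha$); its cost is that it leans on the operator form of the Bogolubov transformation, which the paper itself only sketches, whereas the paper's argument is self-contained given the defining relation $\mathbb{U}_{T}W(z)\mathbb{U}_{T}^{*}=W(Tz)$.

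One slip in an aside, which does not affect your main chain: $\langle A\varphi,A\psi\rangle$ equals $\langle\psi,A^{*}A\,\varphi\rangle=\langle\psi,\sinh^{2}(\hat{r})\varphi\rangle$, not $\langle\psi,(u\sinh\hat{r})^{2}\varphi\rangle$; the isometry $u^{*}u=\mathbf{1}$ removes $u$ through $A^{*}A=\sinh(\hat{r})u^{*}u\sinh(\hat{r})$, but $(u\sinh\hat{r})^{2}=u\sinh(\hat{r})u\sinh(\hat{r})$ is in general a different operator. Your actual computation (isometry first, then the antilinear self-adjointness of $\sinh(\hat{r})$, respectively the linear self-adjointness of $\cosh(\hat{r})$) is the correct one.
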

\begin{proof}[Proof of Proposition~\ref{pro:gamma_r_alpha_r}]
We have $Ti=ue^{\hat{r}}i=uie^{-\hat{r}}=iue^{-\hat{r}}$ and for
all $z\in\mathcal{Z}$ 
\begin{align*}
\Tr[\rho W(-i\sqrt{2}z)] & = \Tr\big[\mathbb{U}_{T}^{*}\Omega\Omega^{*}\mathbb{U}_{T}W(-i\sqrt{2}z)\big]\\
 & = \Omega^{*}W(ue^{\hat{r}}(-i\sqrt{2}z))\Omega\\
 & = \Omega^{*}W(-i\sqrt{2}ue^{-\hat{r}}z)\Omega\\
 & = \exp\big(-\tfrac{1}{2}|ue^{-\hat{r}}z|^{2} \big) \\
 & = \exp\big(-\tfrac{1}{2}|e^{-\hat{r}}z|^{2} \big)
\end{align*}
From this formula we can easily compute the function
\begin{align*}
h(t,s) \ := \ 
\Tr\big[\rho W(-ti\sqrt{2}z)W(-si\sqrt{2}z)\big]
\ = \ 
\exp\big(-\tfrac{1}{2}|e^{-\hat{r}}(t+s)z|^{2}\big)
\end{align*}
whose derivative $\partial_{t}\partial_{s}$ at $(t,s) = (0,0)$ involves
$\alpha$ and $\gamma$
\begin{align*}
\partial_{t}\partial_{s}h(0,0) & = \Tr\big[\rho(a^{*}(z)-a(z))^{2}\big] \\
 & = -2z^{*}\gamma z+2\rRe(\alpha^{*}z^{\vee2})-z^{*}z\,.
\end{align*}
But we have also
\begin{align*}
\partial_{t}\partial_{s}\exp & (-\frac{1}{2}|e^{-\hat{r}}(t+s)z|^{2})\Big|_{t=s=0}\\
 & =-(e^{-\hat{r}}z)^{*}(e^{-\hat{r}}z)\\
 & =-(\cosh(\hat{r})z-\sinh(\hat{r})z)^{*}(\cosh(\hat{r})z-\sinh(\hat{r})z)\\
 & =-(\cosh(\hat{r})z)^{*}(\cosh(\hat{r})z)\\
& \qquad+2\rRe(\sinh(\hat{r})z)^{*}(\cosh(\hat{r})z)-(\sinh(\hat{r})z)^{*}(\sinh(\hat{r})z)\\
 & =-z^{*}(\cosh^{2}\hat{r}+\sinh^{2}\hat{r})z+2\rRe(z^{*}(\sinh\hat{r}\cosh\hat{r})z)\\
 & =-z^{*}\cosh(2\hat{r})z+2\rRe(z^{*}\frac{1}{2}\sinh(2\hat{r})z)
\end{align*}
and hence, using the polarization identity\[
4z\vee z' \ = \ (z+z')^{\otimes2}-(z-z')^{\otimes2} \]
to recover every vector from $\mathcal{Z}^{\vee2}$ from linear combinations
of vectors of the form $z^{\vee2}$, we arrive at \eqref{eq:gamma_alpha_as_ch_sh-1}-\eqref{eq:gamma_alpha_as_ch_sh-2}.
\end{proof}
\begin{prop}
The admissible $\gamma$, $\alpha$ for a pure quasifree state are
exactly those satisfying the relation\begin{equation}
\gamma+\gamma^{2}=(\alpha\otimes\mathbf{1})^{*}(\mathbf{1}\otimes\alpha)\,,\label{eq:constraint}\end{equation}
with $\gamma\geq0$.
\end{prop}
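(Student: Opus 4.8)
The plan is to prove both inclusions by exploiting the parametrization of pure quasifree states through the antilinear operator $\hat{r}$ furnished by Proposition~\ref{pro:gamma_r_alpha_r}, together with the polar decomposition of implementable symplectomorphisms in Proposition~\ref{pro:polar-decomposition}. Throughout I identify $\alpha$ with the antilinear operator $s$ determined by $\langle z_{1}\otimes z_{2},\alpha\rangle=\langle z_{1},s\,z_{2}\rangle$; a short index computation shows that, in any orthonormal basis, $s$ has the symmetric matrix $(\alpha_{ij})$, and that the linear operator $(\alpha\otimes\mathbf{1})^{*}(\mathbf{1}\otimes\alpha)$ coincides with the composition $s^{2}=s\circ s$ (both have matrix $A A^{*}$, with $A=(\alpha_{ij})$).

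For necessity, suppose $(\gamma,\alpha)$ comes from a pure quasifree state $\rho=\mathbb{U}_{T}^{*}\Omega\Omega^{*}\mathbb{U}_{T}$ with $T=u\,e^{\hat{r}}$. By Proposition~\ref{pro:gamma_r_alpha_r} we have $\gamma=\tfrac12(\cosh(2\hat{r})-\mathbf{1})$ and $s=\tfrac12\sinh(2\hat{r})$. The positivity $\gamma\geq0$ is immediate from $\cosh(2\hat{r})\geq\mathbf{1}$. For the identity I would compute $\gamma+\gamma^{2}=\gamma(\mathbf{1}+\gamma)=\tfrac14(\cosh(2\hat{r})-\mathbf{1})(\cosh(2\hat{r})+\mathbf{1})=\tfrac14(\cosh^{2}(2\hat{r})-\mathbf{1})$, and then invoke the hyperbolic identity $\cosh^{2}(2\hat{r})-\sinh^{2}(2\hat{r})=\mathbf{1}$, which remains valid for antilinear $\hat{r}$ because $e^{2\hat{r}}e^{-2\hat{r}}=\mathbf{1}$ (the binomial cancellation only involves real scalars). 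This gives $\gamma+\gamma^{2}=\tfrac14\sinh^{2}(2\hat{r})=s^{2}=(\alpha\otimes\mathbf{1})^{*}(\mathbf{1}\otimes\alpha)$, which is exactly \eqref{eq:constraint}.

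For sufficiency, given $\gamma\geq0$ trace-class and $\alpha$ Hilbert--Schmidt satisfying \eqref{eq:constraint}, i.e.\ $s^{2}=\gamma+\gamma^{2}$, the task is to build a single antilinear self-adjoint Hilbert--Schmidt $\hat{r}$ realizing both $\cosh(2\hat{r})=2\gamma+\mathbf{1}$ and $\sinh(2\hat{r})=2s$; then $T:=e^{\hat{r}}$ is an implementable symplectomorphism by Proposition~\ref{pro:polar-decomposition} and $\rho:=\mathbb{U}_{T}^{*}\Omega\Omega^{*}\mathbb{U}_{T}$ has the prescribed one-particle matrices. I would diagonalize the positive trace-class operator $s^{2}=\gamma+\gamma^{2}$ into its eigenspaces; since $s$ commutes with $s^{2}$ it preserves each eigenspace $\{s^{2}=\nu^{2}\}$, on which its restriction is an antilinear self-adjoint operator squaring to the scalar $\nu^{2}$, so by the finite-dimensional normal form used in the proof of Proposition~\ref{pro:polar-decomposition} there is an orthonormal eigenbasis $\{\varphi_{k}\}$ with $s\varphi_{k}=\nu\varphi_{k}$. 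As $\gamma$ commutes with $s^{2}$ and $\gamma\geq0$, the scalar equation $\gamma+\gamma^{2}=\nu^{2}$ forces $\gamma=\tfrac12(-\mathbf{1}+\sqrt{\mathbf{1}+4\nu^{2}})$ there. I then define $\hat{r}$ to be diagonal in $\{\varphi_{k}\}$, acting antilinearly by $\hat{r}\varphi_{k}=\tfrac12\Asinh(2\nu)\varphi_{k}$; a direct computation gives $\sinh(2\hat{r})\varphi_{k}=\sinh(\Asinh(2\nu))\varphi_{k}=2\nu\varphi_{k}=2s\varphi_{k}$ and $\cosh(2\hat{r})\varphi_{k}=\sqrt{\mathbf{1}+4\nu^{2}}\,\varphi_{k}=(2\gamma+\mathbf{1})\varphi_{k}$, the compatibility of these two being precisely the constraint \eqref{eq:constraint}. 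Comparing power series, $\gamma=\tfrac12(\cosh(2\hat{r})-\mathbf{1})\geq\hat{r}^{2}$, hence $\Tr[\hat{r}^{2}]\leq\Tr[\gamma]<\infty$ and $\hat{r}$ is Hilbert--Schmidt, as required by Proposition~\ref{pro:polar-decomposition}.

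The main obstacle is the sufficiency direction, and precisely the step of reconstructing one antilinear operator $\hat{r}$ that simultaneously reproduces the linear datum $\gamma$ and the antilinear datum $s$. Because $s$ is antilinear, its diagonalization requires the conjugation-based normal form of Proposition~\ref{pro:polar-decomposition} rather than ordinary spectral theory, and it is only the constraint \eqref{eq:constraint}---read as the hyperbolic Pythagorean relation $\cosh^{2}-\sinh^{2}=\mathbf{1}$---that guarantees the values $\tfrac12\Asinh(2\nu)$ chosen to fit $\sinh(2\hat{r})=2s$ are automatically consistent with $\cosh(2\hat{r})=2\gamma+\mathbf{1}$.
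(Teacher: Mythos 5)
Your proof is correct and takes essentially the same route as the paper: necessity via Proposition~\ref{pro:gamma_r_alpha_r} together with the hyperbolic identity $\cosh^{2}(2\hat{r})-\sinh^{2}(2\hat{r})=\mathbf{1}$, and sufficiency by building $\hat{r}=\tfrac{1}{2}\Asinh(2s)$ from the antilinear operator associated to $\alpha$ (your $s$ is the paper's $\hat{\alpha}$) and invoking Proposition~\ref{pro:polar-decomposition} to get the pure quasifree state $\mathbb{U}_{e^{\hat{r}}}^{*}\Omega\Omega^{*}\mathbb{U}_{e^{\hat{r}}}$. If anything, your write-up is more complete than the paper's, since you make the antilinear functional calculus behind $\sinh^{-1}(2\hat{\alpha})$ explicit by diagonalizing $s^{2}$ on its (finite-dimensional) nonzero eigenspaces, and you verify the Hilbert--Schmidt property $\Tr[\hat{r}^{2}]\leq\Tr[\gamma]<\infty$ needed for implementability, both of which the paper leaves implicit.
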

This is the constraint when we minimize the energy as a function of
$(f,\gamma,\alpha)$ with the method of Lagrange multipliers in Section~\ref{sec:Lagrange-equations}.
\begin{proof}
If $\gamma$, $\alpha$ are associated with a quasifree state, then
there is an $\hat{r}$ such that $\gamma$, $\alpha$ and $\hat{r}$
satisfy Equations~(\ref{eq:gamma_alpha_as_ch_sh-1}) and~(\ref{eq:gamma_alpha_as_ch_sh-2}), then\begin{align*}
\langle z_{1},(\alpha^{*}\otimes\mathbf{1})(\mathbf{1}\otimes\alpha)z_{2}\rangle & =(\alpha^{*}\otimes z_{1}^{*})(z_{2}\otimes\alpha)\\
 & =([\alpha^{*}(z_{2}\otimes\mathbf{1})]\otimes z_{1}^{*})\alpha\\
 & =\big\langle\alpha^{*}(z_{2}\otimes\mathbf{1}),\tfrac{1}{2}\sinh(2\hat{r})z_{1}\big\rangle_{\mathcal{Z}}\\
 & =\big\langle\alpha^{*},z_{2}\otimes\tfrac{1}{2}\sinh(2\hat{r})z_{1}\big\rangle_{\mathcal{Z}^{\otimes2}}\\
 & =\big\langle\tfrac{1}{4}\sinh^{2}(2\hat{r})z_{1},z_{2}\big\rangle_{\mathcal{Z}}\\
 & =\big\langle(\tfrac{1}{2}(\cosh(2\hat{r})-\mathbf{1})+\tfrac{1}{4}(\cosh(2\hat{r})-\mathbf{1})^{2})z_{1},z_{2}\big\rangle_{\mathcal{Z}}\,.
\end{align*}
Conversely, if $\gamma$ and $\alpha$ satisfy Eq.~(\ref{eq:constraint})
then we define the $\mathbb{C}$-antilinear operator $\hat{\alpha}$
such that $\langle z_{1},\hat{\alpha}z_{2}\rangle=(z_{1}\otimes z_{2})^{*}\alpha$,
and set $\hat{r}=\frac{1}{2}\sinh^{-1}(2\hat{\alpha})$, then\[
\forall z_{1},z_{2}\in\mathcal{Z}: \quad
\langle z_{1}\otimes z_{2}, \alpha_{\rho} \rangle_{\mathcal{Z}^{2}} 
\ = \ 
\langle z_{1}, \hat{\alpha} z_{2} \rangle
\ = \
\big\langle z_{1}, \tfrac{1}{2}\sinh(2\hat{r})z_{2} \big\rangle \, , 
\]
which, in turn, implies that $(\alpha^{*}\otimes\mathbf{1})(\mathbf{1}\otimes\alpha)=\frac{1}{4}\sinh^{2}(2\hat{r})$.
Hence, we have
\[
\gamma+\gamma^{2}=\frac{1}{4}\sinh^{2}(2\hat{r})
\]
and as $\gamma \geq 0$, it follows that $\gamma=\frac{1}{2}(\cosh(2\hat{r})-\mathbf{1})$.
Then $\gamma$, $\alpha$ is associated with the centered pure quasifree
state whose symplectic transformation is $\exp[\hat{r}]$.

\end{proof}

\section{\label{sec:Energy-functional}Energy Functional}

\emph{Notation:} We first recall that, as before, we denote by~$\vec{k}$, and $|\vec{k}|$
the multiplication operators $\vec{k}\otimes\mathbf{1}_{\mathbb{C}^{2}}$ and $|\vec{k}|\otimes\mathbf{1}_{\mathbb{C}^{2}}$
on $\mathcal{Z}=L^{2}(S_{\sigma,\Lambda}\times\mathbb{Z}_{2})$, with
three components in the case of $\vec{k}$.

We now work at fixed values of total momentum $\vec{p} \in \RR^3$. The operator $H_{g,\vec{p}}$
is given by 
\[
H_{g,\vec{p}}=\frac{1}{2}(d\Gamma(\vec{k})+2\rRe \: a^{*}(\vec{G})-\vec{p})^{\cdot2}+d\Gamma(|\vec{k}|)
\,,\]
where $\vec{G}(k)=\vec{G}(\vec{k},\pm):= g |\vec{k}|^{-1/2} \vec{\varepsilon}_{\pm}(\vec{k})$.
The energy of a pure quasifree state $\rho$ associated with $f \in \cZ$, $\gamma \in \cL^1(\cZ)$, $\alpha \in \cZ^{\vee 2}$ is
\begin{align} \label{eq-VB-3}
\mathcal{E}_{g,\vec{p}}(f,\gamma,\alpha) 
\ := \ 
\Tr[H_{g,\vec{p}}\,\rho]
\,,
\end{align}
where $\cZ$ is the $\mathbb{C}$-Hilbert space $\mathcal{Z}=L^{2}(S_{\sigma,\Lambda}\times\mathbb{Z}_{2})$
and $\cL^1(\cZ)$ is the space of trace class operators on $\mathcal{Z}$.
\begin{prop}
The energy functional \eqref{eq-VB-3} is
\begin{align}
\mathcal{E}_{g,\vec{p}}(f,\gamma,\alpha) & = \frac{1}{2}\Big\{(\Tr[\gamma\vec{k}]+f^{*}\vec{k}f+2\rRe(f^{*}\vec{G})-\vec{p})^{\cdot2}\nonumber \\
 & \phantom{=}+\Tr[\gamma\vec{k}\cdot\gamma\vec{k}]+\alpha^{*}(\vec{k}\cdot\otimes\vec{k})\alpha+\Tr[|\vec{k}|^{2}\gamma]\nonumber \\
 & \phantom{=}+2\rRe\{\alpha^{*}[(\vec{G}+\vec{k}f)^{\cdot\vee2}]\}+\Tr[(2\gamma+\mathbf{1})(\vec{G}+\vec{k}f)\cdot(\vec{G}+\vec{k}f)^{*}]\Big\}\nonumber \\
 & \phantom{=}+\Tr[\gamma|\vec{k}|]+f^{*}|\vec{k}|f\,.\label{eq:Energy}
\end{align}
where the following positivity properties hold
\begin{align*}
(\Tr[\gamma\vec{k}]+f^{*}\vec{k}f+2\rRe(f^{*}\vec{G})-\vec{p})^{\cdot2} & \geq0\,,\\
\Tr[\gamma\vec{k}\cdot\gamma\vec{k}]+\Tr[\gamma\vec{k}]^{\cdot2}+\alpha^{*}(\vec{k}\cdot\otimes\vec{k})\alpha+\Tr[|\vec{k}|^{2}\gamma] & \geq0\,,\\
(\Tr[\gamma\vec{k}]+f^{*}\vec{k}f+2\rRe(f^{*}\vec{G})-\vec{p})^{\cdot2}\qquad\\
+\Tr[\gamma\vec{k}\cdot\gamma\vec{k}]+\alpha^{*}(\vec{k}\cdot\otimes\vec{k})\alpha+\Tr[|\vec{k}|^{2}\gamma] & \geq0\,,\\
2\rRe(\alpha^{*}((\vec{G}+\vec{k}f)^{\cdot\vee2}))+\Tr[(2\gamma+\mathbf{1})(\vec{G}+\vec{k}f)\cdot(\vec{G}+\vec{k}f)^{*}] & \geq0\,.
\end{align*}
The energy of a pure quasifree state in the variables $f$ and $\hat{r}$ is
\begin{align}
\hat{\mathcal{E}}_{g,\vec{p}}(f,\hat{r}) & =\tfrac{1}{2}\big\{(\Tr[\tfrac{1}{2}(\cosh(2\hat{r})-1)\vec{k}]+f^{*}\vec{k}f+2\rRe(f^{*}\vec{G})-\vec{p})^{\cdot2}\nonumber \\
 & \phantom{=}+\Tr[\tfrac{1}{2}(\cosh(2\hat{r})-1)\vec{k}\cdot\tfrac{1}{2}(\cosh(2\hat{r})-1)\vec{k}]\nonumber \\
 & \phantom{=}+\Tr[\tfrac{1}{2}\sinh(2\hat{r})\vec{k}\cdot\tfrac{1}{2}\sinh(2r)\vec{k}]+\Tr[|\vec{k}|^{2}\tfrac{1}{2}(\cosh(2\hat{r})-1)]\nonumber \\
 & \phantom{=}+2\rRe\langle\tfrac{1}{2}\sinh(2\hat{r})(\vec{G}+\vec{k}f);(\vec{G}+\vec{k}f)\rangle\nonumber \\
 & \phantom{=}+\Tr[(2\tfrac{1}{2}(\cosh(2\hat{r})-1)+\mathbf{1})(\vec{G}+\vec{k}f)\cdot(\vec{G}+\vec{k}f)^{*}]\big\}\nonumber \\
 & \phantom{=}+\Tr[\tfrac{1}{2}(\cosh(2\hat{r})-1)|\vec{k}|]+f^{*}|\vec{k}|f\,.\end{align}
\end{prop}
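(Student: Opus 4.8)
The plan is to compute $\Tr[H_{g,\vec p}\,\rho]$ by expanding the square in $H_{g,\vec p}$ and evaluating every resulting moment against $\rho$ with the help of Wick's theorem, Lemma~\ref{lem-1}(ii). The first move is to eliminate the one-point function. Writing $\rho = W(-i\sqrt 2 f)\,\tilde\rho\,W(-i\sqrt 2 f)^{*}$ with $\tilde\rho$ centered quasifree and using $\Tr[A\rho]=\Tr[W(-i\sqrt 2 f)^{*}AW(-i\sqrt 2 f)\,\tilde\rho]$, the Weyl conjugation implements the shift $a(k)\mapsto a(k)+f(k)$, $a^{*}(k)\mapsto a^{*}(k)+\overline{f(k)}$. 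Under it $d\Gamma(\vec k)\mapsto d\Gamma(\vec k)+a^{*}(\vec k f)+a(\vec k f)+f^{*}\vec k f$, while the field term $a^{*}(\vec G)+a(\vec G)$ only picks up the $c$-number $2\rRe(f^{*}\vec G)$. Collecting the linear parts, the operator inside the square becomes $d\Gamma(\vec k)+a^{*}(\vec G+\vec k f)+a(\vec G+\vec k f)+\vec c$ with $\vec c:=f^{*}\vec k f+2\rRe(f^{*}\vec G)-\vec p$, which is exactly how the dressed coupling $\vec G+\vec k f$ appears; likewise $d\Gamma(|\vec k|)\mapsto d\Gamma(|\vec k|)+a^{*}(|\vec k|f)+a(|\vec k|f)+f^{*}|\vec k|f$, of which only the number part and the $c$-number survive the centered expectation.

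Next I would set $\vec M:=d\Gamma(\vec k)+a^{*}(\vec G+\vec k f)+a(\vec G+\vec k f)$ and split $\langle(\vec M+\vec c)^{\cdot2}\rangle_{\tilde\rho}=\langle\vec M\rangle_{\tilde\rho}^{\cdot2}+2\,\vec c\cdot\langle\vec M\rangle_{\tilde\rho}+\vec c^{\,\cdot2}+\mathrm{Var}_{\tilde\rho}(\vec M)$. Since $\tilde\rho$ is centered, $\langle\vec M\rangle_{\tilde\rho}=\Tr[\gamma\vec k]$, so the first three terms assemble into $(\Tr[\gamma\vec k]+\vec c)^{\cdot2}$, the leading square in \eqref{eq:Energy}. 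The variance is then computed by Wick's theorem. Because a product of $d\Gamma(k_i)$ (quadratic) with a field operator (linear) is cubic and hence has vanishing centered expectation, the cross terms drop out and $\mathrm{Var}_{\tilde\rho}(\vec M)$ splits into a number-number part and a field-field part. The number-number contractions of $d\Gamma(\vec k)^{\cdot2}$ produce $\Tr[\gamma\vec k\cdot\gamma\vec k]$ and $\alpha^{*}(\vec k\cdot\otimes\vec k)\alpha$ from the pairings $\langle a^{*}a^{*}\rangle\langle aa\rangle$ and $\langle a^{*}a\rangle\langle a^{*}a\rangle$, while the canonical commutator in the remaining pairing yields the diagonal term $\Tr[|\vec k|^{2}\gamma]$. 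The field-field contractions of $(a^{*}(\vec h)+a(\vec h))^{\cdot2}$ with $\vec h:=\vec G+\vec k f$ produce $2\rRe\{\alpha^{*}[\vec h^{\cdot\vee2}]\}$ from the $\langle aa\rangle,\langle a^{*}a^{*}\rangle$ pairings and $\Tr[(2\gamma+\mathbf{1})\,\vec h\cdot\vec h^{*}]$ from the $\langle a^{*}a\rangle$ pairing together with its commutator term (which furnishes the $+\mathbf{1}$). Adding the field-energy expectation $\langle d\Gamma(|\vec k|)\rangle_{\tilde\rho}+f^{*}|\vec k|f=\Tr[\gamma|\vec k|]+f^{*}|\vec k|f$ and collecting all terms reproduces \eqref{eq:Energy}.

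The four positivity statements then require no new work: each is the expectation of a manifestly non-negative operator, recognized from the same bookkeeping. The first is $\langle\vec M+\vec c\rangle^{\cdot2}$, a sum of squares of real numbers. The third equals $\langle(d\Gamma(\vec k)+\vec c)^{\cdot2}\rangle_{\tilde\rho}$, and the second is the same with $\vec c=0$; both are expectations of sums of squares of the self-adjoint operators $d\Gamma(k_j)+c_j$ (here $\vec c$ is real). The fourth equals $\langle(a^{*}(\vec h)+a(\vec h))^{\cdot2}\rangle_{\tilde\rho}=2\sum_j\langle\Phi(h_j)^{2}\rangle_{\tilde\rho}$, again an expectation of a sum of squares of self-adjoint operators. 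Finally, the formula for $\hat{\mathcal E}_{g,\vec p}(f,\hat r)$ is obtained by substituting $\gamma=\tfrac12(\cosh(2\hat r)-\mathbf{1})$ and $\langle z_1\otimes z_2,\alpha\rangle=\langle z_1,\tfrac12\sinh(2\hat r)z_2\rangle$ from Proposition~\ref{pro:gamma_r_alpha_r} into \eqref{eq:Energy}, with no further computation.

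I expect the main obstacle to be the careful Wick bookkeeping, in particular tracking the canonical-commutator ($\delta$-function) contributions that generate the extra $\Tr[|\vec k|^{2}\gamma]$ term and the $+\mathbf{1}$ inside $(2\gamma+\mathbf{1})$, and correctly translating the pairings $\langle aa\rangle$, $\langle a^{*}a^{*}\rangle$ into the vector $\alpha\in\mathcal Z^{\vee2}$ with the symmetric products $\vee$ and the component-wise ``$\cdot$'' over the three spatial directions; the vector-index contraction adds one more layer to keep straight.
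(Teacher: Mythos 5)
Your proposal is correct and follows essentially the same route as the paper: Weyl-conjugation to center the state, discarding the odd-order terms against the centered quasifree state, Wick/CCR evaluation of $\langle(d\Gamma(\vec{k})+\vec{u})^{\cdot2}\rangle_{\tilde\rho}$ and $\langle(a^{*}(\vec{\varphi})+a(\vec{\varphi}))^{\cdot2}\rangle_{\tilde\rho}$ (the paper's Propositions~\ref{pro:positivity1} and~\ref{pro:positivity2}, resting on Lemma~\ref{lem:Tr[rho_XX]}), positivity read off as expectations of squares of self-adjoint operators, and substitution of the $\hat{r}$-parametrization from Proposition~\ref{pro:gamma_r_alpha_r}. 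The only quibble is cosmetic: if your pairing attributions were meant respectively, they are swapped --- $\Tr[\gamma\vec{k}\cdot\gamma\vec{k}]$ arises from the $\langle a^{*}a\rangle\langle a^{*}a\rangle$ (exchange) pairings and $\alpha^{*}(\vec{k}\cdot\otimes\vec{k})\alpha$ from the $\langle a^{*}a^{*}\rangle\langle aa\rangle$ pairings --- which affects nothing in the final formula.
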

\begin{proof}
Using the Weyl operators,
\[
\mathcal{E}_{g,\vec{p}}(f,\gamma,\alpha):=\Tr[H_{g,\vec{p}}\rho]=\Tr[H_{g,\vec{p}}(f)\tilde{\rho}]\]
where~$H_{g,\vec{p}}(f)=W(\sqrt{2}f/i)^{*}H_{g,\vec{p}}W(\sqrt{2}f/i)$
and $\tilde{\rho}=W(\sqrt{2}f/i)^{*}\rho W(\sqrt{2}f/i)$, so that
$\tilde{\rho}$ is centered. Modulo terms of odd order, which
vanish when we take the trace against a centered quasifree state, $H_{g,\vec{p}}(f)$ equals
\begin{align*}
H_{g,\vec{p}}(f) & = \frac{1}{2}\big(d\Gamma(\vec{k})+f^{*}\vec{k}f+2\rRe\big(a^{*}(\vec{k}f+\vec{G})\big)+2\rRe(f^{*}\vec{G})-\vec{p}\big)^{\cdot2}\\
 & \qquad+d\Gamma(|\vec{k}|)+f^{*}|\vec{k}|f + \mathrm{odd}\\
 & = \frac{1}{2}\big(d\Gamma(\vec{k})+f^{*}\vec{k}f+2\rRe(f^{*}\vec{G})-\vec{p}\big)^{\cdot2}\\
 & \qquad+\frac{1}{2}\big(2\rRe\big(a^{*}(\vec{k}f+\vec{G})\big)\big)^{\cdot2}+d\Gamma(|\vec{k}|)+f^{*}|\vec{k}|f + \mathrm{odd}\,.
\end{align*}
To compute $\mathcal{E}(f,\gamma,\alpha)$ we are thus lead to compute,
for $\vec{\varphi}\in\mathcal{Z}^{3}$ and $\vec{u}\in\mathbb{R}^{3}$,
\[
\Tr\big[ \tilde{\rho} \, (d\Gamma(\vec{k})+\vec{u})^{\cdot2} \big]
\qquad \mbox{and} \qquad
\Tr\big[ \tilde{\rho} \, (2\rRe\{a(\vec{\varphi})\})^{\cdot2} \big]
\,.\]
The expression of the energy as a function of $(f,\gamma,\alpha)$
then follows from Propositions~\ref{pro:positivity1} and~\ref{pro:positivity2}.
The expression of the energy as a function of $(f,r)$ follows from
Proposition~\ref{pro:gamma_r_alpha_r}.\end{proof}
\begin{prop}
\label{pro:positivity1}Let $\vec{u}\in\mathbb{R}^{3}$, then\begin{align*}
0\leq\Tr[\tilde{\rho}(d\Gamma(\vec{k})+\vec{u})^{\cdot2}] & =(\Tr[\gamma\vec{k}]+\vec{u})^{\cdot2}-\Tr[\gamma\vec{k}]^{\cdot2}\\
 & \phantom{=}+\Tr[\gamma\vec{k}\cdot\gamma\vec{k}]+\Tr[\gamma\vec{k}]^{\cdot2}+\alpha^{*}(\vec{k}\cdot\!\otimes\vec{k})\alpha+\Tr[|\vec{k}|^{2}\gamma]\,.\end{align*}
\end{prop}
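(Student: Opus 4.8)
The plan is to prove the inequality and the identity separately. The inequality is immediate: since $\vec{u}\in\mathbb{R}^{3}$ is a constant vector, the operator $(d\Gamma(\vec{k})+\vec{u})^{\cdot2}=\sum_{a=1}^{3}(d\Gamma(k_{a})+u_{a})^{2}$ is a sum of squares of self-adjoint operators, hence non-negative, and as $\tilde{\rho}\geq0$ the trace against it is $\geq0$.

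For the identity I would first expand each component square as $(d\Gamma(k_{a})+u_{a})^{2}=d\Gamma(k_{a})^{2}+2u_{a}\,d\Gamma(k_{a})+u_{a}^{2}$, reducing everything to the one-body expectation $\Tr[\tilde{\rho}\,d\Gamma(k_{a})]$ and the two-body expectation $\Tr[\tilde{\rho}\,d\Gamma(k_{a})^{2}]$. Working in the orthonormal basis $(\psi_{i})$ with $d\Gamma(X)=\sum_{i,l}\langle\psi_{i},X\psi_{l}\rangle a^{*}(\psi_{i})a(\psi_{l})$, the definition of $\gamma$ yields the standard linear identity $\Tr[\tilde{\rho}\,d\Gamma(X)]=\Tr[\gamma X]$; in particular $\Tr[\tilde{\rho}\,d\Gamma(k_{a})]=\Tr[\gamma k_{a}]$, which supplies the cross term $2\vec{u}\cdot\Tr[\gamma\vec{k}]$ and, after summing $u_{a}^{2}$, the term $\vec{u}^{\,\cdot2}$.

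For the quadratic term I would use the CCR to normal order, $d\Gamma(X)^{2}=d\Gamma(X^{2})+\sum_{i,l,m,n}\langle\psi_{i},X\psi_{l}\rangle\langle\psi_{m},X\psi_{n}\rangle a^{*}(\psi_{i})a^{*}(\psi_{m})a(\psi_{l})a(\psi_{n})$, the diagonal contraction $\delta_{lm}$ producing $d\Gamma(X^{2})$. Taking the trace, the first piece gives $\Tr[\gamma k_{a}^{2}]$, which sums to $\Tr[|\vec{k}|^{2}\gamma]$. The normal-ordered quartic is evaluated by Wick's theorem for centered quasifree states (Lemma~\ref{lem-1}(ii)), equivalently by the formula $\tilde{\rho}^{2,2}=\gamma\otimes\gamma+\gamma\otimes\gamma\,\Ex+\alpha\alpha^{*}$ of~\eqref{eq:rho_2_2_quasi_free}: the three pairings of $b_{1}b_{2}b_{3}b_{4}=a^{*}(\psi_{i})a^{*}(\psi_{m})a(\psi_{l})a(\psi_{n})$ contribute, respectively, the $\alpha\alpha^{*}$ contraction $\langle a^{*}a^{*}\rangle\langle aa\rangle$, the direct $\gamma\otimes\gamma$ contraction, and the exchange $\gamma\otimes\gamma\,\Ex$ contraction. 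Using the definitions of $\gamma$ and $\alpha$ and carrying out the index sums (invoking the symmetry of $\alpha$ in the $\alpha\alpha^{*}$ term), these evaluate to $\alpha^{*}(k_{a}\otimes k_{a})\alpha$, $(\Tr[\gamma k_{a}])^{2}$, and $\Tr[\gamma k_{a}\gamma k_{a}]$; summing over $a$ yields $\alpha^{*}(\vec{k}\cdot\otimes\vec{k})\alpha$, $\Tr[\gamma\vec{k}]^{\cdot2}$, and $\Tr[\gamma\vec{k}\cdot\gamma\vec{k}]$.

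Collecting all terms gives $\Tr[\tilde{\rho}(d\Gamma(\vec{k})+\vec{u})^{\cdot2}]=\vec{u}^{\,\cdot2}+2\vec{u}\cdot\Tr[\gamma\vec{k}]+\Tr[\gamma\vec{k}]^{\cdot2}+\Tr[\gamma\vec{k}\cdot\gamma\vec{k}]+\alpha^{*}(\vec{k}\cdot\otimes\vec{k})\alpha+\Tr[|\vec{k}|^{2}\gamma]$, and the claimed form follows upon recognizing $\vec{u}^{\,\cdot2}+2\vec{u}\cdot\Tr[\gamma\vec{k}]+\Tr[\gamma\vec{k}]^{\cdot2}=(\Tr[\gamma\vec{k}]+\vec{u})^{\cdot2}$ and inserting $\pm\Tr[\gamma\vec{k}]^{\cdot2}$. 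The main obstacle is the basis-dependent bookkeeping of the three Wick contractions---in particular reconstructing the invariant forms $\Tr[\gamma\vec{k}\cdot\gamma\vec{k}]$ (exchange) and $\alpha^{*}(\vec{k}\cdot\otimes\vec{k})\alpha$ (pairing, where the symmetry of $\alpha$ is used to align the indices) from the raw index sums, while keeping the normal-ordering contraction that feeds $\Tr[|\vec{k}|^{2}\gamma]$ consistent throughout.
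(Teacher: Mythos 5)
Your proposal is correct and follows essentially the same route as the paper: expand $(d\Gamma(\vec{k})+\vec{u})^{\cdot2}$, use $\Tr[\tilde{\rho}\,d\Gamma(\vec{k})]=\Tr[\gamma\vec{k}]$, complete the square by adding and subtracting $\Tr[\gamma\vec{k}]^{\cdot2}$, and evaluate $\Tr[\tilde{\rho}\,d\Gamma(k_{a})^{2}]$ by normal ordering via the CCR and applying the quasifree formula $\tilde{\rho}^{2,2}=\gamma\otimes\gamma+\gamma\otimes\gamma\,\Ex+\alpha\alpha^{*}$ of Eq.~\eqref{eq:rho_2_2_quasi_free}. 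The only difference is organizational: the paper packages the quartic computation as Lemma~\ref{lem:Tr[rho_XX]} (stated for general $X\in\mathcal{B}^{1,1}$, here applied with $X=X^{*}=k_{a}$), whereas you carry out the same Wick-contraction bookkeeping inline in an orthonormal basis.
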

This condition is used with $\vec{u}=\vec{p}-f^{*}\vec{k}f-2\rRe(f^{*}\vec{G})$.
\begin{proof}
Indeed, \[
(d\Gamma(\vec{k})+\vec{u})^{\cdot2} \ = \ d\Gamma(\vec{k})^{\cdot2}+2d\Gamma(\vec{k})\cdot\vec{u}+\vec{u}{}^{\,\cdot2} \, . 
\]
Then we use that $\Tr[\tilde{\rho}\, d\Gamma(\vec{k})]=\Tr[\gamma\vec{k}]$,
add and substract $\Tr[\gamma\vec{k}]^{\cdot2}$ to complete the square and
compute $\Tr[\tilde{\rho}\, d\Gamma(\vec{k})^{\cdot2}]$ using Lemma~\ref{lem:Tr[rho_XX]}.
\end{proof}
\begin{lem}
\label{lem:Tr[rho_XX]}Let $X\in\mathcal{B}^{1,1}$, then 
\[
0 \ \leq \ \Tr\big[\tilde{\rho}d\Gamma(X)d\Gamma(X)^{*}\big] \ = \ 
\Tr[\gamma X\gamma X^{*}]+|\Tr[\gamma X]|^{2}+\alpha^{*}(X\otimes X^{*})\alpha+\Tr[XX^{*}\gamma]
\,.\]
\end{lem}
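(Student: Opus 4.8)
The plan is to reduce both the positivity and the algebraic identity to a computation in a fixed orthonormal basis $\{\psi_i\}_i$ of $\mathcal{Z}$, followed by an application of the Wick theorem for the centered quasifree state $\tilde{\rho}$. The positivity is immediate and independent of the identity: writing $A:=d\Gamma(X)$ and noting $d\Gamma(X)^*=d\Gamma(X^*)$, one has $\Tr[\tilde{\rho}\,AA^*]=\Tr[\tilde{\rho}^{1/2}AA^*\tilde{\rho}^{1/2}]\geq0$, since $\tilde{\rho}\geq0$ and $\tilde{\rho}^{1/2}AA^*\tilde{\rho}^{1/2}$ is a positive operator. So the whole content lies in the stated formula.

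First I would expand $X=\sum_{i,j}X_{ij}\,|\psi_i\rangle\langle\psi_j|$, so that $d\Gamma(X)=\sum_{ij}X_{ij}\,a^*(\psi_i)a(\psi_j)$ and $d\Gamma(X^*)=\sum_{kl}\overline{X_{lk}}\,a^*(\psi_k)a(\psi_l)$. Forming the product and normal-ordering by pushing the interior annihilation operator to the right via the CCR $[a(\psi_j),a^*(\psi_k)]=\delta_{jk}$ yields
\[
a^*(\psi_i)a(\psi_j)a^*(\psi_k)a(\psi_l)=a^*(\psi_i)a^*(\psi_k)a(\psi_j)a(\psi_l)+\delta_{jk}\,a^*(\psi_i)a(\psi_l).
\]
Taking $\Tr[\tilde{\rho}\,\cdot\,]$, the commutator term contributes $\sum_{i,j,l}X_{ij}\overline{X_{lj}}\,\Tr[\tilde{\rho}\,a^*(\psi_i)a(\psi_l)]$; recognizing $\overline{X_{lj}}=(X^*)_{jl}$ and $\Tr[\tilde{\rho}\,a^*(\psi_i)a(\psi_l)]=\gamma_{li}$ (from the defining relation for $\gamma$), this collapses to $\Tr[XX^*\gamma]$, the last term of the formula.

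For the remaining normal-ordered four-point function I would invoke the Wick theorem for centered quasifree states (Lemma~\ref{lem-1}(ii), i.e.\ Eq.~\eqref{eq-I.29} with $N=2$), which expands it into the three pairings
\[
\langle a^*_i a^*_k\rangle\langle a_j a_l\rangle+\langle a^*_i a_j\rangle\langle a^*_k a_l\rangle+\langle a^*_i a_l\rangle\langle a^*_k a_j\rangle,
\]
where $a^*_i:=a^*(\psi_i)$, $a_i:=a(\psi_i)$ and $\langle\,\cdot\,\rangle:=\Tr[\tilde{\rho}\,\cdot\,]$. Using the defining relations for $\gamma$ and $\alpha$ together with $\langle a^*_i a^*_k\rangle=\overline{\alpha_{ki}}$ and $\langle a_j a_l\rangle=\alpha_{jl}$, the second pairing sums to $\Tr[\gamma X]\,\overline{\Tr[\gamma X]}=|\Tr[\gamma X]|^2$, the third to $\Tr[\gamma X\gamma X^*]$, and the first to $\alpha^*(X\otimes X^*)\alpha$; collecting the four contributions gives the claimed identity.

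The main obstacle I anticipate is the index bookkeeping in the $\alpha$-pairing: the raw pattern $\sum_{ijkl}X_{ij}\overline{X_{lk}}\,\overline{\alpha_{ki}}\,\alpha_{jl}$ does not literally read as $\alpha^*(X\otimes X^*)\alpha$, and one must use the symmetry $\alpha_{ki}=\alpha_{ik}$ (valid since $\alpha\in\mathcal{Z}^{\vee2}$, the annihilation operators commuting) to align the indices; similarly, care is needed in the $\gamma$-pairings to distinguish $\Tr[\gamma X]$ from its complex conjugate $\Tr[\gamma X^*]=\overline{\Tr[\gamma X]}$, where self-adjointness of $\gamma$ is used. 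No analytic difficulty arises, as $X\in\mathcal{B}^{1,1}$ is bounded and $\Tr[\tilde{\rho}\,N_f]<\infty$ justifies the rearrangement of the (absolutely convergent) sums and the trace-class character of all operators involved.
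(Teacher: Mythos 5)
Your proof is correct and follows essentially the same route as the paper: normal-order $d\Gamma(X)d\Gamma(X)^*$ via the CCR to produce the quartic term plus $d\Gamma(XX^*)$, then evaluate the quartic expectation by the Wick expansion of the centered quasifree state. The only cosmetic difference is that the paper packages the three Wick pairings into the pre-computed identity $\tilde{\rho}^{2,2}=\gamma\otimes\gamma+\gamma\otimes\gamma\,\Ex+\alpha\alpha^{*}$ (Eq.~\eqref{eq:rho_2_2_quasi_free}) and contracts it against $X\otimes X^{*}$, whereas you apply Lemma~\ref{lem-1}(ii) directly in an orthonormal basis and handle the index bookkeeping (including the symmetry $\alpha_{ki}=\alpha_{ik}$) by hand.
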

\begin{proof}
Indeed, using Equation~(\ref{eq:rho_2_2_quasi_free}),\begin{align*}
\Tr & [\tilde{\rho}d\Gamma(X)d\Gamma(X)^{*}]\\
 & =\Tr[\tilde{\rho}(\negmedspace\int\!\! X\!(k_{1},k_{1}^{\prime})X\!(k_{2},k_{2}^{\prime})a^{*}\!(k_{1})a^{*}\!(k_{2})a(k_{2}^{\prime})a(k_{1}^{\prime})dk_{1}dk_{2}dk_{1}^{\prime}dk_{2}^{\prime}+d\Gamma(XX^{*})]\\
 & =\Tr[(\gamma\otimes\gamma+\gamma\otimes\gamma\, Ex+\alpha\alpha^{*})(X\otimes X^{*})]+\Tr[\gamma\, XX^{*}]\\
 & =\Tr[\gamma X]\Tr[\gamma X^{*}]+\Tr[\gamma X\gamma X^{*}]+\alpha^{*}(X\otimes X^{*})\alpha+\Tr[\gamma\, XX^{*}]\,.\tag*{\qedhere}\end{align*}
\end{proof}
\begin{prop}
\label{pro:positivity2}Let $\varphi\in\mathcal{Z}$, then\begin{equation}
0\leq\Tr[\tilde{\rho}(a^{*}(\varphi)+a(\varphi))^{2}]=2\rRe(\alpha^{*}(\varphi^{\vee2}))+\Tr[(2\gamma+\mathbf{1})\varphi\varphi^{*}]\label{eq:positivity2}\end{equation}
and $|2\rRe(\alpha^{*}(\varphi^{\cdot\vee2}))|\leq\Tr[(2\gamma+\mathbf{1})\varphi\varphi^{*}]$.
\end{prop}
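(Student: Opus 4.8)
The plan is to expand the square $(a^{*}(\varphi)+a(\varphi))^{2}$ by means of the canonical commutation relations and then read off each resulting monomial as a matrix element of $\gamma$ or of $\alpha$. Writing $[a(\varphi),a^{*}(\varphi)]=\varphi^{*}\varphi=|\varphi|^{2}$, one has
\begin{align*}
(a^{*}(\varphi)+a(\varphi))^{2}
\ = \
a^{*}(\varphi)^{2}+a(\varphi)^{2}+2\,a^{*}(\varphi)a(\varphi)+|\varphi|^{2}\mathbf{1}\,.
\end{align*}
Taking the trace against the centered state $\tilde{\rho}$ and inserting the definitions $\langle\varphi,\gamma\varphi\rangle=\Tr[\tilde{\rho}\,a^{*}(\varphi)a(\varphi)]$ and $\alpha^{*}(\varphi^{\vee2})=\langle\alpha,\varphi^{\vee2}\rangle=\Tr[\tilde{\rho}\,a^{*}(\varphi)^{2}]$ will produce the three contributions directly. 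Here I would use that $\tilde{\rho}=\tilde{\rho}^{*}$ to identify $\Tr[\tilde{\rho}\,a(\varphi)^{2}]=\overline{\Tr[\tilde{\rho}\,a^{*}(\varphi)^{2}]}$, so that the first two monomials combine into $2\rRe(\alpha^{*}(\varphi^{\vee2}))$; together with $\langle\varphi,\gamma\varphi\rangle=\Tr[\gamma\,\varphi\varphi^{*}]$ and $|\varphi|^{2}=\Tr[\varphi\varphi^{*}]$ this yields the asserted identity \eqref{eq:positivity2}. Note that this part uses no quasifree structure at all, only the definitions of $\gamma$ and $\alpha$ and the self-adjointness of $\tilde{\rho}$.

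For the positivity I would simply observe that $a^{*}(\varphi)+a(\varphi)=\sqrt{2}\,\Phi(\varphi)$ is self-adjoint, so that $(a^{*}(\varphi)+a(\varphi))^{2}\geq0$; since $\tilde{\rho}\geq0$ as well, $\Tr[\tilde{\rho}(a^{*}(\varphi)+a(\varphi))^{2}]=\Tr[\tilde{\rho}^{1/2}(a^{*}(\varphi)+a(\varphi))^{2}\tilde{\rho}^{1/2}]\geq0$. No further information about $\tilde{\rho}$ is needed at this stage.

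The remaining bound $|2\rRe(\alpha^{*}(\varphi^{\cdot\vee2}))|\leq\Tr[(2\gamma+\mathbf{1})\varphi\varphi^{*}]$ I would obtain by a phase rotation. Replacing $\varphi$ by $e^{i\theta}\varphi$ in the identity just proved leaves the Hermitian part $\Tr[(2\gamma+\mathbf{1})\varphi\varphi^{*}]$ invariant, since $\varphi\varphi^{*}\mapsto e^{i\theta}\varphi\,(e^{i\theta}\varphi)^{*}=\varphi\varphi^{*}$, while it rotates the anomalous part, $\alpha^{*}(\varphi^{\vee2})\mapsto e^{2i\theta}\alpha^{*}(\varphi^{\vee2})$, because $\alpha^{*}$ is $\mathbb{C}$-linear and $\varphi^{\vee2}\mapsto e^{2i\theta}\varphi^{\vee2}$. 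The positivity then reads $0\leq2\rRe(e^{2i\theta}\alpha^{*}(\varphi^{\vee2}))+\Tr[(2\gamma+\mathbf{1})\varphi\varphi^{*}]$ for every $\theta\in\mathbb{R}$, and choosing $\theta$ so that $e^{2i\theta}\alpha^{*}(\varphi^{\vee2})=-|\alpha^{*}(\varphi^{\vee2})|$ gives $2|\alpha^{*}(\varphi^{\vee2})|\leq\Tr[(2\gamma+\mathbf{1})\varphi\varphi^{*}]$, which is stronger than the claim since $|2\rRe(\alpha^{*}(\varphi^{\vee2}))|\leq2|\alpha^{*}(\varphi^{\vee2})|$.

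The whole argument is essentially bookkeeping. The only point I would treat as genuine content, rather than routine, is keeping the conventions for $\alpha$, $\alpha^{*}$ and the identification $\mathcal{B}^{0,2}\cong\mathcal{Z}^{\vee2}$ consistent, together with the self-adjointness step that merges $a^{*}(\varphi)^{2}$ and $a(\varphi)^{2}$ into a single real part; once the identity is in hand, the phase-rotation trick delivers the inequality with no additional input.
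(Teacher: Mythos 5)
Your proof is correct and follows essentially the same route as the paper: expand the square using the CCR and identify $\Tr[\tilde{\rho}\,a^{*}(\varphi)a(\varphi)]$ with $\Tr[\gamma\,\varphi\varphi^{*}]$ and $\Tr[\tilde{\rho}\,a^{*}(\varphi)^{2}]+\Tr[\tilde{\rho}\,a(\varphi)^{2}]$ with $2\rRe(\alpha^{*}(\varphi^{\vee2}))$, the positivity being immediate from $a^{*}(\varphi)+a(\varphi)=\sqrt{2}\,\Phi(\varphi)$ being self-adjoint and $\tilde{\rho}\geq0$. The only difference is that the paper's proof records just the identity and leaves the bound $|2\rRe(\alpha^{*}(\varphi^{\vee2}))|\leq\Tr[(2\gamma+\mathbf{1})\varphi\varphi^{*}]$ implicit, whereas you derive it explicitly via the phase rotation $\varphi\mapsto e^{i\theta}\varphi$, which in fact gives the slightly stronger estimate $2|\alpha^{*}(\varphi^{\vee2})|\leq\Tr[(2\gamma+\mathbf{1})\varphi\varphi^{*}]$.
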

This condition is used with the three components of $\vec{\varphi}=\vec{G}+\vec{k}f$.
\begin{proof}
A computation using the canonical commutation relations yields
\begin{align*}
Tr & [\tilde{\rho}\,(a^{*}(\varphi)+a(\varphi))^{2}]\\
 & =\Tr[\tilde{\rho}\,(a^{*}(\varphi))^{2}+\tilde{\rho}\,(a(\varphi))^{2}+\tilde{\rho}\,(a^{*}(\varphi)a(\varphi)+a(\varphi)a^{*}(\varphi))]\\
 & =\alpha^{*}\varphi^{\vee2}+\varphi^{\vee2*}\alpha+\Tr[\gamma\,\varphi\varphi^{*}+(\gamma+\mathbf{1})\psi\psi^{*}].
\end{align*}
\end{proof}

\section{\label{sec:Minimization-coherent-states}Minimization over Coherent States}

For this section we can take $\sigma=0$ if we consider the parameter
$f$ in the energy to be in~$\tcZ := L^{2}(S_{\sigma,\Lambda}\times\mathbb{Z}_{2},(\tfrac{1}{2}|\vec{k}|^{2}+|\vec{k}|)dk)$.
Recall that $S_{\sigma,\Lambda} = \{\vec{k}\in\mathbb{R}^{3}\,|\,\sigma\leq|\vec{k}|\leq\Lambda\}$.
\begin{rem}
For a coherent state (see Definition~\ref{def:coherent-states})
the energy reduces to
\begin{equation}
\mathcal{E}_{g,\vec{p}}(f)=\frac{1}{2}\|\vec{G}\|^{2}+\frac{1}{2}(f^{*}\vec{k}f+2\rRe(f^{*}\vec{G})-\vec{p})^{\cdot2}+f^{*}(\tfrac{1}{2}|\vec{k}|^{2}+|\vec{k}|)f
\,.\label{eq:energy_coherent-2}
\end{equation}
Note that, for $\sigma>0$, $\cZ = L^{2}(S_{\sigma,\Lambda}\times\mathbb{Z}_{2},dk) = \tcZ$, while for 
$\sigma=0$, $\cZ \subset \tcZ$, and $\mathcal{E}_{g,\vec{p}}(f)$ extends to~$\tcZ$ by using Equation~(\ref{eq:energy_coherent-2}).
\end{rem}
\begin{thm}
\label{thm:fixed-point-f-u}There exists a universal constant~$C<\infty$
such that, for $0\leq\sigma<\Lambda<\infty$, $g^{2}\ln(\Lambda+2)\leq C$
and $|\vec{p}|\leq1/3$, there exists a unique~$f_{\vec{p}}$ which
minimizes~$\mathcal{E}_{g,\vec{p}}$ in~$\tcZ$.
\begin{enumerate}
\item \label{enu:closed-system-f-u-1}The minimizer $f_{\vec{p}}$ solves
the system of equations\begin{align}
f_{\vec{p}} & =\frac{\vec{u}_{\vec{p}}\cdot\vec{G}}{\frac{1}{2}|\vec{k}|^{2}+|\vec{k}|-\vec{k}\cdot\vec{u}_{\vec{p}}}\,,\label{eq:f_u_bis1-1}\\
\vec{u}_{\vec{p}} & =\vec{p}-2\rRe(f_{\vec{p}}^{*}\vec{G})-f_{\vec{p}}^{*}\vec{k}f_{\vec{p}}\,,\label{eq:f_u_bis2-1}\end{align}
with $|\vec{u}_{\vec{p}}|\leq|\vec{p}|$.
\item For~$0\leq\sigma<\Lambda<\infty$,\[
\inf_{f\in \cZ} \mathcal{E}_{g,\vec{p}}(f) 
\ = \ 
\inf_{f\in \tcZ} \mathcal{E}_{g,\vec{p}}(f)
\ = \ 
\mathcal{E}_{g,\vec{p}}(f_{\vec{p}})
\,,\]
and for~$0<\sigma<\Lambda<\infty$, we have that $f_{\vec{p}} \in \cZ$.
\item \label{enu:asymptotic-devel-energy-1} For fixed~$g$, $\sigma$,
$\Lambda$, as a function of $\vec{p}$, \[
\mathcal{E}_{g,\vec{p}}(f_{\vec{p}})=\mathcal{E}_{g,\vec{p}}(0)-\vec{p}\cdot\vec{G}^{*}\frac{1}{\frac{1}{2}|\vec{k}|^{2}+|\vec{k}|+2\vec{G}\cdot\vec{G}^{*}}\vec{G}\cdot\vec{p}+\mathcal{O}(|\vec{p}|^{3})\,.\]

\item \label{enu:devel-energy-in-fp-1}For all~$f$ in~$\tcZ$,\begin{multline}
\mathcal{E}_{g,\vec{p}}(f_{\vec{p}}+f)=\mathcal{E}_{g,\vec{p}}(f_{\vec{p}})+f^{*}(\tfrac{1}{2}|\vec{k}|^{2}+|\vec{k}|-\vec{u}_{\vec{p}}\cdot\vec{k})f\\
+\tfrac{1}{2}\big(f^{*}\vec{k}f+2\rRe(f_{\vec{p}}^{*}\vec{k}f)+2\rRe(f^{*}\vec{G})\big)^{\cdot 2}\,.\label{eq:devel-energy-in-fp-1}\end{multline}

\item \label{enu:comparision-energies-0-fp-1}The energy $\mathcal{E}_{g,\vec{p}}(f_{\vec{p}})$ of the minimizer
compared to the energy of the vacuum state~$\mathcal{E}_{g,\vec{p}}(0)$ is
\[
\mathcal{E}_{g,\vec{p}}(f_{\vec{p}})
\ = \ 
\mathcal{E}_{g,\vec{p}}(0)-\tfrac{1}{2}2\rRe(f_{\vec{p}}^{*}\vec{u}_{\vec{p}}\cdot\vec{G})-\tfrac{1}{2}|\vec{u}_{\vec{p}}-\vec{p}|^{2}\,.\]
Note that the term $2\rRe(f_{\vec{p}}^{*}\vec{u}_{\vec{p}}\cdot\vec{G})$ is non-negative.
\end{enumerate}
\end{thm}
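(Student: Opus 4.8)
The plan is to characterize the minimizer as the unique solution of the stationarity (Euler--Lagrange) system \eqref{eq:f_u_bis1-1}--\eqref{eq:f_u_bis2-1}, to produce that solution by a contraction argument, and only afterwards to read off global minimality, the comparison formula, and the $\vec{p}$-expansion. First I would vary \eqref{eq:energy_coherent-2} in $f^{*}$: writing $\vec{u}=\vec{p}-2\rRe(f^{*}\vec{G})-f^{*}\vec{k}f$ so that the quartic bracket equals $-\vec{u}$, the vanishing of the derivative of $\tfrac12|\vec{u}|^{2}+f^{*}(\tfrac12|\vec{k}|^{2}+|\vec{k}|)f$ is precisely $(\tfrac12|\vec{k}|^{2}+|\vec{k}|-\vec{u}\cdot\vec{k})f=\vec{u}\cdot\vec{G}$, i.e.\ \eqref{eq:f_u_bis1-1} together with the definition \eqref{eq:f_u_bis2-1} of $\vec{u}$. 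Setting $D_{0}:=\tfrac12|\vec{k}|^{2}+|\vec{k}|$ and $D(\vec{u}):=D_{0}-\vec{u}\cdot\vec{k}$, the operator $D(\vec{u})$ is strictly positive as soon as $|\vec{u}|<1$, since then $\vec{u}\cdot\vec{k}\leq|\vec{k}|<D_{0}$, so the quotient $f[\vec{u}]:=(\vec{u}\cdot\vec{G})/D(\vec{u})$ is well defined.

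Next I would record the a priori bound $|\vec{u}|\leq|\vec{p}|$ valid for any solution with $|\vec{u}|<1$: testing \eqref{eq:f_u_bis2-1} against $\vec{u}$ and using $\vec{u}\cdot\vec{G}=D(\vec{u})f$ from \eqref{eq:f_u_bis1-1} gives $|\vec{u}|^{2}+\int(|\vec{k}|^{2}+2|\vec{k}|-\vec{u}\cdot\vec{k})|f|^{2}\,dk=\vec{u}\cdot\vec{p}$, and the integral is nonnegative, whence $|\vec{u}|^{2}\leq\vec{u}\cdot\vec{p}\leq|\vec{u}|\,|\vec{p}|$. For existence and uniqueness I would insert $f[\vec{u}]$ into \eqref{eq:f_u_bis2-1}, obtaining a self-consistent map $\Phi:\vec{u}\mapsto\vec{p}-2\rRe(f[\vec{u}]^{*}\vec{G})-f[\vec{u}]^{*}\vec{k}f[\vec{u}]$ on the ball $\{|\vec{u}|\leq\tfrac12\}$. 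The quantitative input is the weighted estimate $\|f[\vec{u}]\|_{\tcZ}^{2}\lesssim|\vec{u}|^{2}\,g^{2}\ln\tfrac{\Lambda+2}{\sigma+2}$, which follows from $D(\vec{u})\gtrsim D_{0}$ on the ball together with $\int D_{0}^{-1}|\vec{G}|^{2}\,dk\lesssim g^{2}\ln\tfrac{\Lambda+2}{\sigma+2}$; this last integral is exactly where the logarithm, and hence the smallness parameter $g^{2}\ln(\Lambda+2)$, enters. Combined with $|f^{*}\vec{k}f|\leq\|f\|_{\tcZ}^{2}$ and $|f^{*}\vec{G}|\leq\|f\|_{\tcZ}(\int D_{0}^{-1}|\vec{G}|^{2}\,dk)^{1/2}$, applied also to $\partial_{\vec{u}}f[\vec{u}]=\vec{G}/D+(\vec{u}\cdot\vec{G})\vec{k}/D^{2}$, these bounds show that for $C$ small enough $\Phi$ maps $\{|\vec{u}|\leq\tfrac12\}$ into itself and is a contraction there. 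Banach's fixed point theorem then yields a unique $\vec{u}_{\vec{p}}$, hence a unique $f_{\vec{p}}=f[\vec{u}_{\vec{p}}]\in\tcZ$ solving the system, and the a priori bound gives $|\vec{u}_{\vec{p}}|\leq|\vec{p}|$, which is item~\ref{enu:closed-system-f-u-1}.

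Global minimality and uniqueness I would then obtain from item~\ref{enu:devel-energy-in-fp-1}: substituting $f_{\vec{p}}+f$ into \eqref{eq:energy_coherent-2} and using the Euler--Lagrange relation to cancel the first-order term yields the identity \eqref{eq:devel-energy-in-fp-1}, whose remainder $f^{*}(D_{0}-\vec{u}_{\vec{p}}\cdot\vec{k})f+\tfrac12(\cdots)^{\cdot2}$ is the sum of a strictly positive quadratic form (since $D(\vec{u}_{\vec{p}})>0$ for $|\vec{u}_{\vec{p}}|\leq|\vec{p}|<1$) and a perfect square; hence $f_{\vec{p}}$ is the unique global minimizer over $\tcZ$. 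The equality $\inf_{\cZ}=\inf_{\tcZ}$ of item~2 follows because $\cZ$ is dense in $\tcZ$ and $\mathcal{E}_{g,\vec{p}}$ is continuous in the $\tcZ$-norm (by the two displayed bounds on $f^{*}\vec{k}f$ and $f^{*}\vec{G}$); for $\sigma>0$ the weights $dk$ and $D_{0}\,dk$ are comparable, so $\cZ=\tcZ$ as sets and $f_{\vec{p}}\in\cZ$. For the comparison of item~\ref{enu:comparision-energies-0-fp-1} I would expand $\mathcal{E}_{g,\vec{p}}(f_{\vec{p}})-\mathcal{E}_{g,\vec{p}}(0)$ directly from \eqref{eq:energy_coherent-2} and \eqref{eq:f_u_bis1-1}--\eqref{eq:f_u_bis2-1}, the nonnegativity of $2\rRe(f_{\vec{p}}^{*}\vec{u}_{\vec{p}}\cdot\vec{G})=2\int D(\vec{u}_{\vec{p}})|f_{\vec{p}}|^{2}\,dk$ being immediate from \eqref{eq:f_u_bis1-1}.

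Finally, for the small-$\vec{p}$ asymptotics of item~\ref{enu:asymptotic-devel-energy-1} I would use that the fixed point depends smoothly on $\vec{p}$ (implicit function theorem applied to the contraction), with $\vec{u}_{\vec{p}}=(\mathbf{1}+2M)^{-1}\vec{p}+\mathcal{O}(|\vec{p}|^{2})$ and $f_{\vec{p}}=D_{0}^{-1}(\vec{u}_{\vec{p}}\cdot\vec{G})+\mathcal{O}(|\vec{p}|^{2})$, where $M$ is the nonnegative $3\times3$ matrix with entries $M_{ij}=G_{i}^{*}D_{0}^{-1}G_{j}$; feeding this into the item~\ref{enu:comparision-energies-0-fp-1} comparison reduces the quadratic part to $-\vec{p}^{T}(\mathbf{1}+2M)^{-1}M\vec{p}+\mathcal{O}(|\vec{p}|^{3})$, and the push-through identity $D_{0}^{-1}\vec{G}(\mathbf{1}+2M)^{-1}=(D_{0}+2\vec{G}\cdot\vec{G}^{*})^{-1}\vec{G}$ rewrites this as $-\vec{p}\cdot\vec{G}^{*}(D_{0}+2\vec{G}\cdot\vec{G}^{*})^{-1}\vec{G}\cdot\vec{p}$, the stated expression. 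I expect the main obstacle to be the weighted contraction estimate of the second paragraph: extracting the sharp $g^{2}\ln\tfrac{\Lambda+2}{\sigma+2}$ behaviour from the singular integrals, controlling $\partial_{\vec{u}}f[\vec{u}]$ uniformly on the ball, and checking simultaneously the self-mapping property and a contraction factor $<1$. Everything else is either the short test-function computation giving $|\vec{u}|\leq|\vec{p}|$ or the algebraic identities \eqref{eq:devel-energy-in-fp-1} and the item~\ref{enu:comparision-energies-0-fp-1} comparison, which are routine once the Euler--Lagrange relation is available.
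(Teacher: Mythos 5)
Your proposal is correct and follows essentially the same route as the paper: derive the Euler--Lagrange system, reduce existence and uniqueness to a Banach--Picard fixed point for $\vec{u}\in\mathbb{R}^{3}$ on a ball using the smallness of $g^{2}\ln(\Lambda+2)$, and then use the exact second-order identity \eqref{eq:devel-energy-in-fp-1} to get strict global minimality, the comparison of item~\ref{enu:comparision-energies-0-fp-1}, and the small-$|\vec{p}|$ expansion. Your minor variants --- obtaining $|\vec{u}_{\vec{p}}|\leq|\vec{p}|$ by testing the Euler--Lagrange system against $\vec{u}$ rather than by comparing with $\mathcal{E}_{g,\vec{p}}(0)$, and routing item~\ref{enu:asymptotic-devel-energy-1} through the matrix $M_{ij}=G_{i}^{*}D_{0}^{-1}G_{j}$ and a push-through identity instead of the paper's direct expansion with $A=\tfrac12|\vec{k}|^{2}+|\vec{k}|+2\vec{G}\cdot\vec{G}^{*}$ --- are equivalent in substance.
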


\begin{rem}
Our hypotheses are similar those of Chen, Fröhlich, and Pizzo~\cite{MR2585987}, where their vector $\vec{\nabla}E_{\vec{p}}^{\sigma}$
is analogous to~$\vec{u}_{\vec{p}}$ in our notations.

The construction of $\vec{u}_{\vec{p}}$ as the solution of a fixed
point problem and the dependency in the parameter $\vec{p}$ imply
that the map $\vec{p}\mapsto\vec{u}_{\vec{p}}$ is of class $\mathcal{C}^{\infty}$.
\end{rem}

\begin{rem}
We note that we also expect to have $\vec{u}_{\vec{p}}$ in the neighboorhood
of~$\vec{p}$.
\end{rem}

\begin{rem}
The minimizer is constructed as the solution of a fixed point problem.
As a result the application \[
(\sigma,\Lambda,g,\vec{p})\mapsto\inf_{\rho\in\mbox{coh}}\Tr[H_{g,\vec{p}}\,\rho]\]
is continuous on the domain defined by Theorem~\ref{thm:fixed-point-f-u},
and at $\sigma$, $\Lambda$ fixed,\[
(g,\vec{p})\mapsto\inf_{\rho\in\mbox{coh}}\Tr[H_{g,\vec{p}}\,\rho]\]
is analytic for $g^{2}<C/\ln(\Lambda+2)$ and $|\vec{p}|<1/3$.\end{rem}
\begin{proof}[Proof of Theorem~\ref{thm:fixed-point-f-u}]
\emph{Proof of~\ref{enu:closed-system-f-u-1}.} Assume there is
a point~$f_{\vec{p}}$ where the minimum is attained. The partial
derivative of the energy at the point~$f_{\vec{p}}$\begin{multline*}
\partial_{f^{*}}\mathcal{E}(f_{\vec{p}})\\
=((f_{\vec{p}}^{*}\vec{k}f_{\vec{p}}-\vec{p}+2\rRe(f_{\vec{p}}^{*}\vec{G}))\cdot\vec{k}+\frac{1}{2}|\vec{k}|^{2}+|\vec{k}|)f_{\vec{p}}-(\vec{p}-f_{\vec{p}}^{*}\vec{k}f_{\vec{p}}-2\rRe(f_{\vec{p}}^{*}\vec{G}))\cdot\vec{G}\end{multline*}
then vanishes, where the derivative $\partial_{f^{*}}\mathcal{E}(f)$
at a point~$f$ is the unique vector in $\tcZ^*\cong L^2(S_{\sigma,\Lambda},(\tfrac{1}{2}|\vec k|^2+|\vec k|)^{-1}dk)$
defined by \[
\mathcal{E}(f+\delta f)-\mathcal{E}(f)=2\rRe(\delta\! f^{*}\,\partial_{f^{*}}\mathcal{E}(f))+o(\|\delta f\|_{\tcZ})\]
with $f,\,\delta f \in \tcZ$.
Observe that\begin{align*}
0 & \leq\mathcal{E}_{g,\vec{p}}(0)-\mathcal{E}_{g,\vec{p}}(f_{\vec{p}})\\
 & =\frac{1}{2}|\vec{p}|^{\cdot2}-\frac{1}{2}(f_{\vec{p}}^{*}\vec{k}f_{\vec{p}}+2\rRe(f_{\vec{p}}^{*}\vec{G})-\vec{p})^{\cdot2}-f_{\vec{p}}^{*}(\frac{1}{2}|\vec{k}|^{2}+|\vec{k}|)f_{\vec{p}}\end{align*}
and hence $|\vec{p}|\geq|\vec{u}_{\vec{p}}|$ with $\vec{u}_{\vec{p}}:=\vec{p}-f_{\vec{p}}^{*}\vec{k}f_{\vec{p}}-2\rRe(f_{\vec{p}}^{*}\vec{G})$.
Since $|\vec{u}_{\vec{p}}|\leq|\vec{p}|<1$, it makes sense to write\[
f_{\vec{p}}=\frac{\vec{u}_{\vec{p}}\cdot\vec{G}}{\frac{1}{2}|\vec{k}|^{2}+|\vec{k}|-\vec{u}_{\vec{p}}\cdot\vec{k}}\,.\]
Hence the minimum point~$f_{\vec{p}}$ satisfies Equations~(\ref{eq:f_u_bis1-1})
and~(\ref{eq:f_u_bis2-1}). It is in particular sufficient to prove
that there exist a unique~$\vec{u}_{\vec{p}}$ in a ball~$\bar{B}(0,r)$
with~$r\geq|\vec{p}|$ such that the function in Equation~(\ref{eq:f_u_bis1-1})
satisfies also Equation~(\ref{eq:f_u_bis2-1}) to prove the existence
and uniqueness of a minimizer.

\emph{Proof of the existence and uniqueness of a solution.} Let $\tfrac{1}{3}<r<1$,
$\vec{u}\in\mathbb{R}^{3}$, $|\vec{u}|\leq r<1$ and\[
\Phi_{\vec{u}}(\vec{k})=\frac{\vec{u}\cdot\vec{G}(\vec{k})}{\frac{1}{2}|\vec{k}|^{2}+|\vec{k}|-\vec{k}\cdot\vec{u}}\,.\]
Observe that $\Phi_{\vec{u}}\in \tilde{\mathcal Z}$
, indeed, if $|\vec{u}|<1$ then $\frac{1}{2}|\vec{k}|^{2}+|\vec{k}|-\vec{k}\cdot\vec{u}\geq(1-r)(\frac{1}{2}|\vec{k}|^{2}+|\vec{k}|)$,
and with $\vec{\varepsilon}(\vec{k})=\vec{\varepsilon}(\vec{k},+)+\vec{\varepsilon}(\vec{k},-)$,\begin{align*}
\int_{|\vec{k}|\in[\sigma,\Lambda]}(\tfrac{1}{2}|\vec{k}|^{2}+|\vec{k}|)|\Phi_{\vec{u}}(\vec{k})|^{2}dk & \leq g^{2}\int_{|\vec{k}|\in[\sigma,\Lambda]}\frac{1}{|\vec{k}|}\frac{1}{(1-r)^{2}}\frac{|\vec{u}\cdot\vec{\varepsilon}(\vec{k})|^{2}}{\tfrac{1}{2}|\vec{k}|^{2}+|\vec{k}|}dk<+\infty\,.\\
 & \leq C_{0}g^{2}\ln(\Lambda+2)\frac{|\vec{u}|^{2}}{(1-r)^{2}}\end{align*}
for some universal constant $C_{0}>0$. Observe then that\[
\int_{|\vec{k}|\in[\sigma,\Lambda]}\frac{|\vec{G}(k)|^{2}}{\tfrac{1}{2}|\vec{k}|^{2}+|\vec{k}|}dk\leq C_{0}g^{2}\ln(\Lambda+2)\]
for some universal constant $C_{0}>0$. It follows that $\Phi_{\vec{u}}^{*}\vec{G}\in L^{1}(S_{\sigma,\Lambda}\times\mathbb{Z}_{2})$.
Note that if $\sigma=0$ then $\Phi_{\vec{u}}\notin L^{2}(S_{\sigma,\Lambda}\times\mathbb{Z}_{2})$
(for $\vec{u}\neq0$).

We can thus define the application\begin{align*}
\bar{B}(0,r)\ni\vec{u} & \mapsto\vec{\Psi}(\vec{u}):=\vec{p}-\Phi_{\vec{u}}^{*}\vec{k}\Phi_{\vec{u}}-2\rRe(\Phi_{\vec{u}}^{*}\vec{G})\in\mathbb{R}^{3}\,.\end{align*}
We check that the hypotheses of the Banach-Picard fixed point theorem
are verified on the ball $\bar{B}(0,r)$, which will prove the result.

\emph{Stability}: If $g^{2}\ln(\Lambda+2)$ is sufficiently small,
we get from\[
|\vec{\Psi}(\vec{u})|\leq|\Phi_{\vec{u}}^{*}\vec{k}\Phi_{\vec{u}}|+|2\rRe(\Phi_{\vec{u}}^{*}\vec{G})|+|\vec{p}|\]
and the estimates above that the sum of the two first terms is smaller
than $r-1/3$ and since $|\vec{p}|\leq1/3$ the map~$\vec{\Psi}$
sends $\bar{B}(0,r)$ into itself,\[
\vec{\Psi}(\bar{B}(0,r))\subseteq\bar{B}(0,r)\,.\]

\emph{Contraction}: For $\vec{u}$ and $\vec{v}$ in $\bar{B}(0,r)$,
we have that \begin{align*}
|\Phi_{\vec{u}} & (\vec{k})-\Phi_{\vec{v}}(\vec{k})|(\tfrac{1}{2}|\vec{k}|^{2}+|\vec{k}|)\\
 & =|\frac{\vec{u}.\vec{G}(\vec{k})}{\frac{1}{2}|\vec{k}|^{2}+|\vec{k}|-\vec{k}\cdot\vec{u}}-\frac{\vec{v}.\vec{G}(\vec{k})}{\frac{1}{2}|\vec{k}|^{2}+|\vec{k}|-\vec{k}.\vec{v}}|(\tfrac{1}{2}|\vec{k}|^{2}+|\vec{k}|)\\
 & \leq\Big(\frac{|\vec{u}-\vec{v}||\vec{G}(\vec{k})|}{\frac{1}{2}|\vec{k}|^{2}+|\vec{k}|-\vec{k}\cdot\vec{u}}\\
 &\quad+|\vec{v}||\vec{G}(\vec{k})|\,|\frac{1}{\frac{1}{2}|\vec{k}|^{2}+|\vec{k}|-\vec{k}\cdot\vec{v}}-\frac{1}{\frac{1}{2}|\vec{k}|^{2}+|\vec{k}|-\vec{k}\cdot\vec{u}}|\Big)(\tfrac{1}{2}|\vec{k}|^{2}+|\vec{k}|)\\
 & \leq|\vec{u}-\vec{v}||\vec{G}(\vec{k})|\frac{1}{(1-r)}(1+\frac{r|\vec{k}|}{\frac{1}{2}|\vec{k}|^{2}+|\vec{k}|(1-r)})\\
 & \leq|\vec{u}-\vec{v}||\vec{G}(\vec{k})|\frac{1}{(1-r)^{2}}\,.\end{align*}
For the term $2\rRe(\Phi_{\vec{u}}^{*}\vec{G})$, we observe that\begin{align*}
|2\rRe & (\Phi_{\vec{u}}^{*}\vec{G})-2\rRe(\Phi_{\vec{v}}^{*}\vec{G})|\\
 & \leq g^{2}2|\vec{u}-\vec{v}|\frac{1}{(1-r)^{2}}\int_{|\vec{k}|\in[\sigma,\Lambda]}\frac{1}{|\vec{k}|}\frac{1}{\frac{1}{2}|\vec{k}|^{2}+|\vec{k}|}d^{3}k\\
 & \leq C_{1}g^{2}\ln(2+\Lambda)2|\vec{u}-\vec{v}|\frac{1}{(1-r)^{2}}\,.\end{align*}
Note that, for $g^{2}\ln(2+\Lambda)<(1-r)^{2}/(3C_{1})$,\[
|2\rRe(\Phi_{\vec{u}}^{*}\vec{G})-2\rRe(\Phi_{\vec{v}}^{*}\vec{G})|<\frac{1}{3}|\vec{u}-\vec{v}|\,.\]
Finally, for the term $\Phi_{\vec{u}}^{*}\vec{k}\Phi_{\vec{u}}$,
we obtain the estimate\begin{align*}
|\Phi_{\vec{u}}^{*} & \vec{k}\Phi_{\vec{u}}-\Phi_{\vec{v}}^{*}\vec{k}\Phi_{\vec{v}}|\\
 & \leq\int_{|\vec{k}|\in[\sigma,\Lambda]}(\tfrac{1}{2}|\vec{k}|^{2}+|\vec{k}|)|\Phi_{\vec{u}}(\vec{k})-\Phi_{\vec{v}}(\vec{k})|(|\Phi_{\vec{u}}(\vec{k})|+|\Phi_{\vec{v}}(\vec{k})|)d^{3}k\\
 & \leq\frac{|\vec{u}-\vec{v}|}{(1-r)^{2}}\int_{|\vec{k}|\in[\sigma,\Lambda]}|\vec{G}(\vec{k})|(|\Phi_{\vec{u}}(\vec{k})|+|\Phi_{\vec{v}}(\vec{k})|)d\vec{k}\\
 & \leq\frac{|\vec{u}-\vec{v}|}{(1-r)^{2}}\|(\tfrac{1}{2}|\vec{k}|^{2}+|\vec{k}|)^{-1/2}G\|(\|\sqrt{\tfrac{1}{2}|\vec{k}|^{2}+|\vec{k}|}\Phi_{\vec{u}}\|+\|\sqrt{\tfrac{1}{2}|\vec{k}|^{2}+|\vec{k}|}\Phi_{\vec{v}}\|)\\
 & \leq C_{2}|\vec{u}-\vec{v}|\,(|\vec{u}|+|\vec{v}|)g^{2}\ln(\Lambda+2)\,,\end{align*}
and thus this term can be controlled for $|g\ln(\Lambda+2)|^{2}$
sufficiently small by $\frac{1}{3}|\vec{u}-\vec{v}|$. We thus get
a contraction\[
|\vec{\Psi}(\vec{u})-\vec{\Psi}(\vec{u}^{\prime})|\leq\frac{2}{3}|\vec{u}-\vec{u}^{\prime}|\]
 and with $f_{\vec{p}}=\Phi_{\vec{u}_{\vec{p}}}$ Equation~(\ref{eq:f_u_bis1-1})
is solved.

\emph{Proof of \ref{enu:asymptotic-devel-energy-1}.} The expression
of the energy $\mathcal{E}_{g,\vec{p}}(f)$ given in Equation~(\ref{eq:energy_coherent-2})
implies that $\mathcal{E}_{g,\vec{p}}(f)\geq\frac{1}{2}\|\vec{G}\|^{2}$,
and for $\vec{p}=\vec{0}$ this minimum is only attained at the point
$f_{\vec{0}}=0$. It follows that $f_{\vec{p}}=\partial_{\vec{p}}f{}_{\vec{0}}\cdot\vec{p}+\mathcal{O}(|\vec{p}|^{2})\,.$
From Equation~(\ref{eq:f_u_bis2-1}) we deduce\[
\vec{u}_{\vec{p}}=\vec{p}-2\rRe((\partial_{\vec{p}}f{}_{\vec{0}}\cdot\vec{p})^{*}\vec{G})+\mathcal{O}(|\vec{p}|^{2})\]
and thus\begin{align*}
f_{\vec{p}} & =\frac{(\vec{p}-2\rRe((\partial_{\vec{p}}f{}_{\vec{0}}\cdot\vec{p})^{*}\vec{G})).\vec{G}}{\frac{1}{2}|\vec{k}|^{2}+|\vec{k}|-\vec{k}\cdot\vec{u}_{\vec{p}}}+\mathcal{O}(|\vec{p}|^{2})\\
 & =(\frac{1}{2}|\vec{k}|^{2}+|\vec{k}|)^{-1}(\vec{p}-2\rRe((\partial_{\vec{p}}f{}_{\vec{0}}\cdot\vec{p})^{*}\vec{G}))\cdot\vec{G}+\mathcal{O}(|\vec{p}|^{2})\,.\end{align*}
Expanding the left hand side of this equality in $\vec{0}$ brings\[
\partial_{\vec{p}}f{}_{\vec{0}}\cdot\vec{p}=(\frac{1}{2}|\vec{k}|^{2}+|\vec{k}|)^{-1}(\vec{p}-2\rRe((\partial_{\vec{p}}f{}_{\vec{0}}\cdot\vec{p})^{*}\vec{G}))\cdot\vec{G}\]
and hence $\partial_{\vec{p}}f{}_{\vec{0}}=(\frac{1}{2}|\vec{k}|^{2}+|\vec{k}|+2\vec{G}\cdot\vec{G}^{*})^{-1}\vec{G}$.
The expansion of $f_{\vec{p}}$ to the second order is then\[
f_{\vec{p}}=(\frac{1}{2}|\vec{k}|^{2}+|\vec{k}|+2\vec{G}\cdot\vec{G}^{*})^{-1}\vec{G}\cdot\vec{p}+\mathcal{O}(|\vec{p}|^{2})\,.\]
We can compute the energy modulo error terms in~$\mathcal{O}(|\vec{p}|^{3})$.
To have less heavy computations we set $A=\frac{1}{2}|\vec{k}|^{2}+|\vec{k}|+2\vec{G}\cdot\vec{G}^{*}$
and get \begin{align*}
\mathcal{E}_{g,\vec{p}} & (f_{\vec{p}})-\frac{1}{2}\|\vec{G}\|^{2}-\frac{1}{2}|\vec{p}|^{2}\\
 & \equiv-\frac{1}{2}|\vec{p}|^{2}+\frac{1}{2}(2\rRe(\vec{p}\cdot\partial_{\vec{p}}f_{\vec{0}}^{*}\vec{G})-\vec{p})^{\cdot2}+\vec{p}\cdot\partial_{\vec{p}}f_{\vec{0}}^{*}(\frac{1}{2}|\vec{k}|^{2}+|\vec{k}|)\partial_{\vec{p}}f_{\vec{0}}^{*}\cdot\vec{p}\\
 & \equiv\frac{1}{2}(2\rRe(\vec{p}\cdot\vec{G}^{*}A^{-1}\vec{G}))^{\cdot2}-2\vec{p}\cdot\vec{G}^{*}A^{-1}\vec{G}\cdot\vec{p}\\
 & \phantom{=}\qquad+\vec{p}\cdot\vec{G}^{*}A^{-1}(\frac{1}{2}|\vec{k}|^{2}+|\vec{k}|)A^{-1}\vec{G}\cdot\vec{p}\\ \allowdisplaybreaks
 & \equiv2(\vec{p}\cdot\vec{G}^{*}A^{-1}\vec{G})^{\cdot2}+\vec{p}\cdot\vec{G}^{*}A^{-1}((\frac{1}{2}|\vec{k}|^{2}+|\vec{k}|)-2A)A^{-1}\vec{G}\cdot\vec{p}\\
 & \equiv\vec{p}\cdot\vec{G}^{*}A^{-1}2\vec{G}\cdot\vec{G}^{*}A^{-1}\vec{G}\cdot\vec{p}-\vec{p}\cdot\vec{G}^{*}A^{-1}(\frac{1}{2}|\vec{k}|^{2}+|\vec{k}|+4\vec{G}\cdot\vec{G}^{*}))A^{-1}\vec{G}\cdot\vec{p}\\ 
 & \equiv-\vec{p}\cdot\vec{G}^{*}(\frac{1}{2}|\vec{k}|^{2}+|\vec{k}|+2\vec{G}\cdot\vec{G}^{*})^{-1}\vec{G}\cdot\vec{p}\end{align*}
which yields the result.

\emph{Proof of~\ref{enu:devel-energy-in-fp-1}.} The Taylor expansion
of the energy around $f_{\vec{p}}$ is\begin{align*}
\mathcal{E}_{g,\vec{p}}(f_{\vec{p}}+f) & =\mathcal{E}_{g,\vec{p}}(f_{\vec{p}})+f^{*}\,\partial_{f^{*}}\mathcal{E}(f_{\vec{p}})+\partial_{f}\mathcal{E}(f_{\vec{p}})\, f\\
 & \phantom{=}+\frac{1}{2}\Big\{(f^{*}\vec{k}f+2\rRe(f^{*}\vec{G})+2\rRe(f_{\vec{p}}^{*}\vec{k}f))^{.2}\\
 & \phantom{=+\frac{1}{2}\Big\{}+2(f_{\vec{p}}^{*}\vec{k}f_{\vec{p}}+2\rRe(f_{\vec{p}}^{*}\vec{G})-\vec{p})\cdot f^{*}\vec{k}f+f^{*}|\vec{k}|^{2}f\Big\}+f^{*}|\vec{k}|f\,.\end{align*}
Since $\partial_{f^{*}}\mathcal{E}(f_{\vec{p}})$ vanishes this gives
Equation~(\ref{eq:devel-energy-in-fp-1}).

\emph{Proof of \ref{enu:comparision-energies-0-fp-1}.} It is sufficient
to replace $f$ by $-f_{\vec{p}}$ in Equation~(\ref{eq:devel-energy-in-fp-1}).
The observation\[
f_{\vec{p}}^{*}\vec{u}_{\vec{p}}\cdot\vec{G}=\int\frac{(\vec{u}_{\vec{p}}\cdot\vec{G}(\vec{k}))^{2}dk}{\frac{1}{2}|\vec{k}|^{2}+|\vec{k}|-\vec{k}\cdot\vec{u}_{\vec{p}}}\]
shows that $2\rRe(f_{\vec{p}}^{*}\vec{u}_{\vec{p}}\cdot\vec{G})$
is non-negative since $|\vec{u}_{\vec{p}}|<1$.
\end{proof}

\section{\label{sec:Existence-and-uniqueness}The Minimizer for the Energy Functional varying over Pure Quasifree
States}
\begin{defn}
Let $\mathcal{Z}$ be a $\mathbb{C}$-Hilbert space. Let $Y$ be the
$\mathbb{R}$-Hilbert space of antilinear operators $\hat{r}$ on
$\mathcal{Z}$, self-adjoint in the sense that $\forall z,z^{\prime}\in\mathcal{Z},\,\langle z,\hat{r}z^{\prime}\rangle=\langle z^{\prime},\hat{r}z\rangle$,
and Hilbert-Schmidt in the sense that the positive operator~$\hat{r}^{2}$
is trace class. The space $X=\mathcal{Z}\times Y$ with the scalar
product\[
\langle(f,\hat{r}),(f',\hat{r}')\rangle_{X}=f^{*}f'+\Tr[\hat{r}\hat{r}']\]
is an $\mathbb{R}$-Hilbert space.

Keeping $\sigma >0$, we only need to use
$\mathcal{Z}=L^{2}(S_{\sigma,\Lambda}\times\mathbb{Z}_{2})$ in this
section.\end{defn}
\begin{thm}
\label{pro:existence-uniqueness}Let $0<\sigma<\Lambda<\infty$. There
exists $C>0$ such that for $g,|\vec{p}|\leq C$ there exists a unique
minimizer for $\hat{\mathcal{E}}_{g,\vec{p}}(f,\hat{r})$.\end{thm}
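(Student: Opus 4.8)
The plan is to treat $\hat{\mathcal{E}}_{g,\vec{p}}$ as a smooth perturbation, in the parameters $(g,\vec{p})$, of the free functional $\hat{\mathcal{E}}_{0,\vec{0}}$ and to propagate existence and uniqueness from the base point to small $(g,\vec{p})$. First I would fix $(g,\vec{p})=(0,\vec{0})$, where $\vec{G}=0$ and $H_{0,\vec{0}}=\tfrac12 d\Gamma(\vec{k})^{\cdot2}+d\Gamma(|\vec{k}|)\ge\sigma\,\nf$. Since $\sigma>0$ this operator has the vacuum as its unique ground state, so $\hat{\mathcal{E}}_{0,\vec{0}}$ attains its minimum $0$ only at $(f,\hat{r})=(0,0)$. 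Note also that for $\sigma>0$ the weight $\tfrac12|\vec{k}|^2+|\vec{k}|$ is bounded above and below, so $\cZ=\tcZ$ with equivalent norms and $X=\cZ\times Y$ is the right space. I would then write $\hat{\mathcal{E}}_{g,\vec{p}}=\hat{\mathcal{E}}_{0,\vec{0}}+\mathcal{R}_{g,\vec{p}}$, where every term of $\mathcal{R}_{g,\vec{p}}$ carries a factor $\vec{G}=\mathcal{O}(g)$ or a factor $\vec{p}$.

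\emph{Step 1 (coercivity and confinement).} Using $\cosh(2\hat{r})-\mathbf{1}=2\sinh^2(\hat{r})$ together with the operator inequality $\sinh^2(\hat{r})\ge\hat{r}^2$ (write $\hat{r}=c\tilde{r}$ as in Proposition~\ref{pro:polar-decomposition}), one has $\Tr[\tfrac12(\cosh(2\hat{r})-\mathbf{1})|\vec{k}|]\ge\sigma\,\Tr[\hat{r}^2]=\sigma\|\hat{r}\|_Y^2$, while the $f$-block equals $f^*(\tfrac12|\vec{k}|^2+|\vec{k}|)f=\|f\|_{\tcZ}^2$; the remaining blocks are nonnegative by the positivity properties of Section~\ref{sec:Energy-functional} and Proposition~\ref{pro:positivity2}. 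The perturbation $\mathcal{R}_{g,\vec{p}}$ is absorbed into half the free part by Cauchy--Schwarz and Young's inequality, using the finite-cutoff bound $\|(\tfrac12|\vec{k}|^2+|\vec{k}|)^{-1/2}\vec{G}\|^2\le C_0\,g^2\ln(\Lambda+2)$ already exploited in Theorem~\ref{thm:fixed-point-f-u}. This gives $\hat{\mathcal{E}}_{g,\vec{p}}(f,\hat{r})\ge\tfrac12(\|f\|_{\tcZ}^2+\sigma\|\hat{r}\|_Y^2)-C(g^2+|\vec{p}|^2)$. Since $\hat{\mathcal{E}}_{g,\vec{p}}(0,0)=\mathcal{O}(g^2+|\vec{p}|^2)$, coercivity both produces bounded minimizing sequences and confines \emph{every} minimizer to a ball $B_\delta$ about the origin of radius $\delta=\mathcal{O}(g+|\vec{p}|)$.

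\emph{Step 2 (the critical point) and Step 3 (uniqueness).} I would compute the Hessian $\mathcal{H}_0$ of $\hat{\mathcal{E}}_{0,\vec{0}}$ at the origin. To second order the $f$--$\hat{r}$ cross terms vanish and one obtains the block-diagonal form
\[
\mathcal{H}_0[(f,\hat{r})]=\|f\|_{\tcZ}^2+\Tr\big[\hat{r}^2(\tfrac12|\vec{k}|^2+|\vec{k}|)\big]+\tfrac12\Tr[\hat{r}\vec{k}\cdot\hat{r}\vec{k}],
\]
where the last term equals $\tfrac12\sum_j\|\tilde{r}^{1/2}k_j\tilde{r}^{1/2}\|_2^2\ge0$, so $\mathcal{H}_0\ge c_0\|(f,\hat{r})\|_X^2$ with $c_0>0$. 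The second derivative of $\hat{\mathcal{E}}_{g,\vec{p}}$ differs from $\mathcal{H}_0$ by a term that is $\mathcal{O}(g+|\vec{p}|+\|(f,\hat{r})\|)$ on $B_\delta$, hence for $(g,\vec{p})$ small it stays $\ge\tfrac12 c_0$ throughout the convex ball $B_\delta$, making $\hat{\mathcal{E}}_{g,\vec{p}}$ strictly convex there. Combined with Step~1, this yields at most one minimizer. For existence I would either invoke the direct method or, more cleanly, solve the stationarity system displayed in Theorem~\ref{thm:main_thm_QF}(4) by Banach--Picard for small $(g,\vec{p})$, exactly as in the coherent-state Theorem~\ref{thm:fixed-point-f-u}; since $\mathcal{H}_0$ is invertible this also supplies, via the implicit function theorem, the smooth dependence on $(g,\vec{p})$ claimed in the companion theorem. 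The unique critical point so obtained lies in $B_\delta$ and, by strict convexity on $B_\delta$, is the unique minimizer.

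\emph{Main obstacle.} The principal technical work is the regularity of $\hat{r}\mapsto\hat{\mathcal{E}}_{g,\vec{p}}(f,\hat{r})$ on the Hilbert--Schmidt space $Y$: the functional is built from $\cosh(2\hat{r})$ and $\sinh(2\hat{r})$ paired against $\vec{k}$ and $|\vec{k}|$, and one must verify that these operator-valued analytic functions are twice Fr\'echet differentiable into the relevant trace ideals, with the derivative bounds used in Steps~1--3. Here the finite cutoffs are essential: $\sigma>0$ and $|\vec{k}|\le\Lambda$ make $\vec{k},|\vec{k}|$ bounded and $\tfrac12|\vec{k}|^2+|\vec{k}|$ boundedly invertible, so the power series of $\cosh$ and $\sinh$ converge in trace norm after insertion of the weights and may be differentiated termwise. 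If instead one insisted on the pure direct method for existence, the matching obstacle would be the weak lower semicontinuity of the non-convex trace terms in $\hat{r}$; I would sidestep it by deriving existence from the fixed-point critical point together with the strict-convexity and coercivity bounds rather than from weak compactness alone.
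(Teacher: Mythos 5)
Your overall architecture matches the paper's: the two pillars are a Hessian lower bound near the origin (the paper's Proposition~\ref{pro:convexity_E(f,r)}, which likewise computes the Hessian at $(0,0)$ and extends it to a ball $\bar{B}_{X}(0,R)$ by continuity in $(f,\hat{r},\vec{p},g)$) and coercive confinement of minimizers (Proposition~\ref{pro:coercivity_E(f,r)}). Where you genuinely diverge is the existence mechanism. The paper never touches the Euler--Lagrange system here: it uses that a \emph{uniformly} strictly convex, strongly continuous functional on the closed convex set $\bar{B}_{X}(0,R)$ attains its minimum because minimizing sequences are Cauchy in $X$, and coercivity then makes that minimizer global. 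You instead construct a critical point by the implicit function theorem or Banach--Picard on the stationarity system and promote it to the global minimizer by convexity plus confinement --- a route the paper itself endorses as an alternative in the remark following Theorem~\ref{pro:Lagrange_Equations}, and which is essentially the mechanism of Section~\ref{sec:Perturbative-approach}. Your route delivers the smooth dependence on $(g,\vec{p})$ in the same stroke, but at the cost of the $C^{2}$-Fr\'echet regularity verification you yourself flag as the main obstacle; the paper's Cauchy-sequence argument needs nothing beyond the Hessian bound and strong continuity. Its coercivity is also cleaner than your Step~1: keeping the sign-indefinite $\vec{G}$-terms inside the grouped nonnegative expressions of Propositions~\ref{pro:positivity1} and~\ref{pro:positivity2}, one gets $\hat{\mathcal{E}}_{g,\vec{p}}(f,\hat{r})\geq\Tr[\hat{r}^{2}|\vec{k}|]+f^{*}|\vec{k}|f\geq\sigma\|(f,\hat{r})\|_{X}^{2}$ with no smallness of $g$ and no $-C(g^{2}+|\vec{p}|^{2})$ error term.

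Two of your sub-steps need repair, though neither is fatal. First, the claim that $\Tr[\hat{r}\vec{k}\cdot\hat{r}\vec{k}]$ equals $\sum_{j}$ of squared Hilbert--Schmidt norms of $\tilde{r}^{1/2}k_{j}\tilde{r}^{1/2}$, hence is nonnegative, is false in general: it presumes that the conjugation $c$ in the polar decomposition $\hat{r}=c\tilde{r}$ commutes with the multiplication operators $k_{j}$, but $c$ is adapted to $\hat{r}$, not to $\vec{k}$. Concretely, take $\hat{r}=\mathcal{J}S$ with $S$ a real symmetric kernel supported near the set $\{\vec{k}^{\prime}=-\vec{k}\}$, so that $Sk_{j}\approx-k_{j}S$; then
\[
\Tr[\hat{r}k_{j}\hat{r}k_{j}]=\Tr[Sk_{j}Sk_{j}]\approx-\Tr[k_{j}^{2}\hat{r}^{2}]<0\,.
\]
What is true --- and suffices, since your expression for $\mathcal{H}_{0}$ retains $\Tr[\hat{r}^{2}\,\tfrac{1}{2}|\vec{k}|^{2}]$ --- is the grouped bound $\Tr[\hat{r}\vec{k}\cdot\hat{r}\vec{k}]+\Tr[|\vec{k}|^{2}\hat{r}^{2}]\geq0$, by Cauchy--Schwarz for the Hilbert--Schmidt scalar product; this is exactly how Proposition~\ref{pro:convexity_E(f,r)} keeps these terms paired, and it still yields $\mathcal{H}_{0}\geq\sigma\|(f,\hat{r})\|_{X}^{2}$. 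Second, in Step~1 you cannot absorb terms such as $2\rRe\langle\tfrac{1}{2}\sinh(2\hat{r})\vec{G};\vec{G}\rangle$ into $\tfrac{\sigma}{2}\|\hat{r}\|_{Y}^{2}$ by Cauchy--Schwarz and Young alone: $\sinh(2\hat{r})$ grows exponentially in $\|\hat{r}\|_{Y}$, while the retained free part is only quadratic. These terms must stay paired with the positive term $\Tr[\cosh(2\hat{r})\,\vec{G}\cdot\vec{G}^{*}]$, which is the content of Proposition~\ref{pro:positivity2} that you cite; with that grouping your claimed lower bound, and indeed the paper's stronger one, follows.
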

\begin{proof}
This result follows from convexity and coercivity arguments. By Proposition~\ref{pro:convexity_E(f,r)}, $\hat{\mathcal{E}}_{g,\vec{p}}(f,\hat{r})$
is strictly $\theta$-convex (i.e., uniformly strictly convex) on $\bar{B}_{X}(0,R)$ for some $R>0$ and $\theta>0$.
Since $\hat{\mathcal{E}}_{g,\vec{p}}(f,\hat{r})$
is strongly continous on the closed and convex set $\bar{B}_{X}(0,R)$ of the Hilbert space $X$ we get the existence and uniqueness of a minimizer in $\bar{B}_{X}(0,R)$. (See for example \cite{MR2326223}. The uniform strict convexity allows to prove directly that a minimizing sequence is a Cauchy sequence.) Proposition~\ref{pro:coercivity_E(f,r)} then proves that it is the only minimum of $\hat{\mathcal{E}}_{g,\vec{p}}(f,\hat{r})$ on the whole space.

%
%

Note that to use Propositions~\ref{pro:convexity_E(f,r)} and~\ref{pro:coercivity_E(f,r)}
we need to restrict to values of $g$ and $|\vec{p}|$ smaller than
some constant~$C>0$.\end{proof}
\begin{prop}[Convexity]
\label{pro:convexity_E(f,r)}There exist $0<C,R<\infty$ such that
for $g\leq C$ and $|\vec{p}|\leq\frac{1}{2}$, the Hessian of the
energy satisfies $\mathcal{H}\hat{\mathcal{E}}_{g,\vec{p}}(f,\hat{r})\geq\frac{\sigma}{4}\mathbf{1}_{X}$
on the ball $B_{X}(0,R)$.

\end{prop}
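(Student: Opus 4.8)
The plan is to treat $\mathcal{H}\hat{\mathcal{E}}_{g,\vec{p}}$ as a perturbation of the photon field energy, using in an essential way that $\sigma>0$ forces $|\vec{k}|\geq\sigma$. First I would separate off the $\vec{p}$-dependence. Since $\vec{p}$ enters $\hat{\mathcal{E}}_{g,\vec{p}}$ only through the square $\tfrac12(\vec{w}-\vec{p})^{\cdot2}$, with $\vec{w}(f,\hat{r}):=\Tr[\tfrac12(\cosh(2\hat{r})-\mathbf{1})\vec{k}]+f^{*}\vec{k}f+2\rRe(f^{*}\vec{G})$, one has $\hat{\mathcal{E}}_{g,\vec{p}}=\hat{\mathcal{E}}_{g,\vec{0}}-\vec{p}\cdot\vec{w}+\tfrac12|\vec{p}|^{2}$ and hence $\mathcal{H}\hat{\mathcal{E}}_{g,\vec{p}}=\mathcal{H}\hat{\mathcal{E}}_{g,\vec{0}}-\vec{p}\cdot\mathcal{H}\vec{w}$. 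It then suffices to bound the quadratic form $(\mathcal{H}\hat{\mathcal{E}}_{g,\vec{p}})[(\delta f,\delta\hat{r})]$ from below by $\tfrac{\sigma}{4}\|(\delta f,\delta\hat{r})\|_{X}^{2}$, uniformly over directions $(\delta f,\delta\hat{r})\in X$ and base points $(f,\hat{r})\in B_{X}(0,R)$.

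The coercivity comes from the diagonal field terms. The field energy $\Tr[\tfrac12(\cosh(2\hat{r})-\mathbf{1})|\vec{k}|]+f^{*}|\vec{k}|f$, together with the $f^{*}|\vec{k}|^{2}f$ piece produced by the $\mathbf{1}$ and $\vec{k}f$ parts of $\tfrac12\Tr[(2\gamma+\mathbf{1})(\vec{G}+\vec{k}f)\cdot(\vec{G}+\vec{k}f)^{*}]$, contributes to $\mathcal{H}\hat{\mathcal{E}}_{g,\vec{0}}$ in the direction $(\delta f,\delta\hat{r})$ the quantity $2\delta f^{*}(\tfrac12|\vec{k}|^{2}+|\vec{k}|)\delta f+2\Tr[(\delta\hat{r})^{2}|\vec{k}|]+O(R^{2})$, the $O(R^{2})$ absorbing the higher variations of the operator function $\cosh(2\hat{r})$. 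As $|\vec{k}|\geq\sigma$ and $(\delta\hat{r})^{2}\geq0$, this is at least $2\sigma\|(\delta f,\delta\hat{r})\|_{X}^{2}+O(R^{2})$. The remaining $\vec{p}$-independent terms are either nonnegative or small: the square $\tfrac12|\vec{w}|^{2}$ gives $|D\vec{w}|^{2}\geq0$ plus $\vec{w}\cdot D^{2}\vec{w}=O(R^{2}+g)$ (since $\vec{w}=O(R^{2}+g)$ on the ball when $\vec{p}=\vec{0}$); the combination $\Tr[\gamma\vec{k}\cdot\gamma\vec{k}]+\alpha^{*}(\vec{k}\cdot\otimes\vec{k})\alpha+\Tr[|\vec{k}|^{2}\gamma]$ is globally nonnegative and vanishes at the origin --- this is one of the positivity properties already recorded for the energy --- so its Hessian is $\geq0$ at $(0,0)$ and $\geq-o_{R}(1)$ on the ball; and every term carrying a factor $\vec{G}$ is $O(g)$ in operator norm. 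Note that although several of these contributions are of size $\Lambda$ or $\Lambda^{2}$, they appear only inside the nonnegative combinations above, so no large term has to be dominated by $\sigma$.

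The delicate term is $-\vec{p}\cdot\mathcal{H}\vec{w}$, which must be \emph{beaten} by the field energy, not merely absorbed. Differentiating $\vec{w}$ twice and using $|\vec{p}\cdot\vec{k}|\leq|\vec{p}|\,|\vec{k}|$ together with $(\delta\hat{r})^{2}\geq0$ gives $|\vec{p}\cdot\mathcal{H}\vec{w}|\leq 2|\vec{p}|\big(\delta f^{*}|\vec{k}|\delta f+\Tr[(\delta\hat{r})^{2}|\vec{k}|]\big)+O(R)$. Adding this to the field-energy contribution, the net coefficient of $\delta f^{*}|\vec{k}|\delta f$ and of $\Tr[(\delta\hat{r})^{2}|\vec{k}|]$ is $2(1-|\vec{p}|)\geq1$ precisely when $|\vec{p}|\leq\tfrac12$, so the diagonal already yields $\geq\sigma\|(\delta f,\delta\hat{r})\|_{X}^{2}$. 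This is exactly where the hypothesis $|\vec{p}|\leq\tfrac12$ enters and where the final constant $\tfrac{\sigma}{4}$ originates.

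Assembling the bounds gives $(\mathcal{H}\hat{\mathcal{E}}_{g,\vec{p}})[(\delta f,\delta\hat{r})]\geq\big(\sigma-O(g)-o_{R}(1)\big)\|(\delta f,\delta\hat{r})\|_{X}^{2}$; choosing first $R$ and then the threshold $C$ on $g$ small enough makes the error at most $\tfrac{3\sigma}{4}$, which yields $\mathcal{H}\hat{\mathcal{E}}_{g,\vec{p}}\geq\tfrac{\sigma}{4}\mathbf{1}_{X}$ on $B_{X}(0,R)$. I expect the main obstacle to be the second-order variation of the operator functions $\cosh(2\hat{r})$ and $\sinh(2\hat{r})$: this requires the Fr\'echet (Duhamel) calculus for antilinear Hilbert-Schmidt $\hat{r}$, and one must check that all the resulting traces are finite and that the potentially $\Lambda$-large contributions genuinely organize into the nonnegative combinations identified above. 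Once that bookkeeping is in place, the sign tracking in $-\vec{p}\cdot\mathcal{H}\vec{w}$ and the routine $O(g)$ and $O(R)$ operator-norm estimates of the $(\delta f,\delta\hat{r})$ cross terms are mechanical.
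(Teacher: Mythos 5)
Your argument is correct in substance and rests on the same coercivity mechanism as the paper's proof: the diagonal photon-field terms contribute $2(1-|\vec{p}|)\,|\vec{k}|\geq|\vec{k}|\geq\sigma$ (exactly where $|\vec{p}|\leq\tfrac{1}{2}$ and $\sigma>0$ enter), every $\vec{G}$-carrying term is a perturbation killed by taking $g$ small, and the remaining $\vec{p}$-independent terms are nonnegative up to small errors. But you reach the ball statement by a genuinely different route. The paper computes the Hessian \emph{exactly} at the single point $(0,0)$ --- where $\cosh(2\hat{r})=\mathbf{1}$ and no second-order operator-function calculus is needed --- controls the cross term $2\rRe\langle\hat{r}\vec{k}f;\vec{G}\rangle$ by a weighted Cauchy--Schwarz inequality with weight $\mu=2$ that exactly consumes the $\tfrac{1}{2}f^{*}|\vec{k}|^{2}f$ piece of the diagonal, concludes $\mathcal{H}\hat{\mathcal{E}}_{g,\vec{p}}(0,0)\geq\tfrac{\sigma}{2}$ for small $g$, and then passes to a ball $B_{X}(0,R)$ by a soft argument: continuity of $(f,\hat{r},\vec{p},g)\mapsto\mathcal{H}\hat{\mathcal{E}}_{g,\vec{p}}(f,\hat{r})$, giving up a factor of two. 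You instead estimate the Hessian directly at every point of the ball, after splitting $\mathcal{H}\hat{\mathcal{E}}_{g,\vec{p}}=\mathcal{H}\hat{\mathcal{E}}_{g,\vec{0}}-\vec{p}\cdot\mathcal{H}\vec{w}$ (a decomposition the paper never makes explicit), and you replace the sharp Cauchy--Schwarz step by crude operator-norm bounds whose constants depend on $\Lambda$; that is legitimate here, since $C$ and $R$ are allowed to depend on $\sigma$ and $\Lambda$. The trade-off is clear: the paper's continuity step avoids precisely what you flag as your main obstacle, namely the second-order Fr\'echet calculus of $\cosh(2\hat{r})$ and $\sinh(2\hat{r})$ at base points $\hat{r}\neq0$ (differentiability and continuity of the Hessian are immediate because $|\vec{k}|\leq\Lambda<\infty$, as the paper notes), while your direct tracking, once that bookkeeping is completed, yields an explicit lower bound on $R$ in terms of $\sigma$ and $\Lambda$, which the continuity argument cannot provide. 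One small caution: the combination $\Tr[\gamma\vec{k}\cdot\gamma\vec{k}]+\alpha^{*}(\vec{k}\cdot\otimes\vec{k})\alpha+\Tr[|\vec{k}|^{2}\gamma]$ you invoke is indeed nonnegative, but it is not literally one of the paper's recorded positivity properties (those carry the extra term $\Tr[\gamma\vec{k}]^{\cdot2}$ or the full square); you should justify it as the variance identity $\Tr[\tilde{\rho}\,(d\Gamma(\vec{k})-\Tr[\gamma\vec{k}])^{\cdot2}]\geq0$, which follows from Lemma~\ref{lem:Tr[rho_XX]}.
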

\begin{proof}
We use that strict positivity of the Hessian implies strict convexity
and thus first compute the Hessian in $(0,0)$. The Hessian $\mathcal{H}\hat{\mathcal{E}}_{g,\vec{p}}(f,\hat{r})\in\mathcal{B}(X)$
is defined using the Fr\'echet derivative\begin{multline*}
\hat{\mathcal{E}}_{g,\vec{p}}(f+\delta f,\hat{r}+\delta\hat{r})-\hat{\mathcal{E}}_{g,\vec{p}}(f,\hat{r})\\
\ = \ D\hat{\mathcal{E}}_{g,\vec{p}}(f,\hat{r})(\delta f,\delta\hat{r}) +
\frac{1}{2}\big\langle(\delta f,\delta\hat{r})\,,\,\mathcal{H}\hat{\mathcal{E}}_{g,\vec{p}}(f,\hat{r})\,(\delta f,\hat{r})\big\rangle_{X}
+o(\|(\delta f,\delta\hat{r})\|_{X}^{2})
\end{multline*}
with $D\hat{\mathcal{E}}_{g,\vec{p}}(0,0)\in\mathcal{B}(X,\mathbb{R})$.
(Note that differentiability is granted in this case because $|\vec{k}|\leq\Lambda<\infty$.)
For any $\mu>0$, $\forall(f,\hat{r})\in X$,\begin{align*}
\langle(f, & \hat{r})\,,\,\frac{1}{2}\mathcal{H}\hat{\mathcal{E}}_{g,\vec{p}}(0,0)\,(f,\hat{r})\rangle_{X}\\
 & =2\rRe\langle\hat{r}\vec{k}f;\vec{G}\rangle+\frac{1}{2}(2\rRe(f^{*}\vec{G}))^{\cdot2}+\Tr[\hat{r}^{2}\vec{G}\cdot\vec{G}^{*}]\\
 & \phantom{=}+\frac{1}{2}\big\{\Tr[\hat{r}\vec{k}\cdot\hat{r}\vec{k}]+\Tr[|\vec{k}|^{2}\hat{r}^{2}]\big\}\\
 & \phantom{=}+\Tr[\hat{r}^{2}(|\vec{k}|-\vec{k}\cdot\vec{p})]+f^{*}(\frac{1}{2}|\vec{k}|^{2}+|\vec{k}|-\vec{k}\cdot\vec{p})f\allowdisplaybreaks\\
 & \geq\Tr[\hat{r}^{2}\vec{G}\cdot\vec{G}^{*}]-\mu\|\hat{r}\vec{G}\|^{2}-\frac{1}{\mu}\|\vec{k}f\|^{2}\\
 & \phantom{\geq}+\frac{1}{2}\big\{(2\rRe(\delta f^{*}\vec{G}))^{\cdot2}+\Tr[\hat{r}\vec{k}\cdot\hat{r}\vec{k}]+\Tr[|\vec{k}|^{2}\hat{r}^{2}]\big\}\\
 & \phantom{\geq}+\Tr[\hat{r}^{2}(|\vec{k}|-\vec{k}\cdot\vec{p})]+f^{*}(\frac{1}{2}|\vec{k}|^{2}+|\vec{k}|-\vec{k}\cdot\vec{p})f\\
 & \geq\Tr[\hat{r}^{2}(|\vec{k}|-\vec{k}\cdot\vec{p}+(1-\mu)\vec{G}\cdot\vec{G}^{*})]+f^{*}((\frac{1}{2}-\frac{1}{\mu})|\vec{k}|^{2}+|\vec{k}|-\vec{k}\cdot\vec{p})f\,,\end{align*}
since\[
|2\rRe\langle\hat{r}\vec{k}f;\vec{G}\rangle|\leq2\|\hat{r}\vec{G}\|\|\vec{k}f\|=2\sqrt{\mu}\|\hat{r}\vec{G}\|\frac{1}{\sqrt{\mu}}\|\vec{k}f\|\leq\mu\|\hat{r}\vec{G}\|^{2}+\frac{1}{\mu}\|\vec{k}f\|^{2}\,.\]
With $\mu=2$ we obtain (with $|\vec{p}|\leq\frac{1}{2}$)\begin{align*}
\frac{1}{2}\mathcal{H}\hat{\mathcal{E}}_{g,\vec{p}}(0,0)(f,\hat{r}) & \geq\Tr[\hat{r}^{2}(|\vec{k}|-\vec{k}\cdot\vec{p}-\vec{G}\cdot\vec{G}^{*})]+f^{*}(|\vec{k}|-\vec{k}\cdot\vec{p})f\\
 & \geq\Tr[\hat{r}^{2}(|\vec{k}|(1-\||\vec{k}|^{-1/2}\vec{G}\|^{2})-\vec{k}\cdot\vec{p})]+f^{*}(|\vec{k}|-\vec{k}\cdot\vec{p})f\\
 & \geq\Tr[\hat{r}^{2}\sigma(\frac{1}{2}-\||\vec{k}|^{-1/2}\vec{G}\|^{2})]+f^{*}\frac{\sigma}{2}f\end{align*}
and for $g$ small enough\[
\frac{1}{2}\mathcal{H}\hat{\mathcal{E}}_{g,\vec{p}}(0,0)\geq\frac{\sigma}{4}\,.\]
We then compare it with the Hessian in points near zero. Observing
that the Hessian is continous with respect to $(f,\hat{r},\vec{p},g)$,
we deduce that there exist~$R< \infty$ and~$C > 0$, as asserted.
\end{proof}

\begin{prop}[Coercivity]
\label{pro:coercivity_E(f,r)}Suppose $\vec{p}$ and $C>0$ are fixed
such that $\frac{1}{2}|\vec{p}|^{2}+\frac{1}{2}\|\vec{G}\|^{2}<\sigma R^{2}$,
with the value of $R$ given by Proposition~\ref{pro:convexity_E(f,r)},
for any $0<g<C$. For every $(f,\hat{r})\in X$,
\[
\hat{\mathcal{E}}_{g,\vec{p}}(f,\hat{r}) 
\ \geq \ 
\Tr[\hat{r}^{2}|\vec{k}|]+f^{*}|\vec{k}|f
\ \geq \
\sigma \, \big\| (f,\hat{r}) \big\|_X^2 \, . 
\]
Since $\hat{\mathcal{E}}_{g,\vec{p}}(0,0)=\frac{1}{2}|\vec{p}|^{2}+\frac{1}{2}\|\vec{G}\|^{2}<\sigma R^{2}$,
any minimizing sequence takes its values in $\bar{B}_{X}(0,R)$. 
\end{prop}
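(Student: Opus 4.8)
The plan is to discard the manifestly non-negative quadratic block of the energy and retain only the one-body part $\Tr[\gamma|\vec{k}|]+f^{*}|\vec{k}|f$, which already controls the $X$-norm. Writing $\gamma=\tfrac{1}{2}(\cosh(2\hat{r})-\mathbf{1})$ and letting $\alpha$ be the associated $\tfrac{1}{2}\sinh(2\hat{r})$ (Proposition~\ref{pro:gamma_r_alpha_r}), so that $\hat{\mathcal{E}}_{g,\vec{p}}(f,\hat{r})=\mathcal{E}_{g,\vec{p}}(f,\gamma,\alpha)$, the key observation is that the term in braces in \eqref{eq:Energy} is exactly the sum of the third and fourth positivity properties stated after \eqref{eq:Energy}. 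Indeed, the first two lines of the brace coincide with
\[
(\Tr[\gamma\vec{k}]+f^{*}\vec{k}f+2\rRe(f^{*}\vec{G})-\vec{p})^{\cdot2}+\Tr[\gamma\vec{k}\cdot\gamma\vec{k}]+\alpha^{*}(\vec{k}\cdot\otimes\vec{k})\alpha+\Tr[|\vec{k}|^{2}\gamma]\geq0,
\]
and the third line coincides with $2\rRe(\alpha^{*}((\vec{G}+\vec{k}f)^{\cdot\vee2}))+\Tr[(2\gamma+\mathbf{1})(\vec{G}+\vec{k}f)\cdot(\vec{G}+\vec{k}f)^{*}]\geq0$. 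Hence the whole $\tfrac{1}{2}\{\cdots\}$ is non-negative and $\hat{\mathcal{E}}_{g,\vec{p}}(f,\hat{r})\geq\Tr[\gamma|\vec{k}|]+f^{*}|\vec{k}|f$.

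Next I would pass from $\gamma$ back to $\hat{r}$. Since $\hat{r}$ is antilinear, $\hat{r}^{2}$ is linear and non-negative, and expanding the power series gives $\gamma=\tfrac{1}{2}(\cosh(2\hat{r})-\mathbf{1})=\hat{r}^{2}+\tfrac{1}{3}\hat{r}^{4}+\cdots\geq\hat{r}^{2}$, all higher terms being non-negative powers of $\hat{r}^{2}\geq0$. Because $|\vec{k}|$ is non-negative and bounded (indeed $\sigma\leq|\vec{k}|\leq\Lambda$), the operator inequality $\gamma-\hat{r}^{2}\geq0$ yields $\Tr[(\gamma-\hat{r}^{2})|\vec{k}|]=\Tr[(\gamma-\hat{r}^{2})^{1/2}|\vec{k}|(\gamma-\hat{r}^{2})^{1/2}]\geq0$, so $\Tr[\gamma|\vec{k}|]\geq\Tr[\hat{r}^{2}|\vec{k}|]$. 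This gives the first asserted inequality $\hat{\mathcal{E}}_{g,\vec{p}}(f,\hat{r})\geq\Tr[\hat{r}^{2}|\vec{k}|]+f^{*}|\vec{k}|f$. For the second, I would use $|\vec{k}|\geq\sigma\,\mathbf{1}$ on $S_{\sigma,\Lambda}$, so that $f^{*}|\vec{k}|f\geq\sigma\|f\|^{2}$ and $\Tr[\hat{r}^{2}|\vec{k}|]\geq\sigma\Tr[\hat{r}^{2}]$, whence the sum is at least $\sigma(\|f\|^{2}+\Tr[\hat{r}^{2}])=\sigma\|(f,\hat{r})\|_{X}^{2}$.

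Finally, the claim about minimizing sequences follows directly. Evaluating at $(0,0)$ gives $\gamma=\alpha=0$ and $\hat{\mathcal{E}}_{g,\vec{p}}(0,0)=\tfrac{1}{2}|\vec{p}|^{2}+\tfrac{1}{2}\Tr[\vec{G}\cdot\vec{G}^{*}]=\tfrac{1}{2}|\vec{p}|^{2}+\tfrac{1}{2}\|\vec{G}\|^{2}<\sigma R^{2}$ by hypothesis, so along any minimizing sequence $(f_{n},\hat{r}_{n})$ one eventually has $\hat{\mathcal{E}}_{g,\vec{p}}(f_{n},\hat{r}_{n})<\sigma R^{2}$; the coercivity bound $\sigma\|(f_{n},\hat{r}_{n})\|_{X}^{2}\leq\hat{\mathcal{E}}_{g,\vec{p}}(f_{n},\hat{r}_{n})$ then forces $\|(f_{n},\hat{r}_{n})\|_{X}<R$. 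I do not anticipate a genuine obstacle here: the whole content is the bookkeeping that the brace block decomposes exactly into the two stated positivity properties, together with the elementary operator inequality $\gamma\geq\hat{r}^{2}$ and the trace positivity $\Tr[AB]\geq0$ for $A,B\geq0$. It is worth noting that the infrared cutoff is precisely what makes $|\vec{k}|\geq\sigma$, and hence the lower bound genuinely coercive, which is why this section assumes $\sigma>0$.
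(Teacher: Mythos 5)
Your proof is correct and is precisely the argument the paper leaves implicit (Proposition~\ref{pro:coercivity_E(f,r)} is stated without a separate proof): one drops the brace in \eqref{eq:Energy}, which is non-negative because it is exactly the sum of the third and fourth positivity properties established after \eqref{eq:Energy}, then uses $\gamma=\tfrac{1}{2}(\cosh(2\hat{r})-\mathbf{1})\geq\hat{r}^{2}$ together with $\Tr[AB]\geq0$ for $A,B\geq0$, and finally $|\vec{k}|\geq\sigma$ on $S_{\sigma,\Lambda}$ to reach $\sigma\|(f,\hat{r})\|_{X}^{2}$. The only caveat is interpretive: the closing claim about minimizing sequences should be read, as you do, as holding eventually along the sequence, since the first few terms need not lie in $\bar{B}_{X}(0,R)$.
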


\section{\label{sec:Perturbative-approach}Asymptotics for small Coupling
and Momentum}

We use below an identification between self-adjoint $\mathbb{C}$-antilinear
Hilbert-Schmidt operator~$\hat{r}$ and symmetric two vector $r$
given by the relation $\langle\varphi,\hat{r}\psi\rangle_{\mathcal{Z}}=\langle\varphi\otimes\psi,r\rangle_{\mathcal{Z}^{\otimes2}}$.
Note that the self-adjointness condition for~$\hat{r}$ is equivalent
to the symmetry condition~$r\in\mathcal{Z}^{\vee2}$.
\begin{thm}
\label{thm:perturbative}Let~$0<\sigma<\Lambda<\infty$. There exists
$C>0$ such that for $|g|,|\vec{p}|<C$, there exist two functions
$f_{g,\vec{p}}$ and $\hat{r}_{g,\vec{p}}$ which are smooth in $(g,\vec{p})$
such that the minimum of the energy $\hat{\mathcal{E}}_{g,\vec{p}}(f,\hat{r})$ is attained at $(f_{g,\vec{p}},\hat{r}_{g,\vec{p}})$.
These functions satisfy\begin{align*}
f_{g,\vec{p}} & =\frac{\vec{p}.\vec{G}}{\frac{1}{2}|\vec{k}|^{2}+|\vec{k}|}+\mathcal{O}(\|(g,\vec{p})\|^{3})\\
r_{g,\vec{p}} & =-S^{-1}\vec{G}\cdot\vee\vec{G}+\mathcal{O}(\|(g,\vec{p})\|^{3})\,,\end{align*}
\textup{with $S=\vec{k}\cdot\otimes\vec{k}+2(\frac{1}{2}|\vec{k}|^{2}+|\vec{k}|)\vee\mathbf{1}_{\mathcal{Z}}$.
As a consequence}\begin{multline*}
E_{BHF}(g,\vec{p},\sigma,\Lambda)\\
=\hat{\mathcal{E}}_{g,\vec{p}}(0_X)-\vec{p}\cdot\vec{G}^{*}\frac{1}{\frac{1}{2}|\vec{k}|^{2}+|\vec{k}|}\vec{G}\cdot\vec{p}-\frac{1}{2}\vec{G}^{\cdot\vee2*}S^{-1}\vec{G}^{\cdot\vee2}+\mathcal{O}(\|(g,\vec{p})\|^{5})\,.\end{multline*}
\end{thm}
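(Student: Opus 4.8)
The plan is to realize the minimizer as a smooth function of $(g,\vec p)$ through the implicit function theorem, expand it to leading order in $\|(g,\vec p)\|$, and then read off the energy from the value of a decoupled pair of quadratic minimizations. Throughout I work on the real Hilbert space $X=\mathcal Z\times Y$ of Definition in Section~\ref{sec:Existence-and-uniqueness}, using the identification $\langle\varphi,\hat r\psi\rangle=\langle\varphi\otimes\psi,r\rangle$ between $\hat r$ and $r\in\mathcal Z^{\vee2}$.

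First I would apply the implicit function theorem to the gradient map $(g,\vec p,f,\hat r)\mapsto D\hat{\mathcal E}_{g,\vec p}(f,\hat r)\in X^{*}\cong X$. Because $|\vec k|\leq\Lambda<\infty$ and because $\hat r\mapsto\cosh(2\hat r)-\mathbf 1,\ \sinh(2\hat r)$ are entire functions of the Hilbert--Schmidt operator $\hat r$ with values in the trace-class, respectively Hilbert--Schmidt, ideals, this map is smooth. At $(g,\vec p)=(0,0)$ one has $\vec G=0$, and the positivity properties of Section~\ref{sec:Energy-functional} show $\hat{\mathcal E}_{0,0}$ is minimized at the origin, so the gradient vanishes there; its differential in $(f,\hat r)$ is the Hessian $\mathcal H\hat{\mathcal E}_{0,0}(0,0)$, which by the computation of Proposition~\ref{pro:convexity_E(f,r)} (at $g=0$) is bounded below by $\tfrac{\sigma}{4}\mathbf 1_X$ and hence invertible. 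The implicit function theorem then produces a smooth map $(g,\vec p)\mapsto(f_{g,\vec p},\hat r_{g,\vec p})$ for $\|(g,\vec p)\|$ small solving $D\hat{\mathcal E}_{g,\vec p}=0$, which by the convexity and coercivity of Propositions~\ref{pro:convexity_E(f,r)} and~\ref{pro:coercivity_E(f,r)} is the unique global minimizer of Theorem~\ref{pro:existence-uniqueness}; this gives both the smoothness claim and $(f_{g,\vec p},\hat r_{g,\vec p})\to0$.

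Next I would expand $\hat{\mathcal E}_{g,\vec p}(f,\hat r)=E_0(g,\vec p)+\langle L(g,\vec p),(f,\hat r)\rangle_X+\tfrac12\langle(f,\hat r),H(g,\vec p)(f,\hat r)\rangle_X+R$, where $R$ collects the terms of degree $\geq3$ in $(f,\hat r)$. Here $E_0=\tfrac12|\vec p|^2+\tfrac12\|\vec G\|^2=\mathcal{O}(\|(g,\vec p)\|^2)$ by Proposition~\ref{pro:coercivity_E(f,r)}. Reading off \eqref{eq:Energy} with $\gamma=\tfrac12(\cosh 2\hat r-\mathbf 1)$ and $\alpha\leftrightarrow\tfrac12\sinh 2\hat r$, the linear part is $\langle L,(f,\hat r)\rangle=-2\rRe(f^{*}\,\vec p\cdot\vec G)+\rRe\langle r,\vec G\cdot\vee\vec G\rangle=\mathcal{O}(\|(g,\vec p)\|^2)$; the competing candidate $\rRe(\vec G^{*}\cdot\vec k f)$ vanishes because the polarization vectors are transversal, $\vec k\cdot\vec G=0$. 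The leading Hessian $H_0:=H(0,0)$ is block diagonal, since the only $f$--$r$ coupling in $\mathcal H\hat{\mathcal E}$ is the term $2\rRe\langle\hat r\vec k f;\vec G\rangle$, which carries a factor $\vec G$ and hence disappears at $g=0$; its $f$-block is $2(\tfrac12|\vec k|^2+|\vec k|)$ and its $r$-block is $S=\vec k\cdot\otimes\vec k+2(\tfrac12|\vec k|^2+|\vec k|)\vee\mathbf 1_{\mathcal Z}$. The leading minimizer $-H_0^{-1}L$ therefore decouples into
\begin{align*}
f_{g,\vec p}&=\big(\tfrac12|\vec k|^2+|\vec k|\big)^{-1}\vec p\cdot\vec G+\mathcal{O}(\|(g,\vec p)\|^3),\\
r_{g,\vec p}&=-S^{-1}\,\vec G\cdot\vee\vec G+\mathcal{O}(\|(g,\vec p)\|^3),
\end{align*}
the $\mathcal{O}(\|(g,\vec p)\|^3)$ being the first correction $H_0^{-1}(H-H_0)H_0^{-1}L$, since $L=\mathcal{O}(\|(g,\vec p)\|^2)$ and $H-H_0=\mathcal{O}(\|(g,\vec p)\|)$.

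Finally I would extract the energy. Writing $x_*=(f_{g,\vec p},\hat r_{g,\vec p})$ and using stationarity $Hx_*+L+\partial_x R(x_*)=0$, one obtains the exact identity $E_{BHF}=\hat{\mathcal E}_{g,\vec p}(x_*)=E_0+\tfrac12\langle L,x_*\rangle+\big(R(x_*)-\tfrac12\langle x_*,\partial_x R(x_*)\rangle\big)$. Since $x_*=\mathcal{O}(\|(g,\vec p)\|^2)$ and $R$ has degree $\geq3$ in $x_*$, the bracket is $\mathcal{O}(\|(g,\vec p)\|^6)$; and substituting $x_*=-H_0^{-1}L+\mathcal{O}(\|(g,\vec p)\|^3)$ yields $\tfrac12\langle L,x_*\rangle=-\tfrac12\langle L,H_0^{-1}L\rangle+\mathcal{O}(\|(g,\vec p)\|^5)$. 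Using the block structure of $H_0$ and the explicit $L$ above then gives
\[
-\tfrac12\langle L,H_0^{-1}L\rangle=-\vec p\cdot\vec G^{*}\tfrac{1}{\frac12|\vec k|^2+|\vec k|}\vec G\cdot\vec p-\tfrac12\,\vec G^{\cdot\vee2\,*}S^{-1}\vec G^{\cdot\vee2},
\]
which is the asserted expansion with remainder $\mathcal{O}(\|(g,\vec p)\|^5)$. The main obstacle is not a single step but the combination of infinite-dimensional analysis with order bookkeeping: one must justify the smoothness of the $\cosh/\sinh$-nonlinearity between the Hilbert--Schmidt and trace-class ideals to run the implicit function theorem, and then verify through the variational identity above that the leading-order minimizer already fixes the energy through order $\|(g,\vec p)\|^4$, so that only the two decoupled quadratic problems and the transversality cancellation $\vec k\cdot\vec G=0$ enter the final computation.
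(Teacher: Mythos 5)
Your proposal is correct and follows essentially the same route as the paper: both expand the minimizer around $(g,\vec{p})=(0,0)$ via the stationarity equation, exploit the block-diagonal Hessian at the origin with blocks $|\vec{k}|^{2}+2|\vec{k}|$ and $S$, identify the source term $\big(-2\vec{p}\cdot\vec{G},\,\vec{G}\cdot\vee\vec{G}\big)$ as the part of the energy quadratic in $(g,\vec{p})$ and linear in $(f,\hat{r})$, and obtain the energy as $-\tfrac{1}{2}\langle L,H_{0}^{-1}L\rangle$ modulo $\mathcal{O}(\|(g,\vec{p})\|^{5})$. Your presentation is in fact slightly more careful than the paper's on two points it leaves implicit --- the explicit invocation of the implicit function theorem (with smoothness of the $\cosh/\sinh$ nonlinearity between the Hilbert--Schmidt and trace-class ideals) to justify the smooth minimizer branch, and the transversality cancellation $\vec{k}\cdot\vec{G}=0$ needed to kill the term linear in $g$ and linear in $f$ --- while your variational identity $E=E_{0}+\tfrac{1}{2}\langle L,x_{*}\rangle+\mathcal{O}(\|(g,\vec{p})\|^{6})$ is just a cleaner bookkeeping of the paper's direct substitution.
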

\begin{rem}
The energy in $0_X$ is the energy of the vacuum state and
is $\hat{\mathcal{E}}_{g,\vec{p}}(0_X) = \frac{1}{2}\vec{p}^{\cdot2}+\frac{1}{2}\|\vec{G}\|^{2}$.
Further note that\[
(\vec{p}\cdot\vec{G}^{*})\frac{1}{\frac{1}{2}|\vec{k}|^{2}+|\vec{k}|}(\vec{G}\cdot\vec{p})
\ = \ 
g^{2}|\vec{p}|^{2} \bigg( 2\pi^{2} - \frac{8\pi}{3} \bigg) \: \ln\bigg(\frac{\Lambda+2}{\sigma+2}\bigg)
\]
and in particular does not depend on the choice of the polarization vectors~$\vec{\varepsilon}$.

The quantity $\vec{G}^{\cdot\vee2*}S^{-1}\vec{G}^{\cdot\vee2}$ does
not depend on the choice of the vectors $\vec{\varepsilon}$ either
since\[
\vec{G}^{\cdot\vee2*}S^{-1}\vec{G}^{\cdot\vee2}=\sum_{\mu,\nu=\pm}\int\frac{|\vec{\varepsilon}(\vec{k}_{1},\mu)\cdot\vec{\varepsilon}(\vec{k}_{2},\nu)|^{2}}{\sqrt{|\vec{k}_{1}||\vec{k}_{2}|}S(\vec{k}_{1},\vec{k}_{2})}d^{3}k_{1}d^{3}k_{2}\]
and with $P_{\vec{u}}$ is the orthogonal projection on $\vec{u}$
in $\mathbb{R}^{3}$,\begin{align*}
\sum_{\mu,\nu=\pm}|\vec{\varepsilon}(\vec{k}_{1},\mu)\cdot\vec{\varepsilon}(\vec{k}_{2},\nu)|^{2} & =\sum_{\mu,\nu=\pm}\Tr_{\mathbb{R}^{3}}[P_{\vec{\varepsilon}(\vec{k}_{1},\mu)}P_{\vec{\varepsilon}(\vec{k}_{2},\nu)}]\\
 & =\Tr_{\mathbb{R}^{3}}[P_{\vec{k}_{1}}^{\perp}P_{\vec{k}_{2}}^{\perp}]\\
 & =1+\bigg(\frac{\vec{k}_{1}}{|\vec{k}_{1}|}\cdot\frac{\vec{k}_{2}}{|\vec{k}_{2}|}\bigg)^{2}\,.\end{align*}
\end{rem}
\begin{proof}[Proof of Theorem~\ref{thm:perturbative}]
Let\[
F:(g,\vec{p},f,\hat{r})\mapsto\partial_{f,\hat{r}}\hat{\mathcal{E}}_{g,\vec{p}}(f,\hat{r})\]
and $\left(\begin{smallmatrix}f\\
\hat{r}\end{smallmatrix}\right)(g,\vec{p}):=\left(\begin{smallmatrix}f(g,\vec{p})\\
\hat{r}(g,\vec{p})\end{smallmatrix}\right)$ such that \begin{equation}
F(g,\vec{p},\left(\begin{smallmatrix}f\\
\hat{r}\end{smallmatrix}\right)(g,\vec{p}))=0\,,\label{eq:F(gPfr)egal0}\end{equation}
then a derivation of Equation~(\ref{eq:F(gPfr)egal0}) with respect
to $(f,\hat{r})$ brings\[
\partial_{g,\vec{p}}\left(\begin{smallmatrix}f\\
\hat{r}\end{smallmatrix}\right)(0_{g,\vec{p}})=-\big[\partial_{f,\hat{r}}F(0_{g,\vec{p}},0_{f,\hat{r}})\big]^{-1}\partial_{g,\vec{p}}F(0_{g,\vec{p}},0_{f,\hat{r}})\,.\]
The term which is independent of $(g,\vec{p})$ and quadratic in $\left(\begin{smallmatrix}f\\
\hat{r}\end{smallmatrix}\right)$ in the energy is \[
\frac{1}{2}\{\Tr[\hat{r}S\hat{r}]+f^{*}(|\vec{k}|^{2}+2|\vec{k}|)f\}\]
 thus, in $(0_{g,\vec{p}},0_{f,\hat{r}})$, \[
\partial_{f,\hat{r}}F=\left(\begin{array}{cc}
|\vec{k}|^{2}+2|\vec{k}| & 0\\
0 & S\end{array}\right)\,.\]
To compute $\partial_{g,\vec{p}}F$ in $0$, observe that no part in the energy is linear in $(g,\vec{p})$ and linear
in $(f,\hat{r})$. Thus~$\partial_{g,\vec{p}}F(0_{g,\vec{p}},0_{f,\hat{r}})=0$
and we get\[
\partial_{g,\vec{p}}f(0_{g,\vec{p}})=0\,.\]
Differentiating a second time Equation~(\ref{eq:F(gPfr)egal0}) brings\[
0=\partial_{g,\vec{p}}^{2}F+2\partial_{f,\hat{r}}\partial_{g,\vec{p}}F\circ\partial_{g,\vec{p}}\left(\begin{smallmatrix}f\\
\hat{r}\end{smallmatrix}\right)+\partial_{f,\hat{r}}F\circ\partial_{g,\vec{p}}^{2}\left(\begin{smallmatrix}f\\
\hat{r}\end{smallmatrix}\right)+\partial_{f,\hat{r}}^{2}F(\partial_{g,\vec{p}}\left(\begin{smallmatrix}f\\
\hat{r}\end{smallmatrix}\right),\partial_{g,\vec{p}}\left(\begin{smallmatrix}f\\
\hat{r}\end{smallmatrix}\right))\,.\]
Since $\partial_{g,\vec{p}}\left(\begin{smallmatrix}f\\
\hat{r}\end{smallmatrix}\right)(0_{g,\vec{p}})=0$, it follows that\begin{align*}
\partial_{g,\vec{p}}^{2}\left(\begin{smallmatrix}f\\
\hat{r}\end{smallmatrix}\right)(0_{g,\vec{p}}) & =-[\partial_{f,\hat{r}}F(0_{g,\vec{p}},0_{f,\hat{r}})]^{-1}\partial_{g,\vec{p}}^{2}F(0_{g,\vec{p}},0_{f,\hat{r}})\,.\end{align*}
The part of the energy which is quadratic in $(g,\vec{p})$ and linear
in $(f,\hat{r})$ is $-2\rRe(f^{*}\vec{G})\cdot\vec{p}+\rRe\langle\hat{r}\vec{G};\vec{G}\rangle$,
it follows that, in~$(0_{g,\vec{p}},0_{f,\hat{r}})$, \begin{align*}
\partial_{g,\vec{p}}^{2}F & =2\left(\begin{array}{c}
\left(\begin{array}{cc}
1 & 0\end{array}\right)\vee\left(\begin{array}{cc}
0 & -2\partial_{g}\vec{G}\end{array}\right)\\
\left(\begin{array}{cc}
\partial_{g}\vec{G} & 0\end{array}\right).\vee\left(\begin{array}{cc}
\partial_{g}\vec{G} & 0\end{array}\right)\end{array}\right)\,,\end{align*}
which gives in $0_{g,\vec{p}}$\[
\partial_{g,\vec{p}}^{2}\left(\begin{smallmatrix}f\\
\hat{r}\end{smallmatrix}\right)=2\left(\begin{array}{c}
\left(\begin{array}{cc}
1 & 0\end{array}\right)\vee\left(\begin{array}{cc}
0 & \frac{\partial_{g}\vec{G}}{\frac{1}{2}|\vec{k}|^{2}+|\vec{k}|}\end{array}\right)\\
-S^{-1}\left(\begin{array}{cc}
\partial_{g}\vec{G} & 0\end{array}\right)\cdot\vee\left(\begin{array}{cc}
\partial_{g}\vec{G} & 0\end{array}\right)\end{array}\right)\,.\]
Hence the expansion of~$\left(\begin{smallmatrix}f\\
\hat{r}\end{smallmatrix}\right)$ up to order~$2$.

We can thus express the energy around $0_{g,\vec{p}}$ modulo error
terms in~$\mathcal{O}(\|(g,\vec{p})\|^{5})$\begin{align*}
\min_{f,\hat{r}} & \,\hat{\mathcal{E}}_{g,\vec{p}}(f,\hat{r})-\hat{\mathcal{E}}_{g,\vec{p}}(0,0)\\
 & \equiv\tfrac{1}{2}\big\{(\Tr[\hat{r}^{2}\vec{k}]+f^{*}\vec{k}f+2\rRe(f^{*}\vec{G})-\vec{p})^{\cdot2}+\Tr[\hat{r}\vec{k}\cdot\hat{r}\vec{k}]+\Tr[|\vec{k}|^{2}\hat{r}^{2}]\\
 & \phantom{=}+2\rRe\langle\hat{r}(\vec{G}+\vec{k}f);(\vec{G}+\vec{k}f)\rangle+\|\vec{G}\|^{2}+f^{*}|\vec{k}|^{2}f\big\}\\
 & \phantom{=}+\Tr[\hat{r}^{2}|\vec{k}|]+f^{*}|\vec{k}|f-\hat{\mathcal{E}}_{g,\vec{p}}(0,0)\allowdisplaybreaks\\
 & \equiv-2\rRe(f^{*}\vec{G})\cdot\vec{p}+\tfrac{1}{2}\Tr[\hat{r}S\hat{r}]+\rRe\langle\hat{r}\vec{G};\vec{G}\rangle+f^{*}(\frac{1}{2}|\vec{k}|^{2}+|\vec{k}|)f\\
 & \equiv-2\rRe(f^{*}\vec{p}\cdot\vec{G})+\tfrac{1}{2}\Tr[\hat{r}S\hat{r}]+\rRe\langle\hat{r}\vec{G};\vec{G}\rangle+f^{*}(\frac{1}{2}|\vec{k}|^{2}+|\vec{k}|)f\\
 & \equiv-2\frac{(\vec{p}\cdot\vec{G})^{*}(\vec{p}\cdot\vec{G})}{\frac{1}{2}|\vec{k}|^{2}+|\vec{k}|}+\frac{1}{2}\vec{G}^{\cdot\vee2*}S^{-1}\vec{G}^{\cdot\vee2}-\vec{G}^{\cdot\vee2*}S^{-1}\vec{G}^{\cdot\vee2}+\frac{(\vec{p}\cdot\vec{G})^{*}(\vec{p}\cdot\vec{G})}{\frac{1}{2}|\vec{k}|^{2}+|\vec{k}|}\end{align*}
which completes the proof. 
\end{proof}

\section{\label{sec:Lagrange-equations}Lagrange Equations}

This section formulates the results of Section~\ref{sec:Existence-and-uniqueness}
in terms of $\gamma$ and $\alpha$ subject to the constraints $\gamma+\gamma^{2}=(\alpha^{*}\otimes\mathbf{1}_{\mathcal{Z}})(\mathbf{1}_{\mathcal{Z}}\otimes\alpha)$,
without reference to the parametrization of $\gamma$ and $\alpha$
in terms of $\hat{r}$.

Suppose $f\in\mathcal{Z}$, $\alpha\in\mathcal{Z}^{\vee2}$, $\gamma\in\mathcal{L}^{1}(\mathcal{Z})$,
$\lambda\in\mathcal{B}(\mathcal{Z})=\mathcal{B}$ and $\vec{u}\in\mathbb{R}^{3}$.
Let $\mathcal{A}(\lambda)=\frac{1}{2}\vec{k}\cdot\!\vee\vec{k}+\lambda\vee\mathbf{1}$
and $\mathcal{G}(\gamma)=\gamma+\gamma^{2}$.
\begin{thm}
\label{pro:Lagrange_Equations}Suppose $(f,\gamma,\alpha)$ is a
minimum of the energy functional $\mathcal{E}$ such that $\|\gamma\|_{\mathcal{B}(\mathcal{Z})}<\frac{1}{2}$.
Then there is a unique $(\lambda,\vec{u})$ such that \textup{$(f,\gamma,\alpha,\lambda,\vec{u})$}
satisfies the following equations, equivalent to Lagrange equations
\begin{align}
M(\gamma,\vec{u})f & =-(\vec{k}(\gamma+\frac{1}{2}\mathbf{1})-\vec{u})\cdot\vec{G}-\vec{k}\cdot\!\vee(\vec{G}+\vec{k}f)^{*}\alpha\label{eq:f_implicit}\\
\mathcal{A}(\lambda)\alpha & =-\frac{1}{2}(\vec{G}+\vec{k}f)^{\cdot\vee2}\label{eq:alpha_implicit}\\
\gamma & =\mathcal{G}^{-1}((\alpha^{*}\otimes\mathbf{1}_{\mathcal{Z}})(\mathbf{1}_{\mathcal{Z}}\otimes\alpha))\label{eq:gamma_explicit}\\
\lambda & =\int_{0}^{\infty}e^{-t(\frac{1}{2}+\gamma)}(M(\gamma,\vec{u})+(\vec{G}+\vec{k}f)\cdot(\vec{G}+\vec{k}f)^{*})e^{-t(\frac{1}{2}+\gamma)}dt\label{eq:lambda_explicit}\\
\vec{u} & =\vec{p}-\Tr[\gamma\vec{k}]-f^{*}\vec{k}f-2\rRe(f^{*}\vec{G})\label{eq:u}\end{align}
with $M(\gamma,\vec{u})=\frac{1}{2}|\vec{k}|^{2}+|\vec{k}|-\vec{k}\cdot\vec{u}+\vec{k}\cdot\gamma\vec{k}$.

Assuming $|\vec{p}|<\frac{1}{2}$, sufficient conditions such that
$M(\gamma,\vec{u})$ and $\mathcal{A}(\lambda)$ are invertible operators
are $|\vec{u}|<1/2$, $\gamma\geq0$ and $\|\lambda-(|\vec{k}|^{2}/2+|\vec{k}|-\vec{p}\cdot\vec{k})\|_{\mathcal{B}}<\sigma/2$.
Equations~(\ref{eq:f_implicit}) to (\ref{eq:u}) then form a system
of coupled explicit equations.\end{thm}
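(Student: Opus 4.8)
The plan is to treat the minimization as a constrained problem over $(f,\gamma,\alpha)$ subject to the purity constraint \eqref{eq:constraint}, $\mathcal{G}(\gamma)=(\alpha^{*}\otimes\mathbf{1}_{\mathcal{Z}})(\mathbf{1}_{\mathcal{Z}}\otimes\alpha)$ with $\mathcal{G}(\gamma)=\gamma+\gamma^{2}$, and to extract the stationarity conditions by the method of Lagrange multipliers. The key structural point is that this constraint is a \emph{submersion} at the minimizer: its derivative in $\gamma$, the Lyapunov operator $\delta\gamma\mapsto\delta\gamma+\gamma\,\delta\gamma+\delta\gamma\,\gamma=(\tfrac12\mathbf{1}+\gamma)\,\delta\gamma+\delta\gamma\,(\tfrac12\mathbf{1}+\gamma)$, is boundedly invertible since $\tfrac12\mathbf{1}+\gamma\geq\tfrac12\mathbf{1}>0$ (here $\gamma\geq0$, and the hypothesis $\|\gamma\|_{\mathcal{B}(\mathcal{Z})}<\tfrac12$ keeps us in a regime where $\mathcal{G}^{-1}$ is smooth). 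Hence a bounded self-adjoint multiplier $\lambda\in\mathcal{B}(\mathcal{Z})$ exists, and I would work with the Lagrangian
\[
\mathcal{L}(f,\gamma,\alpha,\lambda)=\mathcal{E}_{g,\vec{p}}(f,\gamma,\alpha)-\Tr\!\big[\lambda\big(\mathcal{G}(\gamma)-(\alpha^{*}\otimes\mathbf{1}_{\mathcal{Z}})(\mathbf{1}_{\mathcal{Z}}\otimes\alpha)\big)\big],
\]
the operator-valued constraint being paired against $\lambda$ through the trace.

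The next step is to compute the three Fr\'echet derivatives of $\mathcal{L}$ from the explicit form \eqref{eq:Energy}. Writing $\vec{u}:=\vec{p}-\Tr[\gamma\vec{k}]-f^{*}\vec{k}f-2\rRe(f^{*}\vec{G})$—which is \eqref{eq:u} and makes the first term of \eqref{eq:Energy} equal to $\tfrac12\vec{u}^{\,\cdot2}$—the condition $\partial_{f^{*}}\mathcal{L}=0$ collects the $f$-dependent terms and rearranges into \eqref{eq:f_implicit}, with $M(\gamma,\vec{u})=\tfrac12|\vec{k}|^{2}+|\vec{k}|-\vec{k}\cdot\vec{u}+\vec{k}\cdot\gamma\vec{k}$ appearing as the coefficient of $f$. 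Differentiating in $\alpha^{*}$ produces $\tfrac12\vec{k}\cdot\!\vee\vec{k}\,\alpha+\tfrac12(\vec{G}+\vec{k}f)^{\cdot\vee2}$ from the energy and $(\lambda\vee\mathbf{1})\alpha$ from the constraint term, which together give \eqref{eq:alpha_implicit}, $\mathcal{A}(\lambda)\alpha=-\tfrac12(\vec{G}+\vec{k}f)^{\cdot\vee2}$. A term-by-term computation of the $\gamma$-derivative yields $\partial_{\gamma}\mathcal{E}_{g,\vec{p}}=M(\gamma,\vec{u})+(\vec{G}+\vec{k}f)\cdot(\vec{G}+\vec{k}f)^{*}$, while $\partial_{\gamma}\Tr[\lambda\,\mathcal{G}(\gamma)]=\lambda+\gamma\lambda+\lambda\gamma=(\tfrac12\mathbf{1}+\gamma)\lambda+\lambda(\tfrac12\mathbf{1}+\gamma)$, so stationarity in $\gamma$ is the Lyapunov equation $(\tfrac12\mathbf{1}+\gamma)\lambda+\lambda(\tfrac12\mathbf{1}+\gamma)=M(\gamma,\vec{u})+(\vec{G}+\vec{k}f)\cdot(\vec{G}+\vec{k}f)^{*}$. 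Because $\tfrac12\mathbf{1}+\gamma\geq\tfrac12\mathbf{1}>0$, it has the unique solution
\[
\lambda=\int_{0}^{\infty}e^{-t(\frac12+\gamma)}\big(M(\gamma,\vec{u})+(\vec{G}+\vec{k}f)\cdot(\vec{G}+\vec{k}f)^{*}\big)e^{-t(\frac12+\gamma)}\,dt,
\]
which is \eqref{eq:lambda_explicit}. Finally \eqref{eq:gamma_explicit} is the constraint solved for $\gamma=\mathcal{G}^{-1}((\alpha^{*}\otimes\mathbf{1}_{\mathcal{Z}})(\mathbf{1}_{\mathcal{Z}}\otimes\alpha))$, legitimate since $s\mapsto s+s^{2}$ is invertible on $[0,\infty)$; uniqueness of the Lyapunov solution together with the explicit \eqref{eq:u} gives the uniqueness of the pair $(\lambda,\vec{u})$.

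For the last assertion I would verify strict positivity, hence invertibility, of $M(\gamma,\vec{u})$ and $\mathcal{A}(\lambda)$. For $M$, on the shell $\sigma\leq|\vec{k}|\leq\Lambda$ one has $\tfrac12|\vec{k}|^{2}+|\vec{k}|-\vec{k}\cdot\vec{u}\geq\tfrac12|\vec{k}|^{2}+(1-|\vec{u}|)|\vec{k}|>0$ once $|\vec{u}|<1$, while $\vec{k}\cdot\gamma\vec{k}=\sum_{j}k_{j}\gamma k_{j}\geq0$ for $\gamma\geq0$; hence $M(\gamma,\vec{u})\geq(1-|\vec{u}|)\sigma>0$. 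For $\mathcal{A}$, I would compare with $\mathcal{A}(\lambda_{0})$, $\lambda_{0}=\tfrac12|\vec{k}|^{2}+|\vec{k}|-\vec{p}\cdot\vec{k}$: acting on a symmetric kernel $r(\vec{k}_{1},\vec{k}_{2})$ it is multiplication by $\tfrac14|\vec{k}_{1}+\vec{k}_{2}|^{2}+\tfrac12\big(|\vec{k}_{1}|+|\vec{k}_{2}|\big)-\tfrac12\vec{p}\cdot(\vec{k}_{1}+\vec{k}_{2})\geq(1-|\vec{p}|)\sigma>\tfrac{\sigma}{2}$ for $|\vec{p}|<\tfrac12$; since $\mathcal{A}(\lambda)-\mathcal{A}(\lambda_{0})=(\lambda-\lambda_{0})\vee\mathbf{1}$ has norm at most $\|\lambda-\lambda_{0}\|_{\mathcal{B}}<\tfrac{\sigma}{2}$, the operator $\mathcal{A}(\lambda)$ remains strictly positive. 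With $M$ and $\mathcal{A}$ invertible, \eqref{eq:f_implicit} and \eqref{eq:alpha_implicit} can be solved for $f$ and $\alpha$, so the five relations \eqref{eq:f_implicit}--\eqref{eq:u} form a closed, explicit system.

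The step I expect to be the main obstacle is the careful differentiation through the symmetric-tensor structures—in particular computing $\partial_{\alpha^{*}}\Tr[\lambda\,(\alpha^{*}\otimes\mathbf{1}_{\mathcal{Z}})(\mathbf{1}_{\mathcal{Z}}\otimes\alpha)]$ and matching it to $(\lambda\vee\mathbf{1})\alpha$ with the correct constant, together with the bookkeeping of the $\rRe$ and conjugation conventions in $\partial_{f^{*}}\mathcal{E}_{g,\vec{p}}$—and the justification that the multiplier exists as a \emph{bounded} self-adjoint operator via the submersion argument. Once the three derivatives are in hand, recognizing the $\gamma$-equation as a Lyapunov equation and inverting it by the integral representation is routine.
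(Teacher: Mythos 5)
Your proposal is correct and follows essentially the same route as the paper's proof: you invoke the Lagrange multiplier rule after verifying that the derivative of the purity constraint is surjective (via solvability of the Sylvester/Lyapunov equation with $\tfrac12\mathbf{1}+\gamma\geq\tfrac12\mathbf{1}$, which is exactly the paper's Proposition on the Sylvester equation), compute the same three partial derivatives $\partial_{f^{*}}\mathcal{E}$, $\partial_{\alpha^{*}}\mathcal{E}$, $\partial_{\gamma}\mathcal{E}$ to obtain \eqref{eq:f_implicit}, \eqref{eq:alpha_implicit} and the Lyapunov equation for $\lambda$, invert the latter by the same integral representation, and establish invertibility of $M(\gamma,\vec{u})$ and $\mathcal{A}(\lambda)$ by the same positivity/perturbation estimates around $\lambda_{0}=\tfrac12|\vec{k}|^{2}+|\vec{k}|-\vec{p}\cdot\vec{k}$. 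The only cosmetic difference is that you package the multiplier into a Lagrangian $\mathcal{E}-\Tr[\lambda\,\mathcal{C}]$ rather than writing the stationarity identity $D\mathcal{E}+\Tr[D\mathcal{C}\,\lambda]=0$ directly, which is immaterial (and in fact your sign convention reproduces the stated equations cleanly).
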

\begin{rem}
To prove that Equations~(\ref{eq:f_implicit}) to~(\ref{eq:u})
admit a solution we use here the result of existence of a minimizer
proved in Section~\ref{sec:Existence-and-uniqueness}. It can also
be proved directly by a fixed point argument by defining the applications\begin{align*}
\Psi_{f}(f,\alpha,\gamma,\vec{u}) & =-M(\gamma,\vec{u})^{-1}(\vec{k}(\gamma+\frac{1}{2}\mathbf{1})-\vec{u})\cdot\vec{G}-\vec{k}\cdot\!\vee(\vec{G}+\vec{k}f)^{*}\alpha\\
\Psi_{\alpha}(f,\lambda) & =-\mathcal{A}(\lambda)^{-1}\frac{1}{2}(\vec{G}+\vec{k}f)^{\cdot\vee2}\\
\Psi_{\gamma}(\alpha) & =\mathcal{G}^{-1}((\alpha^{*}\otimes\mathbf{1}_{\mathcal{Z}})(\mathbf{1}_{\mathcal{Z}}\otimes\alpha))\\
\Psi_{\lambda}(f,\gamma,\vec{u}) & =\int_{0}^{\infty}e^{-t(\frac{1}{2}+\gamma)}(M(\gamma,\vec{u})+(\vec{G}+\vec{k}f)\cdot(\vec{G}+\vec{k}f)^{*})e^{-t(\frac{1}{2}+\gamma)}dt\\
\Psi_{\vec{u}}(f,\gamma) & =\vec{p}-\Tr[\gamma\vec{k}]-f^{*}\vec{k}f-2\rRe(f^{*}\vec{G})\end{align*}
defined on balls of centers centrers $0$, $0$, $0$, $\frac{1}{2}|\vec{k}|^{2}+|\vec{k}|-\vec{k}.\vec{p}$
and $\vec{p}$ and proving that the application \begin{multline*}
\Psi_{(f,\lambda)}(f,\lambda)=\left(\Psi_{f}\left[f,\Psi_{\alpha}\left\{ f,\lambda\right\} ,\Psi_{\gamma}\left\{ \Psi_{\alpha}(f,\lambda)\right\} ,\Psi_{\vec{u}}\left\{ f,\Psi_{\gamma}(\Psi_{\alpha}\left[f,\lambda\right])\right\} \right],\right.\\
\left.\Psi_{\lambda}\left[f,\Psi_{\gamma}\left\{ \Psi_{\alpha}(f,\lambda)\right\} ,\Psi_{\vec{u}}\left\{ f,\Psi_{\gamma}(\Psi_{\alpha}\left[f,\lambda\right])\right\} \right]\right)\end{multline*}
is a contraction for a convenient choice of the radiuses and a sufficiently
small coupling constant~$g$. Note that it is then convenient to
consider the norm of $L^{2}(S_{\sigma,\Lambda}\times\mathbb{Z}_{2},|\vec{k}|^{2})$
for $f$.\end{rem}
\begin{proof}[Proof of~Theorem~\ref{pro:Lagrange_Equations}]
Indeed, set $\vec{u}=\vec{p}-\Tr[\gamma\vec{k}]-f^{*}\vec{k}f-2\rRe(f^{*}\vec{G})$
and define the partial derivatives as $\partial_{f^{*}}\mathcal{E}(f,\gamma,\alpha)\in\mathcal{Z}$,
$\partial_{\alpha^{*}}\mathcal{E}(f,\gamma,\alpha)\in\mathcal{Z}^{\vee2}$
and $\partial_{\gamma}\mathcal{E}(f,\gamma,\alpha)\in\mathcal{B}(\mathcal{Z})\cong\mathcal{L}^{1}(\mathcal{Z})^{\prime}$
such that\begin{align*}
\mathcal{E} & (f+\delta f,\gamma+\delta\gamma,\alpha+\delta\alpha)-\mathcal{E}(f,\gamma,\alpha)\\
 & =2\rRe(\delta\! f^{*}\,\partial_{f^{*}}\mathcal{E}(f,\gamma,\alpha))+2\rRe(\delta\alpha^{*}\,\partial_{\alpha^{*}}\mathcal{E}(f,\gamma,\alpha))\\
 & \quad+\Tr[\delta\gamma\,\partial_{\gamma}\mathcal{E}(f,\gamma,\alpha)]+o(\|(\delta f,\delta\gamma,\delta\alpha)\|_{\mathcal{Z}\times\mathcal{L}^{1}(\mathcal{Z})\times\mathcal{Z}^{\vee2}})\,.\end{align*}
Recall the energy functional is given by Equation~(\ref{eq:Energy})
and this yields\begin{align*}
\partial_{f^{*}}\mathcal{E}(f,\gamma,\alpha) & =\frac{1}{2}\big\{2(\vec{k}f+\vec{G})\cdot(\Tr[\gamma\vec{k}]+f^{*}\vec{k}f+2\rRe(f^{*}\vec{G})-\vec{p})\\
 & \phantom{=}+2\vec{k}\cdot\!\vee(\vec{G}+\vec{k}f)^{*}\alpha+\vec{k}\cdot(2\gamma+\mathbf{1})(\vec{G}+\vec{k}f)\big\}+|\vec{k}|f\\
 & =-(\vec{k}f+\vec{G})\cdot\vec{u}+\vec{k}\cdot\vee(\vec{G}+\vec{k}f)^{*}\alpha+\vec{k}\cdot(\gamma+\frac{1}{2}\mathbf{1})(\vec{G}+\vec{k}f)+|\vec{k}|f\\
 & =M(\gamma,\vec{u})f+(\vec{k}(\gamma+\frac{1}{2}\mathbf{1})-\vec{u})\cdot\vec{G}+\vec{k}\cdot\!\vee(\vec{G}+\vec{k}f)^{*}\alpha\,,\allowdisplaybreaks\\
\partial_{\alpha^{*}}\mathcal{E}(f,\gamma,\alpha) & =\frac{1}{2}(\vec{k}\cdot\!\otimes\vec{k})\alpha+\frac{1}{2}(\vec{G}+\vec{k}f)^{\cdot\vee2}\,,\allowdisplaybreaks\\
\partial_{\gamma}\mathcal{E}(f,\gamma,\alpha) & =\frac{1}{2}\big\{2\vec{k}\cdot(\Tr[\gamma\vec{k}]+f^{*}\vec{k}f+2\rRe(f^{*}\vec{G})-\vec{p})\\
 & \phantom{=}+2\vec{k}\cdot\gamma\vec{k}+|\vec{k}|^{2}+2(\vec{G}+\vec{k}f)\cdot(\vec{G}+\vec{k}f)^{*}\big\}+|\vec{k}|\\
 & =M(\gamma,\vec{u})+(\vec{G}+\vec{k}f)\cdot(\vec{G}+\vec{k}f)^{*}\,.\end{align*}
The constraint given by Equation~(\ref{eq:constraint}) can be
expressed as \begin{equation}
\mathcal{C}(f,\gamma,\alpha)=0\label{eq:constraint-C}\end{equation}
 with \begin{align*}
\mathcal{C}:\mathcal{Z}\times\mathcal{L}^{1}(\mathcal{Z})\times\mathcal{Z}^{\vee2} & \to\mathcal{L}^{1}(\mathcal{Z})\\
(f,\gamma,\alpha) & \mapsto\gamma+\gamma^{2}-(\alpha^{*}\otimes\mathbf{1}_{\mathcal{Z}})(\mathbf{1}_{\mathcal{Z}}\otimes\alpha)\,.\end{align*}
Equation~(\ref{eq:constraint-C}) is equivalent to Equation~(\ref{eq:gamma_explicit}).
The application~$\mathcal{C}$ has a differential $D\mathcal{C}(f,\gamma,\alpha):\mathcal{Z}\times\mathcal{L}^{1}(\mathcal{Z})\times\mathcal{Z}^{\vee2}\to\mathcal{L}^{1}(\mathcal{Z})$
such that\begin{multline*}
D\mathcal{C}(f,\gamma,\alpha)(\delta\! f,\delta\gamma,\delta\alpha)\\
=\delta\gamma+\delta\gamma\:\gamma+\gamma\,\delta\gamma-(\delta\alpha^{*}\otimes\mathbf{1}_{\mathcal{Z}})(\mathbf{1}_{\mathcal{Z}}\otimes\alpha)-(\alpha^{*}\otimes\mathbf{1}_{\mathcal{Z}})(\mathbf{1}_{\mathcal{Z}}\otimes\delta\alpha)\,.\end{multline*}
For $\|\gamma\|_{\mathcal{B}(\mathcal{Z})}<\frac{1}{2}$ the application
$D\mathcal{C}(f,\gamma,\alpha)$ is surjective. Indeed it is already
surjective on $\{0\}\times\mathcal{L}^{1}(\mathcal{Z})\times\{0\}$,
since, for every $\gamma^{\prime}\in\mathcal{L}^{1}(\mathcal{Z})$
the equation $\delta\gamma+\delta\gamma\,\gamma+\gamma\,\delta\gamma=\gamma^{\prime}$
with unknown~$\delta\gamma$ has at least one solution, see Proposition~\ref{pro:Sylvester_Equation}.
We can then apply the Lagrange multiplier rule (see for example the
book of Zeidler~\cite{MR1347692})  which tells us that there exists
a $\lambda\in\mathcal{B}(\mathcal{Z})$ such that \[
\forall(\delta\! f,\delta\alpha,\delta\gamma)\,,\quad D\mathcal{E}(f,\alpha,\gamma)(\delta\! f,\delta\alpha,\delta\gamma)+\Tr[D\mathcal{C}(f,\alpha,\gamma)(\delta\! f,\delta\alpha,\delta\gamma)\,\lambda]=0\,,\]
that is to say\begin{multline*}
2\rRe(\delta\! f^{*}\partial_{f^{*}}\mathcal{E}(f,\gamma,\alpha)+\delta\alpha^{*}\partial_{\alpha^{*}}\mathcal{E}(f,\gamma,\alpha))+\Tr[\partial_{\gamma}\mathcal{E}(f,\gamma,\alpha)\delta\gamma]\\
+\Tr[(\delta\gamma+\delta\gamma\,\gamma+\gamma\,\delta\gamma-(\delta\alpha^{*}\otimes\mathbf{1}_{\mathcal{Z}})(\mathbf{1}_{\mathcal{Z}}\otimes\alpha)-(\alpha^{*}\otimes\mathbf{1}_{\mathcal{Z}})(\mathbf{1}_{\mathcal{Z}}\otimes\delta\alpha))\lambda]=0\,.\end{multline*}
This is equivalent to Equations~(\ref{eq:f_implicit}), (\ref{eq:alpha_implicit})
and\begin{align}
\lambda(\frac{1}{2}+\gamma)+(\frac{1}{2}+\gamma)\lambda & =M(\gamma,\vec{u})+(\vec{G}+\vec{k}f)\cdot(\vec{G}+\vec{k}f)^{*}\label{eq:lambda_implicit}\end{align}
Using again Proposition~\ref{pro:Sylvester_Equation} we get that
Equation~(\ref{eq:lambda_implicit}) is equivalent to Equation~(\ref{eq:lambda_explicit}). 

For the invertibility of $\mathcal{A}(\lambda)$ note that\begin{align*}
\mathcal{A}(\lambda) & =\frac{1}{4}(\vec{k}\otimes\mathbf{1}+\mathbf{1}\otimes\vec{k})^{\cdot2}+(|\vec{k}|-\vec{k}.\vec{p}+\lambda-\frac{1}{2}|\vec{k}|^{2}-|\vec{k}|+\vec{k}\cdot\vec{p})\vee\mathbf{1}\\
 & \geq(\frac{\sigma}{2}-\lambda-(|\vec{k}|^{2}/2+|\vec{k}|-\vec{p}\cdot\vec{k})\|_{\mathcal{B}})\,\mathbf{1}\vee\mathbf{1}\,.\end{align*}
For $M(\gamma,\vec{u})$, $M(\gamma,\vec{u})=\frac{1}{2}|\vec{k}|^{2}+|\vec{k}|-\vec{k}\cdot\vec{u}+\vec{k}\cdot\gamma\vec{k}\geq\sigma/2$
if $\gamma\geq0$ and $|\vec{u}|<1/2$.
\end{proof}

Let us recall a well known expression for the solution of the Sylvester
or Lyapunov equation.
\begin{prop}
\label{pro:Sylvester_Equation}Let $A$ and $B$ be bounded self-adjoint
operators on a Hilbert space. Suppose $A\geq a\,\mathbf{1}$ with
$a>0$. Then the equation\[
AX+XA=B\]
for $X$ a bounded operator has a unique solution $\chi_{A}(B)=\int_{0}^{\infty}e^{-tA}Be^{-tA}dt$.

If $B$ a trace class operator then the solution $X$ is also trace
class.\end{prop}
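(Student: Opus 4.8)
The plan is to verify directly that the stated integral converges in operator norm, that it solves the Sylvester equation, and that it is the unique bounded solution, and then to upgrade the convergence argument to the trace norm. First I would use the hypothesis $A \geq a\,\mathbf{1}$ with $a>0$, which by the spectral theorem yields $\|e^{-tA}\|_{\mathcal{B}} \leq e^{-at}$ for all $t \geq 0$. Consequently the integrand obeys $\|e^{-tA} B e^{-tA}\|_{\mathcal{B}} \leq e^{-2at}\|B\|_{\mathcal{B}}$, which is integrable on $[0,\infty)$; since the bounded operators form a Banach space, the integral $\chi_{A}(B)=\int_{0}^{\infty}e^{-tA}Be^{-tA}\,dt$ converges (as a Bochner integral) in operator norm and defines a bounded operator.

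To check that $X:=\chi_{A}(B)$ solves $AX+XA=B$, I would differentiate the integrand. As $A$ is bounded and commutes with $e^{-tA}$, the map $t\mapsto e^{-tA}Be^{-tA}$ is norm-differentiable with $\frac{d}{dt}\big(e^{-tA}Be^{-tA}\big)=-\big(A\,e^{-tA}Be^{-tA}+e^{-tA}Be^{-tA}\,A\big)$. Integrating over $[0,\infty)$ and using the fundamental theorem of calculus together with $e^{-tA}Be^{-tA}\to 0$, the left-hand side equals $-B$, while on the right-hand side the bounded operator $A$ may be pulled out of the norm-convergent integral, giving $-(AX+XA)$; hence $AX+XA=B$. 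For uniqueness it suffices to treat the homogeneous equation $AX+XA=0$: then $g(t):=e^{-tA}Xe^{-tA}$ satisfies $g'(t)=-e^{-tA}(AX+XA)e^{-tA}=0$, so $g$ is constant equal to $g(0)=X$; but $\|g(t)\|_{\mathcal{B}}\leq e^{-2at}\|X\|_{\mathcal{B}}\to 0$, forcing $X=0$. (Equivalently, the spectrum of $A$ lies in $[a,\infty)$ and that of $-A$ in $(-\infty,-a]$, so they are disjoint and the map $X\mapsto AX+XA$ is injective.)

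For the trace-class claim I would simply rerun the convergence estimate in the trace norm. Using the ideal property $\|UCV\|_{1}\leq \|U\|_{\mathcal{B}}\,\|C\|_{1}\,\|V\|_{\mathcal{B}}$ for $C\in\mathcal{L}^{1}$, the integrand satisfies $\|e^{-tA}Be^{-tA}\|_{1}\leq e^{-2at}\|B\|_{1}$, again integrable; since $(\mathcal{L}^{1},\|\cdot\|_{1})$ is a Banach space, the integral converges in trace norm, whence $X\in\mathcal{L}^{1}$ with $\|X\|_{1}\leq \|B\|_{1}/(2a)$. I do not expect any genuine obstacle, the result being classical; the only points requiring a little care are the justification of differentiating under, and pulling the bounded operator $A$ out of, the integral, both of which are legitimate precisely because the integrals converge absolutely in the relevant (operator, respectively trace) norm.
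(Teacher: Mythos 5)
Your proof is correct and takes essentially the same route as the paper: both verify the formula by recognizing $e^{-tA}(AC+CA)e^{-tA} = -\frac{d}{dt}\big(e^{-tA}Ce^{-tA}\big)$ and integrating (with $C=B$ for existence, and with $C=X$ — in your case via the homogeneous equation — for uniqueness). You additionally spell out the Bochner-integral convergence estimates and the trace-norm bound $\|X\|_{1}\leq\|B\|_{1}/(2a)$, which the paper states but leaves implicit in its proof.
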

\begin{proof}
Indeed, $\chi_{A}(B)$ is a solution because\begin{align*}
A\chi_{A}(B)+\chi_{A}(B)A & =\int_{0}^{\infty}e^{-tA}(AB+BA)e^{-tA}dt\\
 & =-\int_{0}^{\infty}\frac{d}{dt}(e^{-tA}Be^{-tA})dt=B\,.\end{align*}
Conversely, suppose that $AX+XA=B$, then \begin{align*}
\chi_{A}(B) & =\int_{0}^{\infty}e^{-tA}(AX+XA)e^{-tA}dt\\
 & =-\int_{0}^{\infty}\frac{d}{dt}(e^{-tA}Xe^{-tA})dt=X\,,\end{align*}
and thus any solution $X$ is equal to $\chi_{A}(B)$. Hence the solution is unique.
\end{proof}

\bibliographystyle{plain}
\bibliography{biblio}

\end{document}